\documentclass{article}
\usepackage{microtype}
\usepackage[table,svgnames,dvipsnames]{xcolor}
\usepackage{soul}
\usepackage{url}
\usepackage[hidelinks]{hyperref} 
\usepackage[small]{caption}
\usepackage{graphicx}
\usepackage{booktabs}
\usepackage{amsmath,multirow}
\usepackage{amsthm,thm-restate,thmtools}
\usepackage{amssymb}
\usepackage{enumerate}
\usepackage{subcaption}
\usepackage{xcolor,xfrac}
\usepackage[most]{tcolorbox}
\usepackage[sort,numbers]{natbib} 
\usepackage[english]{babel}
\usepackage{charter}
\usepackage{helvet}
\usepackage{framed}
\usepackage[normalem]{ulem}  
\usepackage{tikz}
\usetikzlibrary{shapes.arrows}
\usetikzlibrary{shapes.geometric} 
\usepackage{amsfonts} 
\usepackage[T1]{fontenc}
\usepackage[utf8]{inputenc}
\usepackage[top=1 in,bottom=1in, left=1 in, right=1 in]{geometry}  
\usepackage{enumitem}
\usepackage{tabularx}
\usepackage{authblk}
\usepackage{algorithm}
\hypersetup{
     linkcolor    = red, 
     urlcolor     = teal, 
	 citecolor    = blue 
}
\usepackage{mathtools}
\usepackage{nicefrac} 
\usepackage{pifont}

\usepackage{algorithm}
\usepackage{graphicx}
\usepackage[noend]{algpseudocode}
\usepackage{subcaption}

\usepackage{bbm}
\usepackage{comment}
\usepackage{xr}

\usepackage{tkz-graph}

\newcommand{\shortcite}[1]{\citep{#1}}

\usepackage{xspace}

\newtheorem{theorem}{Theorem}
\newtheorem{lemma}{Lemma}
\newtheorem{definition}{Definition}
\newtheorem{corollary}{Corollary}
\newtheorem*{proofsketch}{Proof Sketch}
\renewenvironment{abstract}
{\par\noindent\begin{center}\bfseries\abstractname\end{center}\normalsize\noindent}
{\par}
\newcommand{\ictpp}{\textsf{IC2PP}\xspace}

\begin{document}
\title{Intermittent Strategic Cooperation of Two Selfish Agents on Graphs}

\author{
\begin{minipage}[t]{0.45\textwidth}
\centering
Itay Shedlezki\\
Bar-Ilan University\\
Ramat Gan, Israel\\
\texttt{itay.shedlezki@biu.ac.il}
\end{minipage}
\hfill
\begin{minipage}[t]{0.45\textwidth}
\centering
Noa Agmon\\
Bar-Ilan University\\
Ramat Gan, Israel\\
\texttt{noa.agmon@biu.ac.il}
\end{minipage}
}

\date{}
\maketitle
\begin{abstract}
We study strategic space- and time-constrained cooperation between two self-interested agents through the \emph{Intermittent Strategic Cooperation-Based Two-Agent Path Planning} (\ictpp) problem, a shortest-path game on graphs in which agents navigate toward individual targets while optionally cooperating at specific nodes to reduce their own travel times. Although such cooperation can strictly benefit both agents, it is strategically fragile: agents may deviate at any point along their paths.
Modeled as a 2-player game, we characterize the structure of Pure Nash Equilibrium (PNE) joint strategies in \ictpp, and show that stable cooperation must follow a highly constrained form. We further prove that at least one PNE exists in every instance of \ictpp, and present a polynomial-time algorithm for enumerating all relevant PNEs. 
When multiple equilibria arise, we study coordination mechanisms based on bargaining-theoretic selection concepts and empirically compare equilibrium outcomes in terms of individual travel times and social welfare.
\end{abstract}
\section{Introduction}
Consider the following real-world scenario: two individuals commute from their homes to separate workplaces. Each can independently follow a shortest path, yet under certain circumstances they may benefit from coordinating their routes. For instance, driving together through certain intersections may grant them precedence, enabling faster or safer travel and incentivizing deviation from individually optimal paths in favor of cooperation. 
Such coordination, however, is inherently precarious. It is feasible and beneficial only if both agents arrive in time relevant for forming cooperation, and neither agent may gain by deviating (e.g. by initiating cooperation earlier or later, or by leaving prematurely once its individual benefit diminishes). As a result, cooperation that appears advantageous in principle may fail in practice.
This creates a fundamental tension: cooperation can strictly improve individual outcomes, but only if it is precisely timed and strategically stable. From a game-theoretic perspective, this raises a nontrivial challenge: although cooperation opportunities are local and time-dependent, agents commit to complete paths in advance, and deviations at any point can undermine cooperation. 
In this work, we show that despite this apparent fragility, cooperation at equilibrium is possible, and precisely because of it, such cooperation must follow a highly structured form.
This structure provides a foundation for studying richer multi-agent interactions through pairwise cooperation.
We study this phenomenon through the {\bf Intermittent Strategic Cooperation-Based Two-Agent Path Planning (\ictpp)} problem, a two-player game on a graph in which self-interested agents navigate from individual start nodes to individual targets, seeking to minimize their arrival times while cooperation opportunities in the graph can be leveraged to improve their outcomes. 
These cooperation opportunities may undermine system stability, as they allow agents to improve their path times, and consequently agents following their shortest independent paths do not necessarily form a Nash equilibrium, nor is this outcome necessarily advantageous.
We fully characterize the Pure Nash Equilibria (PNE) of this game and show that any cooperative equilibrium must follow a rigid structure: a single contiguous cooperation segment, preceded and followed by independent paths, with cooperation beginning and ending only at nodes {\em jointly stable} against unilateral deviation.
This structural characterization has two key implications. First, a PNE always exists. 
Second, equilibrium computation becomes tractable: for each cooperation node, there exists at most one non-dominated PNE ending cooperation at that node, meaning that no other PNE yields strictly lower travel times for both agents. Leveraging these insights, we present a polynomial-time algorithm that enumerates all non-dominated equilibrium joint strategies.

Finally, when multiple equilibria arise, they reflect different trade-offs between the agents’ individual benefits. We examine how agents may select among such equilibria using standard bargaining-based solution concepts, and empirically evaluate how cooperation opportunities and path alignment influence equilibrium efficiency.

This work lays the foundation for a novel class of path-planning problems with broad real-world applicability. By focusing on two-agent interactions, we expose fundamental strategic challenges that arise when self-interested agents may benefit from temporary cooperation while still optimizing their own paths. Since many multi-agent interactions can be decomposed into, or approximated by, sequences of pairwise encounters, the two-agent setting provides a principled starting point for studying more sophisticated coalition structures.

\section{Related Work}
\ictpp is related to a diverse set of two-player games, as well as multi-robot and multi-agent problems that arise across markedly different domains, each with distinct assumptions, objectives and interaction models. 
Among those are Multi-Agent Path Finding (MAPF), Autonomous Intersection Management (AIM), congestion games, games on graphs, Dynamic Coalition Formation (DCF), Task allocation, Timed Network Games (TNG), and shared transportation. 

Path planning is a fundamental problem in robotics, underlying nearly all tasks in mobile robot systems, and concerns determining efficient trajectories for the robot from a start to a target location. Multi-Robot Path Planning (MRPP) \cite{antonyshyn2023multiple, lin2022review} extends this to the concurrent navigation of multiple robots in a shared environment, where the presence of other robots greatly increases complexity. Various models of MRPP have been proposed, differing, for example, in system structure (centralized or decentralized), robot characteristics (homogeneous or heterogeneous), and world representation. Various approaches used game-theoretic formulations, accounting for other entities' intent and plans in both cooperative and competitive settings \cite{laine2021multi,burger2022interaction,le2021lucidgames,wang2021game}.
However, all typically share a core assumption: close encounters by robots should be avoided, and in most non-strictly-competitive settings, robots act cooperatively to optimize a common global objective, such as minimizing total cost or makespan. 

Multi-Agent Path Finding (MAPF), Autonomous Intersection Management (AIM), and congestion games typically model 
shared locations as detrimental phenomena to be avoided. MAPF focuses on collision-free path planning, usually under centralized or fully cooperative assumptions~\cite{stern2019multi,bertolini2022decentralized,ho2020decentralized}.
\cite{BonhommeGLDG24} study a MAPF variant that incorporates a limited form of robot cooperation, where certain hazardous target locations can be accessed only when another robot occupies a supporting location from which it can monitor the area. \cite{zhou2026multi} consider another cooperation-involving variant, in which some tasks require teams of multiple agents. Both settings, however, remain fully cooperative: they do not address self-interested robots, nor the possibility of intermittent, time-limited cooperation that agents may strategically initiate or terminate along their paths.
Congestion games capture competition over shared resources whose costs increase with usage~\cite{vocking2006congestion,gourves2015congestion,gairing2020existence,10.1145/1132516.1132529}.
These games are commonly used to model scenarios such as traffic routing, where shared resources impact strategic decisions. 
Specifically, in \cite{HOEFER20115420} the authors study temporal network congestion games with time-dependent costs and coordination mechanisms, analyzing agents' behavior from a game-theoretic perspective. However, this line of work, like the broader congestion game literature, treats interactions between agents as detrimental, and does not account for their potential beneficial effects.
AIM plans vehicle trajectories to avoid conflicts at intersections~\cite{dresner2008multiagent,cederle2024distributedapproachautonomousintersection,6094668}, including distributed methods \cite{cederle2024distributedapproachautonomousintersection}, multi-intersection settings \cite{6094668}, and vehicle platooning~\cite{bashiri2017,kang2025simultaneous,jurasko2025}, where coordinated arrivals improve efficiency. However, these typically assume fully cooperative behavior. In contrast, our work models nodes as opportunities for strategic coordination among self-interested agents, introducing a fundamentally different game-theoretic structure.

Several games on graphs study strategic interactions in spatial settings,
including Stackelberg security games
\cite{clepnerstackelberg,bucarey2021coordinating,ZHANG2022107840}
and assignment-based models such as 
the dinner party problem
\cite{berriaud2023stable,bodlaender2020hedonic,ceylan2023optimal,aziz2024neighborhood},
and topological distance games
\cite{bullinger2024topological,deligkas2024individual}.
However, they typically focus on static assignments or partitions and do not address path-based strategies, where decisions consist of sequences of nodes subject to spatial 
constraints and hence may involve intermittent dynamics.


Coalition formation research \cite{aziz2013computing,aziz2016hedonic,dreze1980hedonic} studies how agents form cooperative groups to pursue mutual benefits, typically focusing on stable partitions or maximizing global utility. Dynamic coalition formation \cite{ye2013self,ARNOLD2002363,1005630,hoefer2018dynamics} extends this line of work to settings in which collaboration can evolve over time, but it overlooks the trajectory coordination or the planning of sequences of short-term, intermittent cooperation.

Probabilistic physical search problems \cite{hazon2013physical,hazon2020probabilistic} address the challenge of locating items in uncertain environments with unknown acquisition costs. Collaborative variants \cite{hazon2009collaborative} study teams searching together, but do not consider agents dynamically rearranging into different teams or influencing each other’s costs.
In \cite{rochlin2016efficiency}, efficiency and fairness are explored in collaborative search involving self-interested agents. Their research examined various types of fairness and cost-sharing mechanisms among self-interested agents cooperating toward a shared goal. However, the work did not consider spatial factors that may influence agents' willingness to cooperate, nor did it address the possibility of agents resigning from the team and re-engaging in other teams.

Task allocation problems \cite{jiang2015survey} address efficient execution of tasks by agents focusing on optimizing a global utility. For example, in Coalition Formation with Spatial and Temporal Constraints (CFSTP) \cite{ramchurn2010coalition} agents are assigned to time-critical tasks across locations, yet this framework assumes fully cooperative agents maximizing collective utility, rather than self-interested agents forming temporary coalitions for individual objectives. Ad-Hoc Teamwork \cite{stone2010ad} refers to the problem of enabling agents to collaborate without prior coordination, focusing on adaptive behaviors and adjustment skills developed through training and experience. However, this line of work also focuses on fully cooperative settings. 

Cost-sharing games study how agents split the cost of jointly used network resources, focusing on equilibrium existence and efficiency under fixed cost-allocation mechanisms~\cite{gupta2008cost,anshelevich2008price}. While this line of work captures strategic incentives in network design, it assumes static participation in shared resources and does not consider path-based strategies in which agents endogenously decide when to cooperate and when to act independently along their trajectories, as in our setting.

Timed Network Games (TNG)~\cite{AVNI2023104996} provide a broad game-theoretic framework for studying timed interactions among agents moving through a network, including cases in which encounters between agents may be beneficial. Our work takes a path-planning perspective on such interactions, focusing on time-weighted graphs in which self-interested agents explicitly construct routes and cooperation is induced by the structure of the selected paths. This viewpoint shifts the analysis from general timed resource usage to cooperation-oriented path construction, allowing us to characterize when cooperation can emerge, remain stable, and terminate without creating profitable deviations.

Research on urban mobility and shared transportation
\cite{kimms2017consideration,bistaffa2015sharing,guajardo2018collaborative,CLEOPHAS2019801,zhao2022dynamic,chen2019dispatching,rheingans2019ridesharing}
primarily focuses on 
coordination mechanisms for ride-sharing and fleet management, often aiming to balance operational efficiency with social welfare. These models typically abstract away from agent-level strategic path choices and do not consider cooperation as a decision made by self-interested agents.
While existing work, such as~\cite{Drews_Luxen_2021,hou2012tictac}, consider cooperation switching with one or multiple hops, they focus on matching drivers and riders under centralized or algorithmic coordination, rather than modeling the strategic decision-making process of self-interested agents from a game-theoretic perspective.


\section{\ictpp: Problem Definition}
We formalize the {\bf Intermittent Strategic Cooperation-Based Two-Agent Path Planning (\ictpp) problem} as follows: 
\begin{tcolorbox}
Given two self-interested agents simultaneously navigating a shared graph-based environment from their respective source to target nodes, where cooperation at specific interaction points may reduce the delays they incur. Each agent seeks to find a path minimizing its individual travel time, given the strategy of the other agent. Their strategic interaction is modeled as a two-player game, and the goal is to identify, characterize, and compare resulting Pure Nash Equilibrium (PNE) outcomes.
\end{tcolorbox}
Consider an environment with two self-interested agents, $a_1$ and $a_2$, simultaneously navigating an undirected graph $G=(V,E)$ representing a shared physical environment.
Each agent $a_i$ starts at its initial node $s_i \in V$, aiming to reach its target node $g_i \in V$ as early as possible.
Each edge $(v,u) \in E$ has an associated travel time $\tau_{vu} > 0$.
Each node $v \in V$ may impose a {\em travel delay} on an agent visiting it, representing, for example, the time required to perform a local task (e.g., opening a gate) before proceeding.
The travel delay incurred by a single agent passing alone at node $v$ is denoted $\tau_v^1$, whereas cooperation may reduce it to $\tau_v^2$ ($\tau_v^1 \geq \tau_v^2 \geq 0$).
The set $V_C = \{v \in V \mid \tau_v^2 < \tau_v^1\}$ includes nodes in which cooperation strictly reduces delay (referred to as \emph{cooperation nodes}).

If agent $a_1$ (w.l.o.g.) arrives at a cooperation node $c \in V_C$ at time $t_1$ before agent $a_2$ arrives at time $t_2$, cooperation requires the earlier agent to wait until the other arrives. As a result, if $a_1$ waits for $a_2$, its total delay at $c$ is $(t_2 - t_1) + \tau_c^2$. In this case, agent $a_2$ incurs a total delay of $\tau_c^2$ at node $c$. Otherwise, both agents incur the non-cooperative delay $\tau_c^1$.
We note that while cooperating at $c$ only improves $a_1$'s local travel delay at $c$ when $t^2 - (\tau^1_c - \tau^2_c) \leq t^1_c$, considering the potential for ongoing cooperation, $a_1$ may choose to wait upon early arrival (before $t^2 - (\tau^1_c - \tau^2_c)$) to cooperate with $a_2$, thereby aiding $a_2$ in improving its departure time, which could also benefit $a_1$ at subsequent nodes (see example in  Figure~\ref{fig:relevant}).
Consequently, an agent’s decision to wait at a node depends on its own arrival time, the arrival time of the other agent, and the agents' planned future path.
\begin{figure}[ht]
    \centering
    \includegraphics[width=0.2\linewidth]{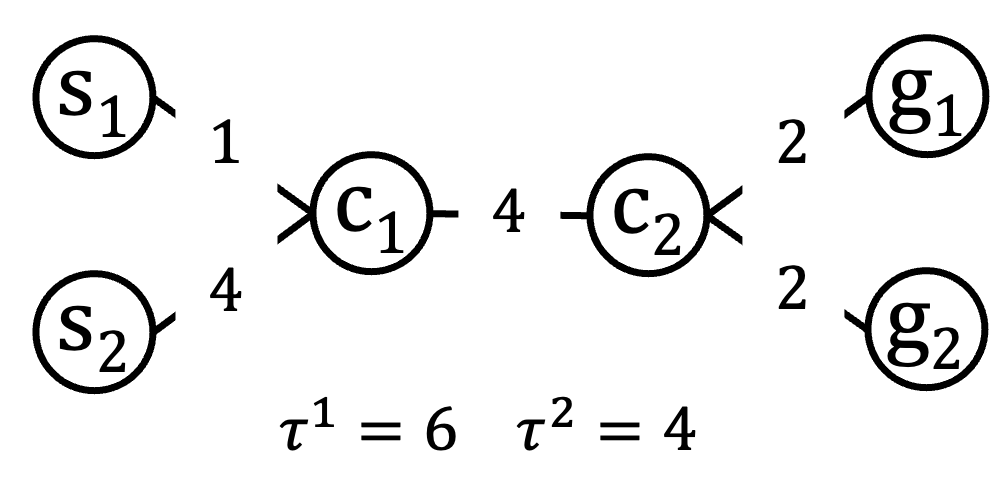}
    \caption{Agent $a_1$ reaches node ${c_1}$ in time for cooperation. While it could depart earlier without cooperating ($t=7$ vs.\ $t=8$), cooperation reduces the total path time (18 vs.\ 19).}
    
    \label{fig:relevant}
\end{figure}

Formally, when agent \( a_i \) arrives at node \( v \) at time \( t^i_v \), its waiting time for the other agent \( a_{-i} \), denoted \( w^i_v \), and its incurred travel delay \( \delta^i_v \), both at node $v$, depend on the arrival time \( t^{-i}_v \) of \( a_{-i} \). Specifically
\[
w^i_v(t^i_v, t^{-i}_v) =
\begin{cases}
t^{-i}_v - t^i_v 
& \text{if } t^i_v \in \left[\,t^{-i}_v - (\tau^1_v - \tau^2_v),\; t^{-i}_v\,\right] 
\text{ or if } a_i \text{ explicitly waits at } v \text{ for } a_{-i}, \\
0 
& \text{otherwise.}
\end{cases}
\]

\[
\delta^i_v(t^i_v, t^{-i}_v) =
\begin{cases}
\tau^2_v 
& \text{if } t^i_v \in \left[\,t^{-i}_v - (\tau^1_v - \tau^2_v),\; t^{-i}_v + (\tau^1_v - \tau^2_v)\,\right] \\
& \quad \text{or if either agent deliberately waits at } v \text{ to synchronize}, \\
\tau^1_v 
& \text{otherwise.}
\end{cases}
\]

The total delay incurred at node \( v \), combining waiting time and the node's travel delay, is denoted by
\[
\lambda^i_v(t^i_v, t^{-i}_v) = w^i_v(t^i_v, t^{-i}_v) + \delta^i_v(t^i_v, t^{-i}_v)
\]

A path \( \pi \) in the graph is a sequence of nodes 
\( \pi_{(1)}, \pi_{(2)}, \ldots, \pi_{(p)} \) such that 
\( (\pi_{(i)}, \pi_{(i+1)}) \in E \) for every \( i \).
We model deliberate waiting of an agent at a node \( v \in V \), for the purpose of synchronizing with another agent, by augmenting the node with the notation \( \dot{v} \). For example, an agent following the path \( s_1, v_1, \dot{v_2}, g_1 \) waits at \( v_2 \) until the other agent arrives before proceeding to \( g_1 \).
The partial path between nodes \( v \) and \( u \) is denoted by \( \pi_{v,u} \).
We denote with $\Pi_{p_1 \ldots p_l}$ the set of all paths starting at $v_{p_1}$, ending at $v_{p_l}$, and visiting $v_{p_2}, \ldots, v_{p_{l-1}}$ in order, possibly including additional nodes (for example, the path $v_1,v_{7},v_2,v_3,v_4$ is in $\Pi_{1,2,4}$).


The \emph{path time} between nodes \( v \) and \( u \) for agent \( a_i \), when following path \( \pi^i \) while the other agent \( a_{-i} \) follows path \( \pi^{-i} \), is defined as the sum of all edge traversal times, travel delays at nodes, and waiting times along \( \pi^i \), excluding the travel delays at the first and last nodes. We denote this time as \( T_{v,u}(\pi^i \mid \pi^{-i}) \).

We use \( D_{v,u}(\pi^i \mid \pi^{-i}) \) to denote the \emph{departure time} of agent \( a_i \) from node \( u \), defined as the path time from $v$ to $u$ plus the waiting and travel dealys incurred at \( u \).
The total path time of the full path $\pi$, from its starting node to its ending node, is denoted by $T(\pi^i \mid \pi^{-i})$. Similarly, the departure time is represented by $D(\pi^i \mid \pi^{-i})$.

Formally,
\[
T_{v,u}(\pi^i \mid \pi^{-i}) =
\begin{cases}
0 & \text{if } v = u, \\[4pt]
\displaystyle
\sum_{(x,y) \in \pi^i_{v,u}} \tau_{xy}
+
\sum_{x \in \pi^i_{v,u} \setminus \{v,u\}}
\lambda^i_x\!\bigl(T_{v,x}(\pi^i \mid \pi^{-i}),\; T_{v,x}(\pi^{-i} \mid \pi^i)\bigr)
& \text{otherwise.}
\end{cases}
\]

\[
D_{v,u}(\pi^i \mid \pi^{-i}) =
T_{v,u}(\pi^i \mid \pi^{-i})
+
\lambda^i_u\!\bigl(T_{v,u}(\pi^i \mid \pi^{-i}),\; T_{v,u}(\pi^{-i} \mid \pi^i)\bigr)
\]

For agent $a_i$, the individual path time along $\pi^i_{v,u}$, disregarding any interactions or cooperative dynamics with the other agent, is denoted by $T_{v,u}(\pi^i)$. The departure time from node $u$ in this independent context is denoted by $D_{v,u}(\pi^i)$.

The \emph{shortest independent path} ($SIP_{v,u}$) denotes the shortest path between nodes $v$ and $u$, assuming the traversing agent does not cooperate with the other agent at any node along the path.  
The \emph{shortest cooperation path} ($SCP_{v,u}$) denotes the shortest path assuming the traversing agent cooperates at all cooperation nodes without waiting for the other agent, incurring delay $\tau^2_w$ at each node $w \in SCP_{v,u}$.

A path $\pi$ for agent $a$ is considered \emph{rational} if, for any two nodes $v$ and $u$ along the path, the condition $T_{v,u}(\pi | \pi) \leq T_{v,u}(SIP_{v,u})$ holds. This implies that if the agent cooperates in all nodes along the path, the route between any two nodes is the shortest possible.

A strategy for an agent is defined as a full path from its start node to its target, including any waiting decisions. We consider a strategic setting in which each agent selects such a full-path strategy \emph{ex ante}, while accounting for the other agent’s start and target nodes, as well as its possible paths, with the goal of minimizing its individual path time.
A \emph{joint strategy} is a pair of paths $(\pi^1, \pi^2)$ specifying the strategies chosen by agents $a_1$ and $a_2$, respectively.
A joint strategy is a Pure Nash Equilibrium (PNE) if neither agent can unilaterally deviate so as to strictly improve its individual travel time.

\section{PNE Properties in \ictpp}

We seek to identify all joint strategies constituting a PNE: 
joint strategies where no agent has an incentive to unilaterally 
deviate.  
When considering a joint strategy in which neither agent’s path includes any cooperation nodes, verifying whether the strategy constitutes a PNE is straightforward: it suffices to check that each agent follows its shortest independent path, which can be done using a standard shortest-path algorithm. 
However, when cooperation nodes are part of the path, the situation becomes more complex, as these cooperation opportunities may incentivize the agents to deviate from their paths in order to improve their arrival times. If there are $m>0$ cooperation nodes, there may exist up to $2^m$ distinct cooperation patterns. Consequently, determining whether a joint strategy that involves cooperation opportunities forms a PNE is nontrivial.
This raises several structural questions: \emph{when should cooperation begin, what form should it take, and when should it end?} We show that cooperation-involving equilibrium strategies exhibit strong structural properties that sharply restrict these possibilities.

We divide our analysis into two classes of joint strategies: those that involve cooperation and those that do not. For each class, we characterize the structural conditions required for a joint strategy to constitute a PNE. We begin with joint strategies that involve cooperation. These can be partitioned into three distinct segments (see Figure \ref{fig:segments}):
\begin{enumerate}
\item \textbf{Joining Segment:} Each agent independently approaches the first cooperation node at which the agents cooperate, referred to as the \emph{cooperation starting node} \( c_s \).
    \item \textbf{Cooperation Segment}: The agents intermittently cooperate at a subset of cooperation nodes from \( {c_s} \) to a later node \( {c_d} \), referred to as the \emph{cooperation departure node}.  
    \item \textbf{Departure Segment}: From \( {c_d} \) onward, each agent follows an independent path to its target node.
\end{enumerate}

\begin{figure}[H]
    \centering
    \includegraphics[width=0.3\linewidth]{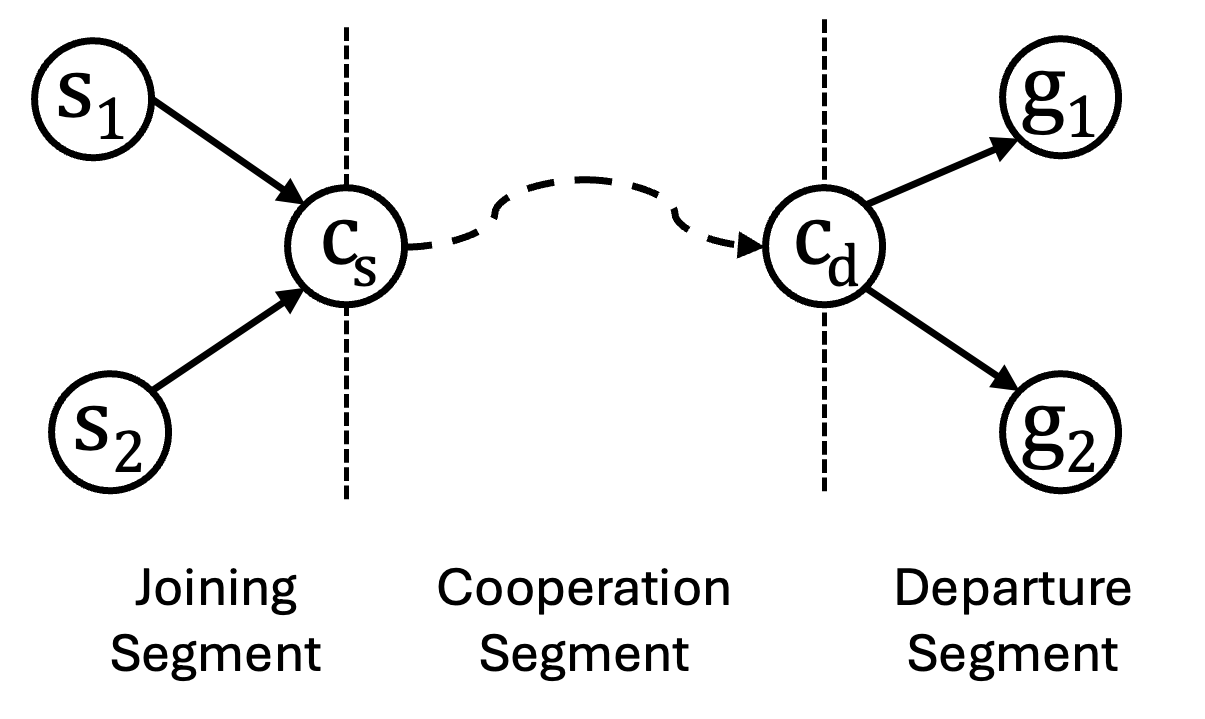}
\caption{Structure of a cooperative joint strategy}
    \label{fig:segments}
\end{figure}

In the following sections, we analyze each of these segments and formulate a set of structural conditions that must hold to prevent unilateral deviations\footnote{Although described in dynamic terms for clarity, deviations in our model correspond to ex ante selection of an alternative full-path.}
 and ensure the joint strategy constitutes a PNE.

\subsection{Joining segment}\label{joining-segment}
In this segment, each agent follows an independent path, involving no cooperation, toward the cooperation starting node \( {c_s} \).
To analyze this segment, we first show that, when the path of one agent is fixed, the optimal cooperative path of the other agent is achieved by joining and initiating cooperation at the earliest cooperation node along the fixed path that is reachable at a \emph{cooperation-relevant time}.
\begin{definition}
Let \(\pi^2\) denote the path of agent \(a_2\), and let \(c \in \pi^2\) be a cooperation node.
We say that \(c\) is reachable by \(a_1\) at a \emph{cooperation-relevant time} if \(a_1\) can arrive at \(c\) within a time interval that still allows cooperation with \(a_2\) at \(c\) \footnote{In our model, although the strategy of \(a_2\) is fixed, if cooperation is locally beneficial, \(a_2\) will trivially wait for \(a_1\). In particular, even if \(a_2\) has already started executing the task at \(c\), if cooperation upon the arrival of \(a_1\) still enables an earlier departure from the node, \(a_2\) will cooperate.}, namely:
\[
T(SIP_{s_1,c}) \leq T(\pi^2_{s_2,c}) + \tau^1_c - \tau^2_c
\]
\end{definition}
\begin{lemma}[Early Cooperation]\label{early}
Each agent prefers to initiate cooperation at the earliest possible cooperation node along the other agent's path.
That is, given an agent $a_i$'s path, the best response for the other agent, $a_{-i}$, is to join $a_i$ as early as possible along its path.
\end{lemma}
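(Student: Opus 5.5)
Fix $a_i$'s path $\pi^i$ and let $c^*$ be the first cooperation node along $\pi^i$ that is reachable by $a_{-i}$ at a cooperation-relevant time, i.e., the first $c \in \pi^i \cap V_C$ with $T(SIP_{s_{-i},c}) \leq T(\pi^i_{s_i,c}) + \tau^1_c - \tau^2_c$. Until $a_{-i}$ meets it, $a_i$ moves independently, so its arrival times along $\pi^i$ are fixed and equal the independent times $T(\pi^i_{s_i,c})$. The plan is an exchange argument: take any best response $\pi^{-i}$ of $a_{-i}$ whose first cooperation node is some later node $c' \in \pi^i$. We build another path $\hat\pi^{-i}$ that starts cooperation at $c^*$ and departs $c'$ no later than $\pi^{-i}$ does. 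Then $\hat\pi^{-i}$ copies $\pi^{-i}$ from $c'$ onward, so it is weakly better, and among best responses joining at $c^*$ is optimal.

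\textbf{Construction.} The path $\hat\pi^{-i}$ follows $SIP_{s_{-i},c^*}$. If it arrives before $a_i$, it marks $\dot{c^*}$ and waits; otherwise it arrives within the slack $\tau^1_{c^*}-\tau^2_{c^*}$ and cooperation occurs by the footnote's convention. It then travels together with $a_i$ along $\pi^i_{c^*,c'}$, cooperating at every cooperation node of that segment. Both agents therefore leave $c^*$ at the common time $\max\{T(SIP_{s_{-i},c^*}), T(\pi^i_{s_i,c^*})\}+\tau^2_{c^*}$. We must check that this does not change $a_i$'s arrival at $c'$ for the worse. This holds because $a_i$'s path is fixed and it only cooperates when locally beneficial, so $a_i$ departs $c^*$ no later than it would alone. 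On the shared segment, each node delay is $\tau^2 \le \tau^1$. By induction along the segment, $a_i$'s arrival time at each node is at most its independent time, and the two agents stay synchronized.

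\textbf{Comparison.} In $\pi^{-i}$, agent $a_{-i}$ departs $c'$ at some time $D' \ge \max\{t^{-i}_{c'}, T(\pi^i_{s_i,c'})\}+\tau^2_{c'}$, using either cooperation or $\tau^1_{c'}$. In $\hat\pi^{-i}$, it arrives at $c'$ together with $a_i$, no later than $T(\pi^i_{s_i,c'})$, and departs by $T(\pi^i_{s_i,c'})+\tau^2_{c'} \le D'$. From $c'$ on, $\hat\pi^{-i}$ copies $\pi^{-i}$. Agent $a_i$'s arrival at $c'$ is also no later than in the original profile, where it was independent up to $c'$. The subsequent interaction is therefore at least as favorable: any cooperation available before remains available, possibly with earlier arrivals, handled through $\dot{v}$ waits. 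Hence $T(\hat\pi^{-i}\mid\pi^i) \le T(\pi^{-i}\mid\pi^i)$. If $\pi^{-i}$ never cooperates, we instead compare with $\hat\pi^{-i}$ followed by an independent path from the meeting point.

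\textbf{Main obstacle.} The hard step is the last one. Earlier arrival of $a_i$ at nodes after $c'$ could break a timing-dependent cooperation that $\pi^{-i}$ relied on, for example if $a_{-i}$ had planned to arrive just within the slack window. I would first try to resolve this by inserting explicit waits $\dot v$ for $a_{-i}$. This restores exactly the original meeting conditions, or better ones, since waiting at a node preserves the option of reproducing any later schedule. A second concern is that a path to $c'$ via $c^*$ may be longer than $a_{-i}$'s direct route to $c'$. This is handled by the observation that $a_{-i}$ only needs to match $a_i$'s arrival at $c'$, which it does while riding along $\pi^i$. So the construction never makes $a_{-i}$ arrive later than $a_i$ at $c'$, and it cannot depart before $a_i$ arrives if it cooperates there.
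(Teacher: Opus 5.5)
Your construction and the comparison up to $c'$ are the paper's argument. The paper also fixes a later start node $c^*_i$ and builds $SIP_{s_1,c^*_1}\circ\pi^2_{c^*_1,c^*_i}\circ\pi_{c^*_i,g_1}$. It then uses $\tau^2\le\tau^1$ to show that $a_2$'s arrival at $c^*_i$ does not increase, and concludes that the departure from $c^*_i$ is no later. Your version even allows an arbitrary prefix to $c'$ rather than $SIP_{s_1,c^*_i}$.

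Where you go wrong is the step you call the main obstacle. It is not an obstacle, and your proposed repair would not work in this model. A mark $\dot v$ only makes $a_{-i}$ wait \emph{until $a_i$ arrives}, so it can only delay $a_{-i}$ relative to $a_i$. In the scenario you describe, $a_{-i}$ had planned to arrive slightly after $a_i$, within the slack. If $a_i$ came earlier, $a_{-i}$ would be late, and no wait in $a_{-i}$'s strategy could restore the window, because $a_i$'s path is fixed.

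The missing observation is that the agents cooperate at $c'$ in \emph{both} profiles: in the original by the choice of $c'$, in the new one because they arrive together. So in both profiles the two agents leave $c'$ at a common instant, $D'$ and $D'-\Delta$ with $\Delta\ge 0$. Afterwards they follow the identical pair of paths $\pi^i_{c',g_i}$ and $\pi^{-i}_{c',g_{-i}}$. Since $w^i_v$, $\delta^i_v$ and the meaning of $\dot v$ depend only on the difference $t^i_v-t^{-i}_v$, the whole remainder of the play is the original one translated by $-\Delta$. Every meeting, slack window and wait after $c'$ is reproduced exactly, and $T(\hat\pi^{-i}\mid\pi^i)=T(\pi^{-i}\mid\pi^i)-\Delta$. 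This is also what the paper's ``it suffices to compare departure times from $c^*_i$'' silently relies on.

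Two smaller corrections. First, drop the branch ``or $\tau^1_{c'}$'' in your bound on $D'$. Without cooperation at $c'$, $D'=t^{-i}_{c'}+\tau^1_{c'}$ could lie below $\max\{t^{-i}_{c'},T(\pi^i_{s_i,c'})\}+\tau^2_{c'}$; but by the choice of $c'$, cooperation does take place there. Second, your remark about a $\pi^{-i}$ that never cooperates claims too much. $SIP_{s_{-i},g_{-i}}$ can be strictly faster than every cooperating path, e.g.\ when $c^*$ lies far off $a_{-i}$'s route. That is why Theorem~\ref{cor1} and the best-response procedure list the independent path as a separate alternative. The lemma, as the paper proves and uses it, only compares cooperation-involving responses, so that case should be excluded rather than handled.
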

\begin{proof}
Consider a two-agent system in which agent \(a_2\) follows a fixed path \(\pi^2\). Let $V_{C^*}$ be the set of all cooperation nodes in $\pi^2$ that $a_1$ can reach within a cooperation-relevant time frame:
\[
    V_{C^*}=\{v\in V_C \mid  T(SIP_{s_1,v}) \leq T_{s_2,v}(\pi^2) + \tau^1_{v} - \tau^2_v \}
\]
The path $\pi^2$ dictates a chronological order for the cooperation nodes. Let ${c^*_i}$ denote the $i$-th cooperation node in $V_{C^*}$ that $a_2$ visits according to the path $\pi^2$.

Then, to prove the lemma we show that the optimal cooperation node to start cooperation at is ${c^*_1} \in V_{C^*}$. \\
Formally we prove that for any node ${c^*_i} \in V_{C^*}$ with $i > 1$, it holds that:
\[
\forall \pi_{c^*_i, g_1} \in \Pi_{c^*_i, g_1}, \quad
T_{s_1, g_1}(SIP_{s_1, c^*_1} \circ \pi^2_{c^*_1, c^*_i} \circ \pi_{c^*_i, g_1}|\pi^2) \leq T_{s_1, g_1}(SIP_{s_1, c^*_i} \circ \pi_{c^*_i, g_1}|\pi^2) 
\]

Given the path $\pi_{c^*_i, g_1}$, we construct the path $\pi'_{c^*_1, g_1} = \pi^2_{c^*_1, c^*_i} \circ \pi_{c^*_i, g_1}$ and examine the full paths $\hat{\pi} = SIP_{s_1, c^*_i} \circ \pi_{c^*_i, g_1}$ and $\hat{\pi}' = SIP_{s_1, c^*_1} \circ \pi'_{c^*_1, g_1}$. We need to show that:
\[
T_{s_1, g_1}(\hat{\pi}'|\pi^2) \leq T_{s_1, g_1}(\hat{\pi}|\pi^2) 
\]
Since the path from ${c^*_i}$ to ${g_1}$, denoted $\pi_{c^*_i, g_1}$, is the same in both cases, it suffices to show that the departure time from ${c^*_i}$ using the path 
$SIP_{s_1, c^*_1} \circ \pi'_{c^*_1, g_1}$ is earlier than the departure time using the path $SIP_{s_1, c^*_i} \circ \pi_{c^*_i, g_1}$.
When using $SIP_{s_1, c^*_i}$, agent $a_1$ arrives at ${c^*_i}$ in a cooperation-relevant time, and its departure time from this node is given by:
\[
D_{s_1, c^*_i}(SIP_{s_1, c^*_i}|\pi^2) = \max\left(T_{s_1, c^*_i}(SIP_{s_1, c^*_i}|\pi^2), T_{s_2, c^*_i}(\pi^2|SIP_{s_1, c^*_i})\right) + \tau^2_{c^*_i}
\]
    However, using $\hat{\pi}'$, $a_1$ arrives at ${c^*_1}$ at time to cooperate and therefore the delay of $a_2$ at ${c^*_1}$ is reduced and the arrival time of $a_1$ at ${c^*_i}$ (which is the same as the arrival time of $a_2$) is $T_{s_1,c^*_i}(\hat{\pi}'|\pi^2) = T_{s_2,c^*_i}(\pi^2|\hat{\pi}') \leq T_{s_2,c^*_i}(\pi^2|SIP_{s_1,c^*_i})$.
    The departure time of $a_1$ from ${c^*_i}$ is then:
    $D_{s_1,c^*_i}(\hat{\pi}'|\pi^2)=T_{s_2,c^*_i}(\pi^2|\hat{\pi}') + \tau^2_{c^*_i}$ and it holds that:
    \[T_{s_1,c^*_i}(\hat{\pi}'|\pi^2) + \tau^2_{c^*_i}=T_{s_2,c^*_i}(\pi^2|\hat{\pi}') + \tau^2_{c^*_i}\ \leq \]
\[\max(T_{s_1,c^*_i}(SIP_{s_1,c^*_i}|\pi^2), T_{s_2,c^*_i}(\pi^2|\hat{\pi}')) + \tau^2_{c^*_i} \leq\]
\[ \max(T_{s_1,c^*_i}(SIP_{s_1,c^*_i}|\pi^2),T_{s_2,c^*_i}(\pi^2|SIP_{s_1,c^*_i}))+\tau^2_{c^*_i}\]
\[
   D_{s_1,c^*_i}(\hat{\pi}'|\pi^2) \leq D_{s_1,c^*_i}(SIP_{s_1,c_i^*}|\pi^2)
\]

\end{proof}

Following Lemma~\ref{early}, one might expect both agents to prefer initiating cooperation at the earliest cooperation node that is mutually reachable. However, doing so may enable one agent to exploit the other by leaving the cooperation prematurely, thereby imposing an inferior path time on the remaining agent. Anticipating such behavior, the other agent may instead prefer to avoid cooperation nodes that admit exploitative deviations, even if this delays the onset of cooperation. Figure \ref{fig:non-cooperative} illustrates such a scenario.

\begin{figure}[ht]
    \centering
    \includegraphics[width=0.4 \linewidth]{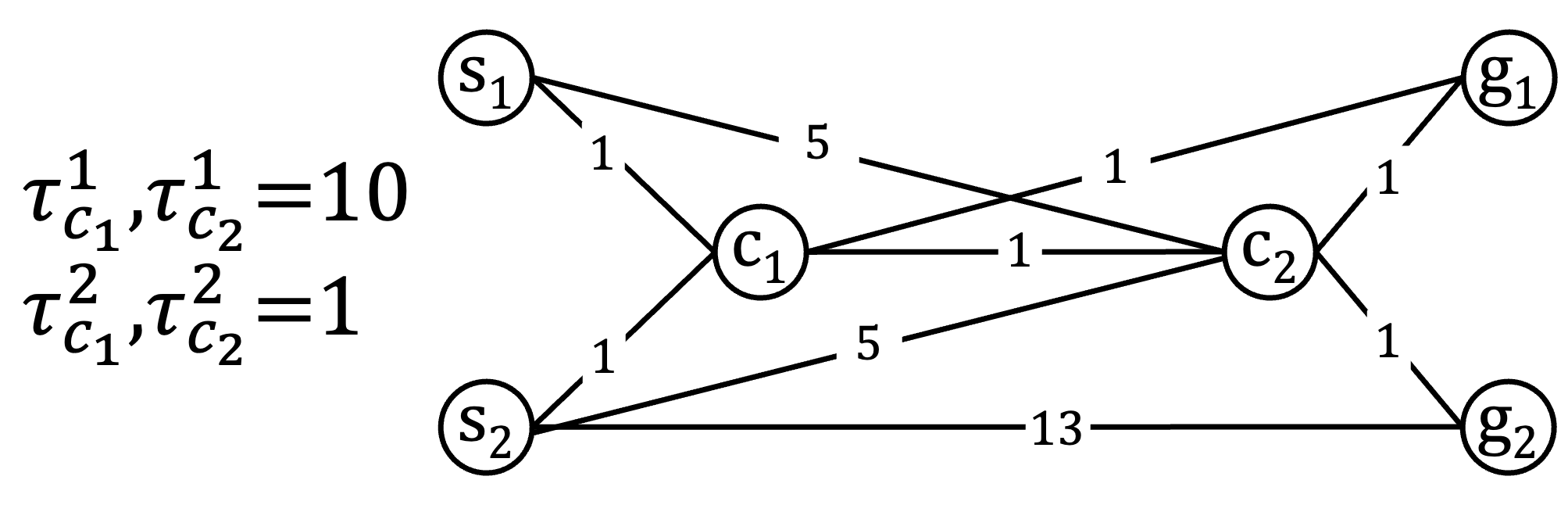}
    \caption{Although ${c_1}$ is the earliest cooperation node reachable by both agents, cooperation there allows agent $a_1$ to deviate and leave agent $a_2$ with an inferior outcome, leading $a_2$ to avoid ${c_1}$.}
     \label{fig:non-cooperative}
\end{figure}

To ensure that neither agent has an incentive to deviate from its intended path during the joining segment, we explicitly consider two types of potential deviations:
\begin{enumerate}
    \item \textbf{Cooperation deviation:} If agent $a_1$’s (w.l.o.g.) path includes a cooperation node reachable by agent $a_2$ at a cooperation-relevant time, then by the Early Cooperation Lemma, agent $a_2$ would prefer to deviate from its intended in order to initiate cooperation at that earlier node (see Figure~\ref{fig:joining-deviation:A}).
    \item \textbf{Arrival time deviation:} If the last-arriving agent at ${c_s}$ can adjust its path to arrive earlier (thereby enabling cooperation to begin sooner) this adjustment benefits both agents (see Figure~\ref{fig:joining-deviation:B}).
\end{enumerate}
\begin{figure}[H]
    \centering
    \begin{minipage}{1\textwidth} 
        \centering
        \begin{subfigure}[b]{0.15\textwidth}
            \includegraphics[width=\textwidth]{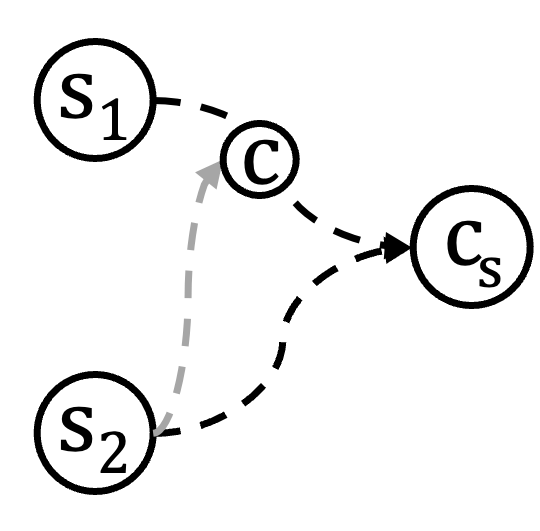}
            \caption{Starting cooperation at an earlier node.}
            \label{fig:joining-deviation:A}
        \end{subfigure}
        \hspace{3cm} 
        \begin{subfigure}[b]{0.15\textwidth}
            \includegraphics[width=\textwidth]{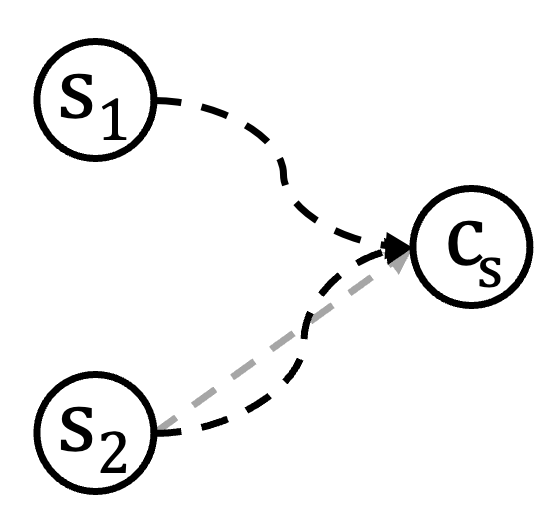}
            \caption{Arriving earlier at $v_{c_s}$ to initiate cooperation sooner.}
            \label{fig:joining-deviation:B}
        \end{subfigure}
    \end{minipage}
    \caption{Possible deviations in the joining segment.}
    
    \label{fig:joining-deviation}
\end{figure}

\begin{definition}[Non-Cooperative Partial Path]\label{def:non-cooperative}
A partial path $\pi_{s_i,v}$ from agent $a_i$'s starting node ${s_i}$ to a node $v \in V$ is defined as a {\em Non-Cooperative (NC) Partial Path} if the first cooperation node along the path that the other agent, $a_{-i}$, can reach from its starting node ${s_{-i}}$ within a cooperation-relevant time is $v$. Formally, for every cooperation node $c \in \pi_{s_i,v} \setminus \{v\}$, $T(SIP_{s_{-i}, c}) > T(\pi_{s_i, c}) + \tau^1_{c} - \tau^2_c$.
\end{definition}
We denote the set of all Non-Cooperative Partial Paths from the starting node ${s_i}$ to node $v$ as $\Pi^{NC}_{s_i,v}$, and the {\em shortest} Non-Cooperative Partial Path from ${s_i}$ to node $v$ as $SIP^{NC}_{s_i,v}$.
Algorithm \ref{algorithm:shortest-non-cooperative-partial-paths} finds the shortest Non-Cooperative partial path from a given starting node ${s_i}$ to all nodes in the graph. The algorithm is identical to Dijkstra's algorithm, except for one modification: if a cooperation node $c \in V_C$ is reachable by the other agent at a time suitable for cooperation, it is removed from the graph once its shortest non-cooperative partial path is identified, preventing further expansions.
This ensures that $c$ is not considered as part of the non-cooperative path to any other node in the graph.
The algorithm initializes by setting the paths from ${s_i}$ to all nodes in the graph to infinity (except ${s_i}$, which is set to $0$ with a trivial path) and defining $Q \gets V$ as the set of unvisited nodes [lines 1–3].
The algorithm then evaluates all unvisited nodes \(v\) that are reachable from ${s_i}$ in ascending order of their path time from \({s_i}\) [lines 4-11]. For each node \(v\), ignoring nodes that can be leveraged by the other agent to initiate an earlier cooperation [line 7], the algorithm iterates over its neighbors. For each neighbor, if the shortest Non-Cooperative path from ${s_i}$ to it via \(v\) is shorter than its current path, the algorithm updates the path and its associated time [lines 8-11].
Once all reachable nodes have been evaluated, the algorithm returns a mapping of each node in the graph to its corresponding shortest Non-Cooperative partial path from \({s_i}\) [line 12].

If a cooperation node $c \in V_C$ is reachable by the other agent at a time suitable for cooperation, it is removed from the graph once its shortest non-cooperative partial path is identified, preventing further expansions.

\begin{algorithm}[ht]
\caption{\textsc{Shortest Non-Cooperative Partial Paths($G, V_C, {s_i}, SIP_{s_{-i}}$)}}\label{algorithm:shortest-non-cooperative-partial-paths}
\begin{algorithmic}[1]
\State $T_{s_i,v} \gets \infty, \pi_{s_i,v} \gets \emptyset$ for all $v \in V$
\State $T_{s_i,s_i} \gets 0, \pi_{s_i,s_i} \gets {s_i}$
\State $Q \gets V$
\While{$Q$ has a node $v$ s.t. $T_{s_i,v}<\infty$}
    \State $v \gets$ node in $Q$ with smallest $T_{s_i,v}$
    \State Remove $v$ from $Q$
    \If{$v \notin V_C$ or $SIP_{s_{-i},v} > T_{s_i,v} +\tau^1_{v} - \tau^2_{v}$}
        \For{each neighbor $u$ of $v$}
            \If{$T{s_i,v} + \tau^1_{v} + \tau_{v,u} < T{s_i,u}$}
                \State $T{s_i,u} \gets T{s_i,v} + \tau^1_{v} + \tau_{v,u}$
                \State $\pi_{s_i,u} \gets \pi_{s_i,v} \circ u $
            \EndIf
        \EndFor
    \EndIf
\EndWhile
\State \textbf{Return} a dictionary from each $v \in V$ to $\pi_{s_i,v}$
\end{algorithmic}
\end{algorithm}

\begin{lemma}\label{lemma:non-cooperative-algorithm}
    Algorithm~\ref{algorithm:shortest-non-cooperative-partial-paths} computes the \textbf{shortest} \emph{Non-Cooperative Partial Paths} from a given starting node to all nodes in the graph in polynomial time.
\end{lemma}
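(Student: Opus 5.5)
The plan is to prove correctness by a Dijkstra-style invariant argument. The approach rests on one \emph{monotonicity} property of the non-cooperative condition in Definition~\ref{def:non-cooperative}. The running time is then immediate.

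\textbf{Step 1 (monotonicity).} For a cooperation node $c$, call an arrival time $t$ of $a_i$ at $c$ \emph{admissible} if $T(SIP_{s_{-i},c}) > t + \tau^1_c - \tau^2_c$. The set of admissible times is a down-set: if $t$ is admissible, so is every $t' \le t$. Hence if $c$ is non-admissible at its \emph{earliest} NC arrival time, it is non-admissible at every NC arrival time. So no NC partial path can use $c$ as an interior node, and removing $c$ from further expansion loses nothing.

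\textbf{Step 2 (exchange property).} Along an NC partial path no cooperation occurs, so path time is additive: edge times plus $\tau^1_x$ at each interior node $x$.

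\begin{itemize}
\item Every prefix of an NC partial path $\pi_{s_i,v}$ is itself an NC partial path to its endpoint.
\item Let $\pi_{s_i,v}$ be an NC partial path through an interior node $u$, and let $\sigma$ be an NC partial path to $u$ with $T(\sigma) \le T(\pi_{s_i,u})$. Then $\sigma \circ \pi_{u,v}$ is again NC, with time at most $T(\pi_{s_i,v})$. The reason is that every node from $u$ onward is reached no later than before, so admissibility is preserved by Step~1. For $u$ itself, it was admissible at time $T(\pi_{s_i,u}) \ge T(\sigma)$.
\end{itemize}

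This yields the Bellman-type recurrence. The shortest NC time to $v$ is the minimum, over neighbours $u$ that are either non-cooperation nodes or admissible at their own shortest NC time, of that time plus $\tau^1_u + \tau_{u,v}$.

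\textbf{Step 3 (invariant).} By induction on the extraction order, I will show that whenever a node $v$ is removed from $Q$, the label $T_{s_i,v}$ equals the shortest NC partial-path time to $v$, and $\pi_{s_i,v}$ realizes it.

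The argument is the standard Dijkstra one. Suppose, for contradiction, that some NC path $P$ to $v$ is strictly shorter than $T_{s_i,v}$. Let $y$ be the first node of $P$ still in $Q$, and let $x$ be its predecessor, which has already been extracted.

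\begin{itemize}
\item Since $x$ is interior on $P$, $x$ is either a non-cooperation node or admissible at its arrival time on $P$.
\item By the induction hypothesis, $T_{s_i,x}$ is no larger than that arrival time, so $x$ is admissible at $T_{s_i,x}$ by Step~1.
\item Therefore $x$ passed the test in line~7 and relaxed the edge to $y$. This gives $T_{s_i,y} \le T(P_{s_i,y}) \le T(P) < T_{s_i,v}$, using positive edge weights and nonnegative delays.
\item This contradicts the choice of $v$ as the node with minimum label.
\end{itemize}

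For completeness, the constructed paths are NC: every interior node on $\pi_{s_i,v}$ passed line~7 at its recorded time. The endpoint $v$ is permitted to be a blocking cooperation node, exactly as the definition allows.

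\textbf{Step 4 (complexity).} $SIP_{s_{-i},\cdot}$ is obtained by one run of ordinary Dijkstra. The main loop is Dijkstra with an $O(1)$ extra test per extraction. The total is $O(|E| + |V|\log|V|)$ with a heap, which is polynomial.

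I expect the main obstacle to be Step~1/Step~2. A priori, arriving \emph{later} at a blocked cooperation node might seem to unblock it, which would break the optimal-substructure property that Dijkstra needs. The argument must establish that the condition only becomes harder to satisfy as arrival time grows, and this is what justifies discarding a node permanently after its first extraction.
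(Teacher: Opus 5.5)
Your proposal is correct and takes the same route as the paper: a Dijkstra-optimality argument. Its only new ingredient is that a cooperation node failing the test in line~7 at its shortest non-cooperative arrival time cannot be an interior node of any non-cooperative partial path, so removing it loses nothing. The paper asserts this in a single sentence, whereas you justify it through the monotonicity (down-set) property of the admissibility condition and spell out the standard extraction invariant, so yours is a more complete version of the same proof.
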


The correctness of Lemma \ref{lemma:non-cooperative-algorithm} follows from the optimality of Dijkstra's algorithm. Since the removal of nodes from the graph is done only when the shortest non-cooperative partial path to them is found, and the removed nodes are not part of any shortest non-cooperative path to any other node when this removal is performed, the path time and non-cooperation conditions are preserved.
Similar to Dijkstra's algorithm, the complexity of Algorithm \ref{algorithm:shortest-non-cooperative-partial-paths} is $\mathcal{O}(|E| + |V| \log |V|)$.
\begin{definition}[Mutually-Robust Non-Cooperative Partial Paths]\label{def:robust-non-cooperative}
Two non-cooperative partial paths, $\pi^1_{s_1,v}$ and $\pi^2_{s_2,v}$, are defined as {\em Mutually-Robust Non-Cooperative Partial Paths} if neither agent can improve the cooperation starting time at $v$ by unilaterally changing its path toward it.
Formally, one of the following conditions holds:  
    (1) Both agents arrive at node $v$ simultaneously: $T(\pi^{1}_{s_1,v}) = T(\pi^{2}_{s_2,v})$. 
    (2) If agent $a_1$ (w.l.o.g.) reaches $v$ first, then $a_2$ arrives at $v$ via its shortest independent path, i.e.,
$T(\pi^{2}_{s_2,v}) = T(SIP_{s_2,v})$.
\end{definition}

Following Definitions~\ref{def:non-cooperative} and~\ref{def:robust-non-cooperative}, to eliminate unilateral deviations within the joining segment, the agents’ \emph{joining segments} must form Mutually-Robust NC Partial Paths. Figure \ref{fig:robust-non-coop} illustrates the necessity of both requirements.

\begin{figure}[ht]
    \centering
    \includegraphics[width=0.4 \linewidth]{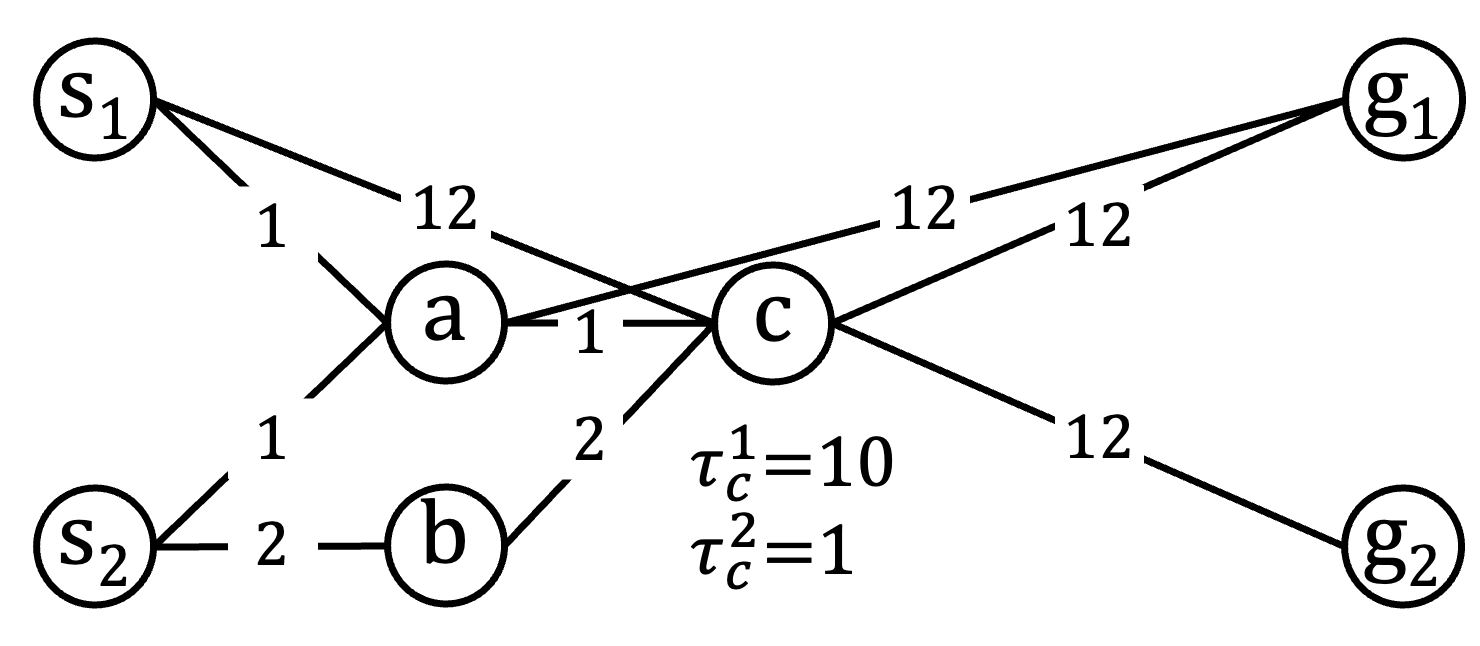}
\caption{The shortest independent path from $s_2$ to $c$ passes through $a$. However, since this path allows $a_1$ to enforce cooperation and then deviate early, it is not considered non-cooperative. The two non-cooperative paths to $c$, namely $(s_1,c)$ and $(s_2,b,c)$, are nevertheless not mutually robust: agent $a_2$, which arrives later at $c$, would prefer to deviate to its shortest independent path $(s_2,a,c)$.}
    \label{fig:robust-non-coop}
\end{figure}

\subsection{Cooperation segment}\label{cooperation-segment}
In this segment, an agent may deviate by seeking additional cooperation opportunities or by avoiding cooperation at certain nodes. 
To analyze this segment, we first show that when two agents cooperate along a partial path, neither agent has an incentive to leave the cooperation and rejoin it later, implying that the cooperation is continuous.
\begin{lemma}[Cooperation Continuity]\label{cont}
Consider two cooperation nodes. If both agents cooperate between them along a \emph{rational} path, then neither agent can strictly improve its path time by unilaterally deviating to an individual path and rejoining later.
\end{lemma}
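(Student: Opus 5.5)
Fix the two cooperation nodes $c$ and $c'$ with $c$ visited before $c'$, and let $\pi = \pi^1_{c,c'} = \pi^2_{c,c'}$ be the common partial path along which the agents cooperate. Because the agents cooperate at $c$, they leave $c$ together at a common departure time $D$. W.l.o.g.\ let $a_1$ be the agent who deviates: it leaves $\pi$ after $c$, follows some individual path $\rho$, and rejoins $a_2$'s path at a later cooperation node $c'$ (or at any node up to $c'$). Every later rejoining point can be handled the same way, so it is enough to compare the two agents' arrival times at the rejoining node $c'$. Since $a_2$'s strategy is fixed, the rest of the path from $c'$ onward is identical in both scenarios. So, exactly as in the proof of Lemma~\ref{early}, it suffices to show that $a_1$'s departure time from $c'$ under the deviation is no earlier than its departure time under continued cooperation.

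\textbf{Step 1: bound the cooperative arrival time.} Under continued cooperation both agents reach $c'$ together at time $D + T_{c,c'}(\pi\mid\pi)$ and each incurs delay $\tau^2_{c'}$. Rationality of $\pi$, applied to the pair $(c,c')$, gives
\[
T_{c,c'}(\pi\mid\pi) \le T_{c,c'}(SIP_{c,c'}).
\]
The deviation $\rho$ is an individual path with no cooperation, so $T_{c,c'}(\rho) \ge T_{c,c'}(SIP_{c,c'})$. Hence $a_1$ cannot arrive at $c'$ earlier than the cooperative arrival time.

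\textbf{Step 2: the deviator's absence does not speed up the other agent.} Once $a_1$ leaves, $a_2$ travels $\pi$ alone. At every intermediate cooperation node $a_2$ now incurs $\tau^1_w \ge \tau^2_w$ instead of $\tau^2_w$, and it has no reason to wait. So $a_2$'s arrival time at $c'$ is at least as late as in the cooperative case. To make this precise, I will argue by induction along the nodes of $\pi$ that each node-by-node arrival time under the deviation is at least the corresponding cooperative one. The base case is the common departure from $c$. The inductive step uses $\lambda \ge \tau^2$ at every node, because every delay term is either $\tau^2$, $\tau^1$, or $\tau^2$ plus a nonnegative wait.

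\textbf{Step 3: combine.} Under the deviation, $a_1$'s departure from $c'$ is at least
\[
\max(\text{arrival of } a_1,\ \text{arrival of } a_2) + \tau^2_{c'}
\]
when they cooperate at $c'$, and at least its own arrival plus $\tau^1_{c'}$ otherwise. Each of these is at least $D + T_{c,c'}(\pi\mid\pi) + \tau^2_{c'}$. This gives no strict improvement.

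I expect the main obstacle to be Step 2 when the deviating agent reaches an intermediate node while the other is still there. I also need to rule out partial re-cooperation along the way, since the deviator could meet $a_2$ at an intermediate node. I would handle this by splitting the deviation at every such meeting point and applying Steps 1--3 segment by segment.
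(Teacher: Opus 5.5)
Your Step~1, applied to each maximal excursion of $a_1$ off $\pi$ (from the last shared node $d$ to the rejoining node $r$), is exactly the paper's argument. Rationality gives $T_{d,r}(\pi\mid\pi)\le T(SIP_{d,r})$, and the excursion is non-cooperative, so it costs at least $T(SIP_{d,r})$. Chaining the excursions gives the segment inequality $T_{c_i,c_j}(\pi'_{s_1,c_i}\circ\pi_{c_i,c_j}\mid\pi)\le T_{c_i,c_j}(\pi'\mid\pi)$: following $\pi$ is $a_1$'s fastest way to reach any later cooperation node of $\pi$. That is all the paper claims. When you chain segments, Step~1's hypothesis of a common departure $D$ should be weakened to ``$a_1$ leaves the segment's first node no earlier than in the cooperative run.'' This propagates because $\lambda^1_v\ge\tau^2_v$ at every node, while the cooperative run pays exactly $\tau^2_v$.

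The gap is in what you add on top: the reduction ``the rest of the path from $c'$ onward is identical, so it suffices to compare $a_1$'s departure from $c'$.'' Let $t^*:=D+T_{c,c'}(\pi\mid\pi)$, and let $t_1,t_2$ be the agents' arrivals at $c'$ under the deviation. The reduction is valid only if the agents actually cooperate again at $c'$. Then both leave at $\max(t_1,t_2)+\tau^2_{c'}\ge t^*+\tau^2_{c'}$, and everything afterwards is a time-translate of the cooperative run.

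In the ``otherwise'' branch of Step~3 they leave $c'$ at different times. There $a_1$'s later cost depends on $a_2$'s timing non-monotonically: a \emph{later} $a_2$ can turn a missed meeting into a successful one. Here is a concrete instance.
\begin{itemize}
\item After $c'$, the two continuations meet at a cooperation node $w$. In the cooperative run, $a_2$ reaches $w$ exactly $G>\tau^1_w-\tau^2_w$ before $a_1$, so they miss each other.
\item The intermediate cooperation nodes of $\pi_{c,c'}$ have total savings $\sum(\tau^1_v-\tau^2_v)=G$.
\item $a_1$ has an individual path $c\to c'$ of time exactly $T_{c,c'}(\pi\mid\pi)$; rationality permits this equality.
\end{itemize}
Under this deviation, $a_2$ reaches $c'$ exactly $G$ after $a_1$. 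If $G>\tau^1_{c'}-\tau^2_{c'}$, they do not cooperate at $c'$, and $a_1$ loses $\tau^1_{c'}-\tau^2_{c'}$ there. But the two now reach $w$ simultaneously, and $a_1$ saves $\tau^1_w-\tau^2_w$. So $a_1$ leaves $w$ strictly earlier, and absent later interactions finishes strictly earlier, whenever $\tau^1_{c'}-\tau^2_{c'}<\tau^1_w-\tau^2_w$.

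Your Step~2 cannot repair this, since it is precisely $a_2$'s lateness that helps $a_1$. There are two ways to fix the argument. You can stop at the segment inequality, as the paper does. Or you can restrict ``rejoining'' to re-cooperating at $c'$ and add the time-translation argument. In that case Step~2 is unnecessary, because $a_1$'s own delay at $c'$ is already at least $\tau^2_{c'}$, so its departure is at least $t_1+\tau^2_{c'}\ge t^*+\tau^2_{c'}$.
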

\begin{proof}
Without loss of generality, assume that agent $a_2$ follows a rational path $\pi$, and that the agents initiate cooperation at a cooperation node $c_i \in \pi$. 
We show that the best response of the other agent, $a_1$, yielding the fastest path from $c_i$ to any subsequent cooperation node $c_j \in \pi$, is to follow the subpath $\pi_{c_i,c_j}$.
Formally, we prove that for any path
\[
\pi' \in \Pi_{s_1,c_i,c_j}
\quad \text{satisfying} \quad
D_{s_1,c_i}(\pi' \mid \pi)
=
D_{s_2,c_i}(\pi \mid \pi')
\]
namely, that the agents cooperate at \(c_i\), the following holds:
\[
T_{c_i,c_j}\!\left(\pi'_{s_1,c_i} \circ \pi_{c_i,c_j} \mid \pi\right)
\;\le\;
T_{c_i,c_j}(\pi' \mid \pi)
\]

Since $a_1$ and $a_2$ depart from node $c_i$ at the same time, traveling together on the same path from $c_i$ to any subsequent node $v$ ensures they arrive at $v$ simultaneously. If $v$ is a cooperation node, this synchronization allows them to cooperate immediately upon arrival, minimizing the latency at that node.

Assume, by contradiction, that there exists a path $\pi' \in \Pi_{s_1, c_i, c_j}$ such that $D_{s_1, c_i}(\pi'|\pi) = D_{s_2, c_i}(\pi|\pi')$ and $T_{c_i, c_j}(\pi'_{s_1, c_i} \circ \pi_{c_i, c_j}|\pi) > T_{c_i, c_j}(\pi'|\pi)$. Since the partial path to ${c_i}$ is the same in both paths $\pi'$ and $\pi'_{s_1, c_i} \circ \pi_{c_i, c_j}$, the deviation between the two paths must occur along the partial path starting at ${c_i}$, that is, $\pi'_{c_i, c_j} \neq \pi_{c_i, c_j}$.

For any deviation between the paths, let \(d\) denote the last node preceding the deviation and let \(r\) denote the first node at which the paths rejoin. Because $a_2$ is rational, it holds that $T_{d, r}(\pi|\pi) \leq T_{d, r}(SIP_{d, r})$. Since there are only two agents, and their partial path from $d$ to $r$ doesn't intersect, no cooperation can occur along the path $\pi'_{d, r}$.
Therefore,
\[
    T_{d, r}(\pi|\pi) \leq T_{d, r}(SIP_{d, r}) \leq T_{d, r}(\pi'|\pi)
\]
This implies that
\[
    T_{c_i, r}(\pi|\pi) \leq T_{c_i, r}(\pi'|\pi)
\]
for any deviation of $\pi'_{c_i, c_j}$ from $\pi_{c_i, c_j}$. 
Thus, 
\[
    T_{c_i, c_j}(\pi'_{s_1, c_i} \circ \pi_{c_i, c_j}|\pi) \leq T_{c_i, c_j}(\pi'|\pi)
\]
which contradicts our initial assumption.\\
Therefore, for any node $c_j \in \pi$ visited after $c_i$, the \emph{fastest path} for $a_1$ from $c_i$ to $c_j$ is achieved by following the path $\pi_{c_i, c_j}$.
\end{proof}

We next show, in~\ref{sec:coop-brp}, how Lemmas~\ref{early} and~\ref{cont} can be used to efficiently compute the best response of an agent to a fixed strategy of the other agent. This characterization will later be used to verify whether the fully independent joint strategy is a PNE in Section~\ref{section:independentPNE}. We then return, in~\ref{section:stability}, to the cooperation segment and introduce the stability condition required for it to constitute part of a cooperative PNE.

\subsubsection{Best Response}\label{sec:coop-brp}
Leveraging Lemmas~\ref{early} and~\ref{cont}, we show that the best cooperation-involving response of an agent to a fixed strategy of the other agent has a simple structure. Rather than considering all possible subsets of cooperation opportunities along the path, it suffices to consider responses that start cooperation as early as possible and maintain it continuously until an optimal departure node. This reduces the relevant strategy space for agent \(a_1\) from exponentially many cooperation patterns to a number linear in \(m\).

The following theorem formalizes this structure for the best cooperation-involving response of agent \(a_1\) to a \emph{fixed} path \(\pi\) of agent \(a_2\), and follows directly from Lemmas~\ref{early} and~\ref{cont}.
\begin{theorem}\label{cor1}
The best cooperation-involving response of agent \(a_1\) to a fixed path \(\pi\) of agent \(a_2\) is the path that initiates cooperation at the earliest cooperation node along \(\pi\) that \(a_1\) can reach at a cooperation-relevant time, denoted \(c^*_1\), and then follows \(\pi\) until the optimal departure node \(d\), defined by
\begin{equation}\label{vd_def}
d =
\arg\min_{v \in \pi}
T(\pi_{c^*_1, v} \circ SIP_{v, g_1} \mid \pi)
\end{equation}

That is,
\[
\pi^* =
SIP_{s_1,c^*_1}
\circ
\pi_{c^*_1,d}
\circ
SIP_{d,g_1}
 \]
\end{theorem}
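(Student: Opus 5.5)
The argument is an exchange argument: I take an arbitrary cooperation-involving response $\pi'$ of $a_1$ to the fixed path $\pi$ and transform it, in three stages, into a path of the stated form $\pi^*$ whose path time is no larger. Each stage replaces one segment of $\pi'$ and leaves the remaining segments unchanged. Throughout, I assume, as in Lemma~\ref{cont}, that $\pi$ is rational. Since $\pi'$ cooperates at least once, I decompose it into three parts:
\begin{itemize}
\item a prefix up to its first cooperation node $c$, which must be a node of $\pi$ reachable at a cooperation-relevant time;
\item a middle part from $c$ to its last cooperation node $d'$;
\item a suffix from $d'$ to $g_1$, on which no cooperation occurs.
\end{itemize}

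\textbf{Step 1 (start).} I apply Lemma~\ref{early}. For every continuation after $c$, replacing the prefix by $SIP_{s_1,c^*_1}\circ\pi_{c^*_1,c}$ gives a departure time from $c$ that is no later. Because the continuation is then identical and departure from $c$ is synchronized with $a_2$, nothing later in the path gets worse. Two small points need checking here.
\begin{itemize}
\item In the case $c=c^*_1$, I still replace the prefix by $SIP_{s_1,c^*_1}$. The prefix contains no cooperation, so $SIP_{s_1,c^*_1}$ is at least as fast. Arriving no later at $c^*_1$ does not hurt, because $a_1$ waits for $a_2$ in either case.
\item Since $a_2$'s path is fixed, a change in the timing of $a_1$ before $c$ can only make $a_2$ earlier (through earlier cooperation), never later. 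This monotonicity is what the proof of Lemma~\ref{early} already uses.
\end{itemize}

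\textbf{Step 2 (middle).} The agents now cooperate at $c^*_1$ and depart together, and they also cooperate at $d'$ (this is why $d'$ lies on $\pi$). Lemma~\ref{cont} then shows that replacing $\pi'_{c^*_1,d'}$ by $\pi_{c^*_1,d'}$ does not increase $T_{c^*_1,d'}$. Following $\pi_{c^*_1,d'}$ together with $a_2$ also keeps the two agents synchronized at $d'$. Hence the departure time from $d'$ is no later and equals $a_2$'s departure time.

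\textbf{Step 3 (end).} From $d'$ the suffix involves no cooperation. Replacing it by $SIP_{d',g_1}$ therefore does not increase the remaining time. The resulting path is $SIP_{s_1,c^*_1}\circ\pi_{c^*_1,d'}\circ SIP_{d',g_1}$, which is one of the candidates in \eqref{vd_def}. Minimizing over $v\in\pi$ gives a path at least as good, namely $\pi^*$ with $d$ the minimizer. The candidate family is restricted to $v$ at or after $c^*_1$, so $\pi^*$ is a cooperation-involving response, as the statement requires.

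\textbf{Main obstacle.} The delicate part is the interface between the stages: showing that departure times, once synchronized with $a_2$, really are monotone. Concretely, arriving at a node earlier or together with $a_2$ must never give $a_1$ a later departure, even though the delay rules in the definitions of $w$ and $\delta$ are non-monotone in arrival time. I would handle this using explicit waiting (the $\dot v$ notation): any agent that arrives early can wait and reproduce the synchronized outcome. Under the synchronized schedule, $a_2$'s timeline is pointwise no later than under the original schedule. A second, milder issue is that Lemma~\ref{cont} assumes the deviation rejoins at a cooperation node. Step 2 applies it exactly with endpoint $c_j=d'$, so this assumption is satisfied.
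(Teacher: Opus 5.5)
Your proposal is correct and follows the paper's route. The paper gives no separate proof: it says only that the theorem follows directly from Lemma~\ref{early} (start at $c^*_1$) and Lemma~\ref{cont} (follow $\pi$ without interruption), with the tail replaced by $SIP$ and the departure node optimized. That is exactly your three-stage exchange argument, and you write it out with more care about synchronization, explicit waiting, and time-shift invariance.

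Your added rationality hypothesis is the one the paper itself imposes when it uses this theorem in the proof of Theorem~\ref{theorem:pne_equivalance}. The conclusion of your Step 2 holds even without it. When $a_1$ follows $\pi$ synchronized from $c^*_1$, every delay of $a_2$ is at its floor $\tau^2_v$ with zero waiting, so $a_2$'s arrival at $d'$ under $\pi'$ is no earlier than under following. Hence the cooperative departure $\max(t'^1,t'^2)+\tau^2_{d'}$ under $\pi'$, where $t'^1,t'^2$ are the two agents' arrival times at $d'$, is no earlier either.
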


Intuitively, \(a_1\) first follows the shortest path to the earliest node at which cooperation with \(a_2\) is feasible, then continues jointly with \(a_2\) along \(\pi\), leveraging additional cooperation opportunities along the way. Finally, \(a_1\) departs from \(a_2\) only once, at node \(d\), and follows its shortest independent path to \(g_1\).

Thus, given that \(a_2\)'s strategy \(\pi\) is fixed and known in advance, the best response of \(a_1\) is one of the following:
\textbf{(a)} an \textbf{Independent Shortest Path:}
    The path $SIP_{s_1,g_1}$ from ${s_1}$ to ${g_1}$ without cooperation, computed by a shortest-path algorithm with node weights $\tau^1_v$;  or \textbf{(b)} a \textbf{Cooperation-Assisted Path:}
    The path $SIP_{s_1,c^*_1} \circ \pi_{c^*_1,d} \circ SIP_{d,g_1}$, where \(c^*_1\) and \(d\) denote the earliest cooperation node and the optimal departure node, respectively, as defined in Theorem~\ref{cor1}. 
    The cooperation-assisted path can be constructed efficiently as follows:
    \begin{enumerate}
        \item Calculate all shortest paths from ${s_1}$ to all $v \in V_C$.
        \item Identify the first cooperation node in $\pi$ that $a_1$ can reach at a cooperation-relevant time, denoted as ${c^*_1}$.
        \item Compute the shortest paths from every node along $\pi$ to ${g_1}$.
        \item Determine the optimal node $d$ along $\pi$ for $a_1$ to leave the cooperation, minimizing the path time $T_{c^*_1,g_1}(\pi_{c^*_1,d} \circ SIP_{d,g_1}, \pi)$:
        $
        d=\arg\min_{d \in \pi_{c^*_1,g_2}} T_{c^*_1,g_1}(\pi_{c^*_1,d} \circ SIP_{d,g_1}|\pi)
        $
    \end{enumerate}
It is straightforward to verify that this procedure operates in polynomial time.
Algorithm~\ref{algorithm:best-response} (\textsc{Best Response Path}), provided in Appendix~\ref{app:br-algo}, formalizes the procedure for computing the best response of one agent to a fixed strategy of the other.

\subsubsection{Cooperation Stability}\label{section:stability}
From Lemma~\ref{cont} it follows that the cooperation segment consists of a single continuous cooperation path from the cooperation starting node to the cooperation departure node. However, this does not imply that the segment should simply coincide with the shortest cooperative path \(SCP_{c_s,c_d}\). Since the agents have different individual objectives, their preferred departure nodes may differ. Consequently, an agent may benefit from leaving the cooperation earlier and continuing independently toward its target.
If an agent benefits from leaving cooperation before \( {c_d} \), such a deviation violates the Nash equilibrium conditions.
Moreover, anticipating this, the other agent, may decide to depart even earlier to optimize its own departure node. This process could repeat multiple times, potentially resulting in suboptimal outcomes for both agents (see example in Figure \ref{fig:retreats}). 
\begin{figure}[ht]
    \centering
    \includegraphics[width=0.4 \linewidth]{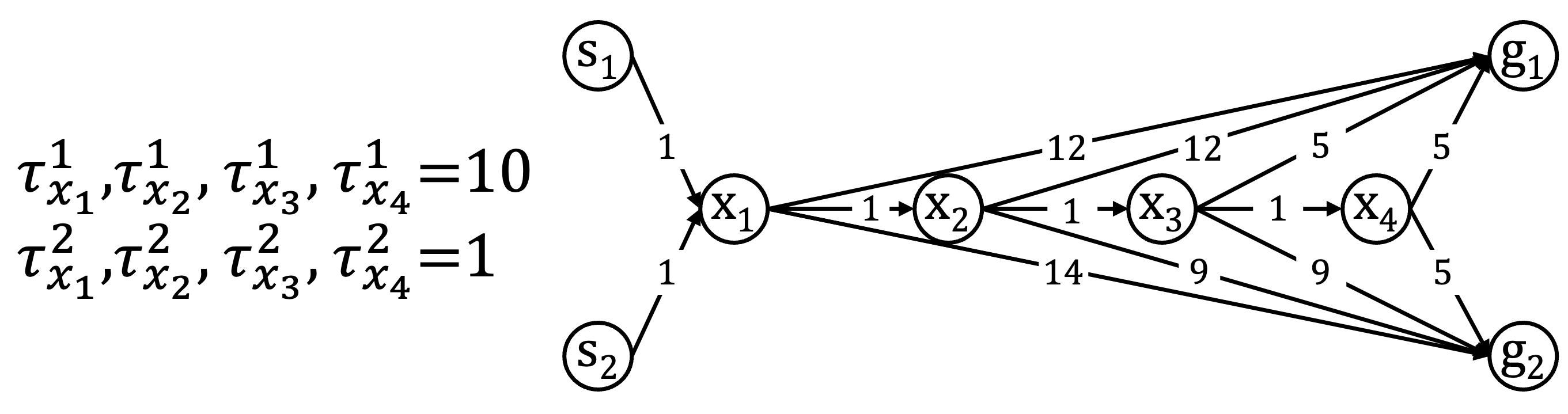}
    \caption{The optimal departure node for $a_2$ along the path $x_1, x_2, x_3, x_4$ is $x_4$, while for $a_1$, it is $x_3$. Along the path $x_1, x_2, x_3$, the optimal departure node for $a_2$ is $x_2$, and for the path $x_1, x_2$, the optimal departure node for $a_1$ is $x_1$. Thus, the only PNE in this scenario is the path containing only $x_1$.}
    \label{fig:retreats}
\end{figure}
Alternatively, an agent may enforce a detour within the cooperation segment to avoid nodes at which the other agent could deviate prematurely, thereby preventing outcomes that would result in an inferior path time. Figure~\ref{fig:suboptimal} illustrates such a scenario.
\begin{figure}[ht]
    \centering
    \includegraphics[width=0.4\linewidth]{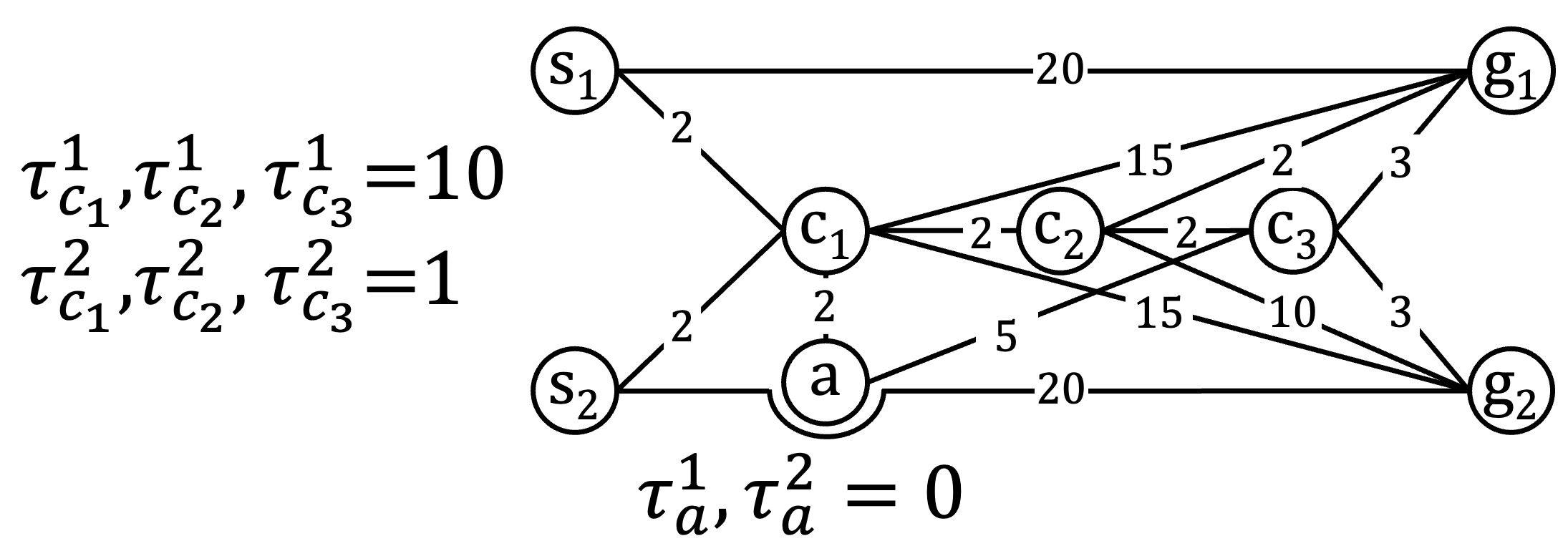}
    \caption{A PNE exists from \({c_1}\) to \({c_3}\), but not when both agents follow \(SCP_{c_1,c_3}\), since \(a_1\) prefers to deviate at \({c_2}\). Equilibrium is obtained when both instead follow \({c_1}, a, {c_3}\).}
    \label{fig:suboptimal}
\end{figure}
Therefore we make the following vital definition: 
\begin{definition}[Stable Cooperation Partial Path]\label{def:stable-cooperation}
A cooperation partial path $\pi_{c_s,c_d}$ between cooperation nodes $c_s, c_d \in V_C$ is a {\em stable cooperation partial path} if the departure node ${c_d}$ is the optimal departure node for both agents along the path.
Formally, let $v_{d^*}^i(\pi_{c_s,c_d})$ denote the optimal departure node for agent $a_i$ along 
$\pi_{c_s,c_d} \in \Pi_{c_s,c_d}$, defined as
\[
v_{d^*}^i(\pi_{c_s,c_d})=\arg\min_{c \in \pi_{c_s,c_d}}T\!\left(\pi_{c_s,c} \circ SIP_{c,g_i}\;\middle|\;\pi_{c_s,c_d} \circ SIP_{c_d,g_{-i}}\right)
\]
then,
$c_d = v_{d^*}^1(\pi_{c_s,c_d}) = v_{d^*}^2(\pi_{c_s,c_d})$.

\end{definition}
This definition implies that once cooperation begins at node ${c_s}$, both agents would prefer to continue cooperating along $\pi_{c_s,c_d}$ rather than deviating earlier towards their respective target nodes. 
The set of all stable cooperation partial paths between two nodes ${v}$ and ${u}$ is denoted by $\Pi^S_{v,u}$, and the {\em shortest} stable cooperation partial path between these nodes is denoted by $SCP^S_{v,u}$.

The stability of a partial path between two cooperation nodes $c_1$ and $c_2$ depends solely on the path times from $c_1$ onward and is independent of the path segment leading to $c_1$. 
Leveraging this property, the shortest stable cooperation partial paths ending at a cooperation node $c \in V_C$
can be computed efficiently via a backward variant of Dijkstra’s algorithm (Algorithm~\ref{algorithm:shortest-stable-partial-paths})
which explores the graph backward from $c$ while excluding nodes that violate stability.

Algorithm~\ref{algorithm:shortest-stable-partial-paths} initializes all paths to \(c_d\) as empty with infinite time, except \(c_d\) itself, which is set to time \(0\) with a trivial path, and defines \(Q \gets V\) as the set of unvisited nodes [lines 1–3]. The algorithm then evaluates all unvisited nodes \(v\) that can reach \(c_d\) in ascending order of their distance to \(c_d\) [lines 4-10]. 
For each node \(v\), only its neighbors that preserve path stability are considered as its neighbors [lines 7-8]. If the shortest stable cooperation path from a neighbor \(u\) of \(v\) via \(v\) to $c_d$ is shorter than its current path, the algorithm updates the path and its associated time [lines 9-11]. 
Once all reachable nodes have been evaluated, the algorithm returns a mapping of each node in the graph to its corresponding \emph{shortest} stable cooperation partial path to \(c_d\) [line 12].
\begin{algorithm}[ht] 
\caption{\textsc{Shortest Stable Paths($G, V_C, c_d, SIP_{g_1}, SIP_{g_2}$)}}
\label{algorithm:shortest-stable-partial-paths}
\begin{algorithmic}[1]
\State $T_{v,c_d} \gets \infty, \pi_{v,c_d} \gets \emptyset$ for all $v \in V$
\State $T_{c_d,c_d} \gets 0, \pi_{c_d,c_d} \gets c_d$
\State $Q \gets V$
\While{$Q$ has a node $v$ s.t. $T_{v,c_d}<\infty$}
    \State $v \gets$ node in $Q$ with smallest $T_{v,c_d}$
    \State Remove $v$ from $Q$
       
    \For{each neighbor $u$ of $v$ s.t. $u \in Q$ and $\tau_{u,v}+\tau^2_{v}+T_{v,c_d} + SIP_{c_d,g_1} \leq SIP_{u,g_1}$
   \State \hspace{12.5em} and
   $\tau_{u,v}+\tau^2_{v}+T_{v,c_d} + SIP_{c_d,g_2} \leq SIP_{u,g_2}$}
        \If{$T{v,c_d} + \tau^2_{v} + \tau_{u,v} < T{u,c_d}$}
            \State $T{u,c_d} \gets T{v,c_d} + \tau^2_{v} + \tau_{u,v}$
            \State $\pi_{u,c_d} \gets u \circ \pi_{v,c_d}$
        \EndIf
    \EndFor
\EndWhile
\State \textbf{Return} a dictionary from each $v \in V$ to $\pi_{v,c_d}$
\end{algorithmic}
\end{algorithm}
\begin{lemma}\label{lemma:stable-cooperation-algorithm}
    Given a cooperation ending node $c_d$, Algorithm~\ref{algorithm:shortest-stable-partial-paths} computes the \textbf{shortest} \emph{Stable Cooperation Partial Path} concluding cooperation at $c_d$ from every node in the graph in polynomial time.
\end{lemma}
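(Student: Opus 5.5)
The plan is to treat Algorithm~\ref{algorithm:shortest-stable-partial-paths} as Dijkstra's algorithm run on the reversed graph, with cooperative node weights $\tau^2_v$, over a subgraph pruned on the fly. The argument has three parts. First, stability of a path ending at $c_d$ reduces to a local test at each predecessor. Second, this test is exactly the one the algorithm checks on line~7. Third, the usual Dijkstra invariant survives the pruning.

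\textbf{Step 1: reduce stability to a per-node condition.} Fix $c_d$ and a candidate cooperation path $\pi_{u,c_d}$ that both agents traverse together. Since they depart each node together, every cooperation node on the path is passed with delay $\tau^2$. By Definition~\ref{def:stable-cooperation}, the path is stable iff, for each agent $a_i$ and each node $x$ on it, leaving at $c_d$ is no worse than leaving at $x$:
\[
T_{x,c_d}(\pi)+T(SIP_{c_d,g_i}) \le T(SIP_{x,g_i}).
\]
Here the prefix $\pi_{u,x}$ cancels from both sides, and $T_{x,c_d}(\pi)$ is the cooperative time computed with $\tau^2$ weights. The key observation is that this condition concerns only the suffix from $x$. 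Hence a path $u\circ\pi_{v,c_d}$ is stable iff $\pi_{v,c_d}$ is stable and the two inequalities hold at $x=u$. For $x=u$, these are precisely the two conditions tested on line~7, because $T_{u,c_d}=\tau_{u,v}+\tau^2_v+T_{v,c_d}$. So stable paths are built up suffix by suffix, which matches the algorithm's backward construction.

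\textbf{Step 2: prove the Dijkstra invariant.} When $v$ is popped, $T_{v,c_d}$ should equal the time of the shortest stable path from $v$ to $c_d$, and $\pi_{v,c_d}$ should be such a path. I would argue by induction on pop order, in the standard style. Suppose $u$ has a shortest stable path $P=u,v',\dots,c_d$. Its suffix from $v'$ is stable, and it is strictly shorter because edge times are positive. If every stable suffix of the same length is interchangeable, we may assume $v'$ was popped before $u$ with its optimal value. Then, when $v'$ was popped, the relaxation of $u$ was allowed, since the stability test depends only on the length $T_{v',c_d}$. It therefore set $T_{u,c_d}\le T(P)$.

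\textbf{Main obstacle.} The delicate point is this exchange. The predecessor test is monotone in $T_{v',c_d}$: a shorter stable suffix makes both inequalities easier to satisfy. Therefore replacing $P$'s suffix by the algorithm's shortest stable suffix from $v'$ keeps the path stable and does not lengthen it. This monotonicity is the crux, because it justifies keeping only one label per node, and I would state it explicitly. The condition $u\in Q$ is harmless, since an already-popped node has a label no larger than anything reachable through a later relaxation.

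\textbf{Step 3: complexity.} Each node is popped once and each edge is relaxed at most twice. The stability check costs $O(1)$ given precomputed $SIP_{\cdot,g_1}$ and $SIP_{\cdot,g_2}$ tables, which themselves take two Dijkstra runs. The total is $O(|E|+|V|\log|V|)$, which is polynomial.
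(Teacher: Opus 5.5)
Your proposal is correct. It follows the same overall route as the paper: Dijkstra on the reversed graph with $\tau^2$ node weights and a filtered relaxation. The justification you give for the non-standard step, however, differs from the paper's and is more complete.

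The paper's single ``adjustment'' is that the line-7 filter is final. Once $v$ leaves $Q$, the value $\tau_{u,v}+\tau^2_v+T_{v,c_d}$ no longer changes, so the admissible predecessors of $v$ are fixed. Everything else is delegated to Dijkstra's optimality. That makes the filter well defined, but by itself it does not explain why keeping a single (shortest) stable suffix per node loses nothing. A filter that rejected the shortest stable suffix from $v$ while accepting a longer one would be just as ``final'', yet it would break the label-setting argument.

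Your two ingredients close exactly this point. Suffix-locality makes precise the paper's remark that stability depends only on the path from a node onward, and it shows that line 7 characterizes exactly the one-step stable extensions. Monotonicity of the test in $T_{v,c_d}$ is precisely what licenses exchanging an optimal path's suffix for the stored one. The paper's version buys brevity; yours buys a self-contained proof.

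Two small points for the write-up. First, phrase the induction in the standard way, either on the optimal stable distance or via the first node of an optimal path still in $Q$ when $u$ is popped, rather than ``we may assume $v'$ was popped before $u$''. Second, read $T_{x,c_d}$ in your Step~1 as the algorithm's label, which (with $T_{c_d,c_d}=0$) includes the delay $\tau^2_{c_d}$ at $c_d$, as the definition of stability requires. It should not be read as the paper's endpoint-excluding $T_{x,c_d}$.
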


Similar to Lemma \ref{lemma:non-cooperative-algorithm}, the correctness of Lemma \ref{lemma:stable-cooperation-algorithm} follows from the optimality of Dijkstra's algorithm, with one adjustment. Since the neighbors of a node are filtered to maintain stability (line 7), we must also ensure that when a node $v$ is pulled from $Q$, its set of stable neighbors, as filtered in line 7, is final and reflects all and only the stable neighbors of $v$. Specifically, we need to ensure that once node $v$ is removed from $Q$, for all its neighbors $u \in V$ such that $(v,u) \in E$, the value $\tau_{u,v} + \tau^2_{v} + T_{v, c_d}$ is fixed and will not change from that point on. 
The only element in this value that can change during the algorithm's execution is $T_{v, c_d}$, but once $v$ is pulled from $Q$, its value remains fixed.
Thus, the path times and stability conditions are correctly computed and finalized for each node upon removal from $Q$, ensuring the overall optimality of the algorithm.
Similar to Dijkstra's algorithm, the complexity of Algorithm \ref{algorithm:shortest-stable-partial-paths} is $\mathcal{O}(|E| + |V| \log |V|)$.\\

Following Definition~\ref{def:stable-cooperation}, to eliminate unilateral deviations within the cooperation segment, this segment must form Stable Cooperation Partial Path.

\subsection{Departure segment}
In this segment, since cooperation ceases and both agents aim to reach their respective targets as quickly as possible, each agent will necessarily follow its shortest independent path towards its respective target: $SIP_{c_d, g_1}$ and $SIP_{c_d, g_2}$. However, if the shortest independent path of one agent, (w.l.o.g) $a_1$,  includes a cooperation node ${c_d'}$ that serves as a better departure point for $a_2$ than ${c_d}$, then $a_2$ would prefer to continue cooperating along $SIP_{c_d,c_d'}$ instead of taking its independent shortest path directly to its target. This violates the Nash equilibrium property.  
Thus, the cooperation departure node ${c_d}$ must also be the optimal exit node for both agents along the departure segment $(SIP_{c_d, g_1},SIP_{c_d, g_2})$.

\subsection{Full Path Equilibrium Cooperation}
To integrate the above findings into a full-path structural characterization of cooperative joint strategies, we identify five conditions that a cooperative joint strategy must satisfy, proven to be both necessary and sufficient for it to constitute a Nash equilibrium. This characterization applies only to joint strategies that involve cooperation.
The \emph{independent joint strategy} \((SIP_{s_1,g_1}, SIP_{s_2,g_2})\), which is the optimal joint strategy involving no cooperation, may also constitute a PNE and is treated separately in Section~\ref{section:independentPNE}.

\begin{definition}[Equilibrium Cooperation Joint Strategy]
\label{def:equilibrium_cooperation_joint_strategy}
A cooperation joint strategy \((\pi^1,\pi^2)\) is called an
\emph{Equilibrium Cooperation Joint Strategy} (ECJS) if it satisfies the following conditions:
    \begin{enumerate}
        \item \label{item:cond-stable} The agents cooperate along exactly one cooperation segment, $\pi^S_{c_s, c_d}\in \Pi^S_{c_s, c_d}$, which is stable.
       \item \label{item:cond-non-coop} Both agents reach the cooperation start node ${c_s}$ via
    Mutually-Robust non-cooperative partial paths
    $\pi^{1(NC)}_{s_1,c_s} \in \Pi^{NC}_{s_1,c_s}$ and
    $\pi^{2(NC)}_{s_2,c_s} \in \Pi^{NC}_{s_2,c_s}$.
        \item \label{item:cond-departure} The cooperation end node ${c_d}$ is jointly optimal for departure:
    ${c_d} = v^{1}_{d^*}(\pi^2_{c_s,g_2}) = v^{2}_{d^*}(\pi^1_{c_s,g_1})$.
\item \label{item:cond-shortest-departure}
From ${c_d}$ onward, both agents follow their shortest independent
    paths to their targets, $SIP_{c_d,g_1}$ and $SIP_{c_d,g_2}$.
    \item \label{item:cond-cooperation-over-independent}
    Cooperation weakly dominates the shortest independent path for both agents:
    $
    T(\pi^1 \mid \pi^2) \le T(SIP_{s_1,g_1}), \quad$
    $T(\pi^2 \mid \pi^1) \le T(SIP_{s_2,g_2})$.
    \end{enumerate}
\end{definition}
Consequently, $(\pi^1, \pi^2)$ can be formally expressed as follows:
\[
\pi^1 = \pi^{1(NC)}_{s_1, c_s} \circ \pi^S_{c_s, c_d} \circ SIP_{c_d, g_1}, \quad 
\pi^2 = \pi^{2(NC)}_{s_2, c_s} \circ \pi^S_{c_s, c_d} \circ SIP_{c_d, g_2}
\]

Condition \ref{item:cond-stable} ensures the validity of the Cooperation segment, while Condition \ref{item:cond-non-coop} guarantees the validity of the Joining segment. Conditions \ref{item:cond-departure} and \ref{item:cond-shortest-departure} verify the correctness of the Departure segment. Finally, Condition \ref{item:cond-cooperation-over-independent} ensures that the entire path is advantageous for both agents, confirming the overall benefit of the cooperation.
We denote the set of all ECJS by \(\mathbb{ECJS}\). Let \(\mathbb{ECJS}^{c_d}\) denote the set of ECJS that terminate cooperation at \(c_d \in V_C\), and let \(\mathbb{ECJS}^{c_s,c_d}\) denote the set of ECJS that initiate cooperation at \(c_s \in V_C\) and terminate it at \(c_d\).

\begin{theorem}\label{theorem:pne_equivalance}
A cooperation joint strategy constitutes a PNE\footnote{We assume that when faced with two strategies yielding the same path time, an agent will prefer the one that starts cooperation earlier and involves a single continuous cooperation segment.} if and only if it is an ECJS.
\end{theorem}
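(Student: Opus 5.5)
The proof splits into two directions, both driven by the best-response structure of Theorem~\ref{cor1}. By that theorem, fixing $\pi^{-i}$, every best response of $a_i$ is either $SIP_{s_i,g_i}$ or a cooperation-assisted path
\[
SIP_{s_i,c^*}\circ\pi^{-i}_{c^*,d}\circ SIP_{d,g_i},
\]
where $c^*$ is the earliest node of $\pi^{-i}$ reachable at a cooperation-relevant time and $d$ is the optimal departure node. The footnoted tie-breaking rule lets us assume an agent picks this canonical form whenever it ties. So a joint strategy $(\pi^1,\pi^2)$ is a PNE iff each $\pi^i$ equals, or ties with, the canonical best response to $\pi^{-i}$. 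The whole argument reduces to translating this fixed-point property into Conditions~\ref{item:cond-stable}--\ref{item:cond-cooperation-over-independent}.

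\textbf{Necessity (PNE $\Rightarrow$ ECJS).} Let $(\pi^1,\pi^2)$ be a cooperative PNE.

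1. Since $\pi^1$ is canonical with respect to $\pi^2$, cooperation is a single contiguous segment $c_s\ldots c_d$ along $\pi^2$. It is followed by $SIP_{c_d,g_1}$, and symmetrically by $SIP_{c_d,g_2}$. This gives Conditions~\ref{item:cond-stable} (single segment) and \ref{item:cond-shortest-departure}.
2. Condition~\ref{item:cond-cooperation-over-independent} holds because $SIP_{s_i,g_i}$ is always an available deviation.
3. Condition~\ref{item:cond-departure} holds because $d$ in Theorem~\ref{cor1} is an argmin over the whole of $\pi^{-i}$, which by step 1 is $\pi^{-i}_{c_s,g_{-i}}$.
4. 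Stability of the segment follows from Condition~\ref{item:cond-departure}: the candidate departure nodes inside $\pi_{c_s,c_d}$ form a subset of $\pi^{-i}$, so $c_d$ is also the argmin restricted to the segment.
5. For Condition~\ref{item:cond-non-coop}, the NC property comes from Lemma~\ref{early}. If some earlier node $c$ of $\pi^1_{s_1,c_s}$ satisfied $T(SIP_{s_2,c})\le T(\pi^1_{s_1,c})+\tau^1_c-\tau^2_c$, then $c^*$ for $a_2$ would precede $c_s$, contradicting canonicity. Mutual robustness follows from the arrival-time deviation: if the later agent $a_2$ arrives strictly later than $T(SIP_{s_2,c_s})$, replacing its prefix by $SIP_{s_2,c_s}$ strictly advances the common departure from $c_s$, $\max(t_1,t_2)+\tau^2_{c_s}$. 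It then keeps the same suffix with every later arrival weakly earlier, which is a strict improvement whenever $t_2>t_1$. This needs a monotonicity claim: earlier joint departure from $c_s$ along a fixed suffix never hurts. I will isolate it as a small lemma, proved by induction along the suffix.

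\textbf{Sufficiency (ECJS $\Rightarrow$ PNE).} Fix $\pi^2$ and compute $a_1$'s canonical best response by Theorem~\ref{cor1}.

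1. By the NC property of $\pi^2_{s_2,c_s}$, no cooperation node of $\pi^2$ before $c_s$ is reachable in cooperation-relevant time, so $c^*=c_s$.
2. By mutual robustness, $a_1$ cannot make cooperation at $c_s$ start earlier. Either it is the later agent already on its SIP, or arrival was simultaneous. Hence its cost up to the joint departure from $c_s$ is already optimal.
3. By Condition~\ref{item:cond-departure}, the optimal departure node along $\pi^2$ is $c_d$, followed by $SIP_{c_d,g_1}$ (Condition~\ref{item:cond-shortest-departure}), so the canonical cooperative response coincides with $\pi^1$ in cost.
4. By Condition~\ref{item:cond-cooperation-over-independent}, $\pi^1$ weakly beats the independent response. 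The case for $a_2$ is symmetric.

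\textbf{Main obstacle.} The hard step is step 2 of sufficiency, together with the monotonicity lemma. Theorem~\ref{cor1} prescribes $SIP_{s_1,c^*}$ as the joining path. But under mutual robustness the earlier-arriving agent may use a non-shortest NC prefix, and I must argue this costs nothing because its departure from $c_s$ is dictated by the other agent's arrival. There is also the subtlety that changing $a_1$'s prefix could create cooperation opportunities on $\pi^2_{s_2,c_s}$. These are excluded by the NC condition stated with $SIP_{s_1,\cdot}$: any arrival at such a node is at least $T(SIP_{s_1,c})$, which is too late.
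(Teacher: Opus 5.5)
Your proposal is correct and follows essentially the same route as the paper. Sufficiency goes through Theorem~\ref{cor1}, with $c^*=c_s$ from the NC condition and the two mutual-robustness cases. Necessity checks each ECJS condition via the same deviations: Lemma~\ref{cont} plus tie-breaking for a single segment, Lemma~\ref{early} for the NC property, and then the arrival-time, departure-node, departure-path and independent-path deviations. The differences are minor. You obtain stability by restricting the Condition~\ref{item:cond-departure} argmin to the nodes of $\pi_{c_s,c_d}$ instead of by a separate deviation. You also state explicitly the translation-invariance of the common suffix, which the paper uses implicitly when it writes $T(\pi^{1'}_{s_1,g_1}\mid\pi^2)=t'+\tau^2_{c_s}+T(\pi^1_{c_s,g_1}\mid\pi^2_{c_s,g_2})$.
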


The equivalence is proved formally in Appendix~\ref{app:theorem:pne_equivalance}. We outline the main argument below.
\begin{proofsketch}
\textbf{($\Leftarrow$)} Assume that \((\pi^1,\pi^2)\) is an ECJS. By definition, the joint strategy satisfies the structural conditions ensuring that each agent minimizes its path time given the other agent's strategy. Hence, no agent can strictly improve by unilaterally deviating, and \((\pi^1,\pi^2)\) constitutes a PNE.

\textbf{($\Rightarrow$)} Assume that \((\pi^1,\pi^2) \ne (SIP_{s_1,g_1}, SIP_{s_2,g_2})\) is a cooperation PNE. We show that any violation of the ECJS conditions induces a profitable unilateral deviation for at least one agent, contradicting the PNE property. Therefore, all ECJS conditions must hold, and \((\pi^1,\pi^2)\) is an ECJS.
\end{proofsketch}

Further analysis of the ECJS structure yields the insight:
\subsubsection{Pareto Sub-optimality}
    The optimal ECJS in $\mathbb{ECJS}^{c_s,c_d}$ is not necessarily Pareto optimal. Specifically, since $T(SIP^{NC}_{s_i,c_s}) \geq T(SIP_{s_i,c_s})$ and $T(SCP^S_{c_s,c_d} \mid SCP^S_{c_s,c_d}) \geq T(SCP_{c_s,c_d} \mid SCP_{c_s,c_d})$, the full-path joint strategy
    $
    \left( SIP^{NC}_{s_1,c_s} \circ SCP^S_{c_s,c_d} \circ SIP_{c_d,g_1}, SIP^{NC}_{s_2,c_s} \circ SCP^S_{c_s,c_d} \circ SIP_{c_d,g_2} \right)
    $
    may be suboptimal compared to
$
(SIP_{s_1,c_s} \circ SCP_{c_s,c_d} \circ SIP_{c_d,g_1},\ SIP_{s_2,c_s} \circ SCP_{c_s,c_d} \circ SIP_{c_d,g_2}),
$
indicating that the optimal joint strategy in $\mathbb{ECJS}^{c_d}$ is not necessarily Pareto optimal.
Figure~\ref{fig:suboptimal} illustrates such a scenario.
In this example, the game presents three PNE: $((s_1, g_1), (s_2, g_2))$,\\
$((s_1, c_1, c_2, g_1)$, $(s_2, c_1, c_2, g_2))$, and $((s_1, c_1, a, c_3, g_1), (s_2, c_1, a, c_3, g_2))$. Among the three PNE, the joint strategy $((s_1, c_1, a, c_3, g_1), (s_2, c_1, a, c_3, g_2))$ minimizes the path time of $a_2$, though it remains suboptimal compared to the joint strategy where both agents follow the shortest path between $v_{c_1}$ and $v_{c_3}$: $((s_1, c_1, c_2, c_3, g_1), (s_2, c_1, c_2, c_3, g_2))$.

\subsection{Equilibrium in Independent Joint Strategies}\label{section:independentPNE}
A joint strategy with no cooperation can constitute a PNE only if both agents follow their shortest independent paths and neither agent has a profitable cooperation-involving deviation.
As shown in Section~\ref{sec:coop-brp}, an agent's best response to a fixed strategy of the other agent can be computed in polynomial time. Comparing this response with the agent's shortest independent path allows us to determine whether a profitable cooperation-involving deviation exists, or whether the shortest independent path is itself the agent's  best response. Applying this test to both agents therefore enables us to efficiently verify whether
$
(SIP_{s_1,g_1}, SIP_{s_2,g_2})
$
constitutes a PNE.

\subsection{PNE Joint Strategy Set}
We show that the number of ECJS that must be considered is at most linear in $m = |V_C|$, and that this set is guaranteed to be non-empty.

\subsubsection{Linear Upper Bound}
Since all joint strategies in $\mathbb{ECJS}^{c_d}$ involve the simultaneous departure of the agents from ${c_d}$, and their path times from $c_d$ to their respective target nodes remain constant, there exists one (or a few equivalent) joint strategy in $\mathbb{ECJS}^{c_d}$ that arrives earliest at $c_d$ and dominates all others. Accordingly, our approach seeks to identify this optimal joint strategy for each cooperation node $c_d \in V_C$, thereby reducing the number of relevant cooperation paths to be linear in $m$.
This task reduces to identifying the optimal cooperation starting node $c_{s^*}$ and the joint strategy in $\mathbb{ECJS}^{c_{s^*},c_d}$ that minimizes the arrival time at $c_d$. For a given cooperation starting node $c_s$, the optimal joint strategy in $\mathbb{ECJS}^{c_{s},c_d}$ is obtained when both agents follow their shortest non-cooperative partial paths to $c_s$ ($SIP^{NC}_{s_1,c_s}$, $SIP^{NC}_{s_2,c_s}$), with arrival times synchronized to ensure mutual robustness, and then jointly follow the shortest stable cooperation partial path $SCP^S_{c_s,c_d}$.

Consequently, if $\mathbb{ECJS}^{c_s,c_d}\neq\emptyset$, its optimal element is $(SIP^{NC}_{s_1,c_s} \circ SCP^S_{c_s,c_d} \circ SIP_{c_d,g_1}, SIP^{NC}_{s_2,c_s} \circ SCP^S_{c_s,c_d} \circ SIP_{c_d,g_2})$.
Finding the optimal joint strategy in $\mathrm{ECJS}_{c_d}$ therefore reduces to finding the optimal cooperation starting node ${c_s^*}$ minimizing the arrival time at ${c_d}$:
\begin{equation}
\begin{split}
    {c_s^*} = \arg \min_{{c_s} \in V_C} \big\{ & \max\big(T(SIP^{NC}_{s_1,c_s}), T(SIP^{NC}_{s_2,c_s})\big) \\
    & + \tau^2_{c_s} + T(SCP^S_{c_s,c_d} \mid SCP^S_{c_s,c_d}) \big\}
\end{split}
\label{eq:armin}
\end{equation}
In Section \ref{section:algorithms}, we use this characterization to introduce an efficient algorithm to map all ECJS an agent must consider when determining its strategy by examining all cooperation nodes $c \in V_C$, treating each as a potential cooperation \textbf{ending} node and identifying the optimal ECJS concluding cooperation at that node. 
However, not all cooperation nodes need to be considered in this process.  
The following lemma, proved in Appendix~\ref{app:lem:pruning}, shows that the concatenation of two stable cooperation paths remains stable, and that the resulting longer stable path necessarily dominates its shorter prefix. Consequently, partial cooperation paths can be ignored.
\begin{lemma} \label{lemma:strategy-pruning-claim}
Let \(c_{d_1},c_{d_2} \in V_C\) be two distinct cooperation nodes. 
Suppose there exists a stable cooperation partial path from \(c_{d_1}\) to \(c_{d_2}\), denoted 
\(\pi^S_{c_{d_1},c_{d_2}}\). 
Then, for any stable cooperation partial path \(\pi_{c_s,c_{d_1}}\) from \(c_s\) to \(c_{d_1}\), the concatenated path
\[
\pi'_{c_s,c_{d_2}}
=
\pi_{c_s,c_{d_1}} \circ \pi^S_{c_{d_1},c_{d_2}}
\]
is also stable. Moreover, the optimal ECJS whose cooperation segment is \(\pi'_{c_s,c_{d_2}}\) dominates the optimal ECJS whose cooperation segment is \(\pi_{c_s,c_{d_1}}\). That is,
\begin{align*}
\forall i \in \{1,2\}, \quad
&
T\!\left(
    \pi_{c_s,c_{d_1}} \circ SIP_{c_{d_1},g_i}
    \mid
    \pi_{c_s,c_{d_1}} \circ SIP_{c_{d_1},g_{-i}}
\right)
\\
&\geq
T\!\left(
    \pi'_{c_s,c_{d_2}} \circ SIP_{c_{d_2},g_i}
    \mid
    \pi'_{c_s,c_{d_2}} \circ SIP_{c_{d_2},g_{-i}}
\right).
\end{align*}
\end{lemma}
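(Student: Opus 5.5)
We prove the two claims in turn: stability of the concatenation first, then domination. Throughout, we use that inside a cooperation segment both agents move together and arrive simultaneously at every node. Hence, for any node \(c\) on the segment, the time of the exit strategy ``cooperate to \(c\), then take \(SIP_{c,g_i}\)'' splits into the common arrival time at \(c\) plus \(\tau^2_c\) plus \(T(SIP_{c,g_i})\). Write \(A(c)\) for the joint arrival time at \(c\) measured from \(c_s\), and
\[
f_i(c)=A(c)+\tau^2_c+T(SIP_{c,g_i})
\]
for the exit cost. In this notation, Definition~\ref{def:stable-cooperation} says that \(c_d\) minimises \(f_i\) over the segment for both \(i\).

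\textbf{Stability of \(\pi'\).} Take any \(c\in\pi'_{c_s,c_{d_2}}\); we show \(f_i(c_{d_2})\le f_i(c)\) for both agents, splitting by where \(c\) lies.

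If \(c\) lies on \(\pi^S_{c_{d_1},c_{d_2}}\), both \(A(c)\) and \(A(c_{d_2})\) equal \(A(c_{d_1})\) plus the corresponding arrival time measured from \(c_{d_1}\). Since arrival times along the joint segment are additive, stability of \(\pi^S_{c_{d_1},c_{d_2}}\) gives \(f_i(c_{d_2})\le f_i(c)\) directly, after adding the constant \(A(c_{d_1})\) to both sides.

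If instead \(c\) lies on \(\pi_{c_s,c_{d_1}}\), we chain two inequalities. Stability of \(\pi_{c_s,c_{d_1}}\) gives \(f_i(c_{d_1})\le f_i(c)\). Applying the first case with \(c=c_{d_1}\), the starting node of \(\pi^S\), gives \(f_i(c_{d_2})\le f_i(c_{d_1})\). Combining yields \(f_i(c_{d_2})\le f_i(c)\). Hence \(c_{d_2}\) is optimal for both agents on \(\pi'\), and \(\pi'\) is stable.

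\textbf{Domination.} The left-hand side of the displayed inequality is exactly \(f_i(c_{d_1})\) computed on \(\pi_{c_s,c_{d_1}}\), plus the common joining prefix. The right-hand side is \(f_i(c_{d_2})\) computed on \(\pi'\), plus the same prefix. The agents reach \(c_s\) by the same (optimal) joining segments in both cases, because the joining segment depends only on \(c_s\) and not on what happens after it. So the desired inequality reduces to \(f_i(c_{d_2})\le f_i(c_{d_1})\), which we already obtained in the first case above. This comparison of full path times carries over to the optimal ECJS with cooperation segment \(\pi'\) versus the one with segment \(\pi_{c_s,c_{d_1}}\), since both share the same optimal joining segments to \(c_s\).

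\textbf{Main obstacle.} The delicate point is the additivity of \(A\). We must check that the joint time along \(\pi'\) up to a node of \(\pi^S\) equals the time to \(c_{d_1}\), plus \(\tau^2_{c_{d_1}}\), plus the time along \(\pi^S\), i.e.\ that the delay at the junction node \(c_{d_1}\) is counted exactly once. The convention in the definition of \(T\) excludes the delays at the first and last nodes, so the junction delay must be added explicitly when concatenating; \(c_{d_1}\) is a cooperation node at which the agents arrive simultaneously, so that delay is \(\tau^2_{c_{d_1}}\). A second subtlety is that the definition evaluates \(v^i_{d^*}\) against the other agent following the segment followed by its own \(SIP\). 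Before the segment's end node both agents are on the common path and arrive simultaneously, so the opponent's continuation does not affect \(A(c)\) for such \(c\); the definition of \(f_i\) is therefore independent of that continuation.
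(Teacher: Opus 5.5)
Your proof is correct and takes essentially the paper's route for the domination part: split both path times at the junction $c_{d_1}$ into the common prefix $T(\pi_{c_s,c_{d_1}}\mid\pi_{c_s,c_{d_1}})+\tau^2_{c_{d_1}}$ plus a tail, then compare the tails using stability of $\pi^S_{c_{d_1},c_{d_2}}$ at its first node. The paper additionally uses $T(SIP_{c_{d_1},g_i}\mid SIP_{c_{d_1},g_{-i}})\ge T(SIP_{c_{d_1},g_i}\mid \pi^S_{c_{d_1},c_{d_2}}\circ SIP_{c_{d_2},g_{-i}})$ to absorb the change in the other agent's continuation, which your opponent-independent $f_i$ builds in, and your two-case chaining argument for stability of $\pi'$ supplies a justification the paper only asserts in one line.
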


The following corollary follows directly from Lemma~\ref{lemma:strategy-pruning-claim}. Since the segment $SIP_{c_d,c_{d'}}$ is stable, appending it to the original cooperation segment preserves stability. Moreover, this extension does not violate Conditions~\ref{item:cond-stable}, \ref{item:cond-non-coop}, \ref{item:cond-shortest-departure}, or \ref{item:cond-cooperation-over-independent} in Definition~\ref{def:equilibrium_cooperation_joint_strategy}.
\begin{corollary}\label{cor:dominated-strategies}
    Given a cooperation joint strategy 
    \[
    (\pi^1_{s_1,c_s} \circ \pi^S_{c_s,c_d} \circ SIP_{c_d,g_1}, \pi^2_{s_2,c_s} \circ \pi^S_{c_s,c_d} \circ SIP_{c_d,g_2})
    \]
    that satisfies all conditions of Definition \ref{def:equilibrium_cooperation_joint_strategy} except Condition \ref{item:cond-departure}, i.e., for some agent $a_i$, the optimal departure node along $\pi^S_{c_s,c_d} \circ SIP_{c_d,g_{-i}}$ is:
    \[
    {c_{d'}} = v_{d^*}^i(\pi^S_{c_s,c_d} \circ SIP_{c_d,g_{-i}}) \neq {c_d}
    \]
    then the joint strategy
    \[
    (\pi^1_{s_1,c_s} \circ \pi^S_{c_s,c_d} \circ SIP_{c_d,c_{d'}} \circ SIP_{c_{d'},g_1}, \pi^2_{s_2,c_s} \circ \pi^S_{c_s,c_d} \circ SIP_{c_d,c_{d'}} \circ SIP_{c_{d'},g_2})
    \]
    satisfies the same conditions of Definition \ref{def:equilibrium_cooperation_joint_strategy} and dominates the original joint strategy.
\end{corollary}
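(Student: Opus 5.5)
The argument reduces the corollary to Lemma~\ref{lemma:strategy-pruning-claim}, applied with $c_{d_1}=c_d$ and $c_{d_2}=c_{d'}$. There are three steps: show the extension $SIP_{c_d,c_{d'}}$ is a stable cooperation partial path, invoke the lemma for stability of the concatenation and for domination, and check that Conditions~\ref{item:cond-stable}, \ref{item:cond-non-coop}, \ref{item:cond-shortest-departure} and \ref{item:cond-cooperation-over-independent} of Definition~\ref{def:equilibrium_cooperation_joint_strategy} survive.

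\textbf{Placing $c_{d'}$.} Since $c_d$ is optimal on the cooperation prefix, I will first argue that $c_{d'}$ cannot lie on $\pi^S_{c_s,c_d}$ strictly before $c_d$. The path $\pi^S_{c_s,c_d}$ is stable, so $c_d$ is the optimal departure node for $a_i$ among the nodes of that prefix; with ties broken toward longer cooperation (the footnote to Theorem~\ref{theorem:pne_equivalance}), $c_{d'}$ therefore lies on $SIP_{c_d,g_{-i}}$ after $c_d$. As deviation can only pay at a cooperation node, I will take $c_{d'}\in V_C$. The extension is then the subpath $SIP_{c_d,c_{d'}}$ of $SIP_{c_d,g_{-i}}$.

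\textbf{Stability of $SIP_{c_d,c_{d'}}$.} This step needs $c_{d'}$ to be an optimal departure node for both agents along $SIP_{c_d,c_{d'}}$.
\begin{enumerate}
\item For $a_{-i}$: any subpath of a shortest path is shortest, so $SIP_{c_{d'},g_{-i}}$ is the tail of $SIP_{c_d,g_{-i}}$. Continuing to $c_{d'}$ with cooperation discounts $\tau^2\le\tau^1$ is therefore never worse for $a_{-i}$ than leaving earlier on the same geodesic. Strict preference, or a tie resolved by the footnote, makes $c_{d'}$ optimal for $a_{-i}$.
\item For $a_i$: $c_{d'}$ minimizes $a_i$'s total time over all nodes of the longer path $\pi^S_{c_s,c_d}\circ SIP_{c_d,g_{-i}}$. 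It therefore also minimizes over the nodes of the restriction $SIP_{c_d,c_{d'}}$, since that restriction contains $c_{d'}$ and the joint timing along it is unchanged.
\end{enumerate}
This gives $SIP_{c_d,c_{d'}}\in\Pi^S_{c_d,c_{d'}}$.

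\textbf{Applying the lemma.} With $\pi_{c_s,c_{d_1}}=\pi^S_{c_s,c_d}$, Lemma~\ref{lemma:strategy-pruning-claim} yields two things. First, $\pi^S_{c_s,c_d}\circ SIP_{c_d,c_{d'}}$ is stable, which is Condition~\ref{item:cond-stable}. Second, each agent's total time weakly decreases, which is the domination claim.

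\textbf{Remaining conditions.}
\begin{itemize}
\item Condition~\ref{item:cond-non-coop}: the joining segments are untouched, so it still holds.
\item Condition~\ref{item:cond-shortest-departure}: by construction both agents take $SIP_{c_{d'},g_1}$ and $SIP_{c_{d'},g_2}$ after $c_{d'}$.
\item Condition~\ref{item:cond-cooperation-over-independent}: each agent's time is at most its old time, which was already at most $T(SIP_{s_j,g_j})$, so the bound carries over.
\end{itemize}

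\textbf{Main obstacle.} The hard part is the stability of $SIP_{c_d,c_{d'}}$ for $a_{-i}$. Departure optimality is computed against the \emph{new} counterpart path $SIP_{c_{d'},g_i}$ rather than $SIP_{c_d,g_i}$, so $a_{-i}$'s cooperative arrival times along the extension must be compared with its independent times. I would handle this through the rationality argument of Lemma~\ref{cont}: joint traversal is synchronized, every delay is at most $\tau^1$, and hence no earlier exit on its own geodesic beats continuing.
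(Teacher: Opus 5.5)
Your proposal is correct and follows the paper's own route. The paper obtains the corollary directly from Lemma~\ref{lemma:strategy-pruning-claim} by noting that the extension $SIP_{c_d,c_{d'}}$ is a stable cooperation partial path, so the concatenation is stable and dominates, and that Conditions~\ref{item:cond-stable}, \ref{item:cond-non-coop}, \ref{item:cond-shortest-departure} and~\ref{item:cond-cooperation-over-independent} are left untouched. The only difference is that you actually justify the stability of $SIP_{c_d,c_{d'}}$ for each agent, a step the paper simply asserts: for $a_i$ by restricting the minimization that defines $c_{d'}$, and for $a_{-i}$ by the subpath-of-a-geodesic argument with $\tau^2\le\tau^1$.
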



\subsubsection{Existence Guarantee}

To conclude the insights on the size of the ECJS set we show that in any scenario, there always exists a joint strategy that constitutes a PNE. If the independent joint strategy $(SIP_{s_1,g_1}, SIP_{s_2,g_2})$ is already a PNE, the claim holds.  Otherwise, one of the agents necessarily has an incentive to deviate from its independent shortest path in favor of cooperation with the other agent.
In this case, we show that the resulting cooperative path is stable and can serve as the basis for constructing a Pure Nash Equilibrium.

In Appendix~\ref{app:pne-existance}, we prove the following Lemma, showing that if there exists a cooperative joint strategy with a stable cooperation segment that both agents prefer over the independent strategy, then a PNE exists.
\begin{lemma}\label{lemma:PNE-existance}
    Consider a cooperative joint strategy of the form:
    \[
    \Big(SIP_{s_1,c_s} \circ \pi^S_{c_s,c_d} \circ SIP_{c_d,g_1},\ 
    SIP_{s_{2},c_s} \circ \pi^S_{c_s,c_d} \circ SIP_{c_d,g_{2}}\Big),
    \]
    where ${c_s}$ is the first cooperation node and ${c_d}$ is the last cooperation node in which the agents cooperate, and $\pi^S_{c_s,c_d} \in \Pi_{c_s,c_d}$ is a stable partial path. If the following condition holds:
    \[
    \forall i \in \{1,2\}, \quad
    T\Big(SIP_{s_i,c_s} \circ \pi^S_{c_s,c_d} \circ SIP_{c_d,g_i}|\ 
    SIP_{s_{-i},c_s} \circ \pi^S_{c_s,c_d} \circ SIP_{c_d,g_{-i}}\Big) 
    \leq T\Big(SIP_{s_i,g_i}\Big) 
    \]
    then a Pure Nash Equilibrium (PNE) exists.
\end{lemma}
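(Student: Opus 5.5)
The plan is to start from the given joint strategy $(\pi^1_0,\pi^2_0)$ and repair it, step by step, into a joint strategy that satisfies every condition of Definition~\ref{def:equilibrium_cooperation_joint_strategy}. By Theorem~\ref{theorem:pne_equivalance} that strategy is then a PNE. If some repair step breaks down, we will instead show that $(SIP_{s_1,g_1},SIP_{s_2,g_2})$ is a PNE. The repair must keep two invariants: the cooperation segment stays stable (Condition~\ref{item:cond-stable}), and both agents weakly prefer the strategy to their independent paths (Condition~\ref{item:cond-cooperation-over-independent}). The hypothesis of the lemma gives both invariants at the outset.

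\textbf{Departure segment (Conditions~\ref{item:cond-departure} and~\ref{item:cond-shortest-departure}).} Condition~\ref{item:cond-shortest-departure} already holds. If Condition~\ref{item:cond-departure} fails, some agent $a_i$ has an optimal departure node $c_{d'}\neq c_d$ along the other agent's path. The first task is to rule out $c_{d'}$ lying strictly inside $\pi^S_{c_s,c_d}$. If it did, then $c_{d'}$ would be a better departure node along $\pi^S_{c_s,c_d}$ itself, and this contradicts stability (Definition~\ref{def:stable-cooperation}). So $c_{d'}$ lies on $SIP_{c_d,g_{-i}}$, and we apply Corollary~\ref{cor:dominated-strategies}. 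It extends the cooperation segment by $SIP_{c_d,c_{d'}}$, keeps it stable by Lemma~\ref{lemma:strategy-pruning-claim}, and produces a dominating strategy, so Condition~\ref{item:cond-cooperation-over-independent} is preserved. Each extension strictly decreases some agent's travel time. Together with the tie-breaking convention, this means the ending node never repeats, so the process stops after at most $m$ steps.

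\textbf{Joining segment (Condition~\ref{item:cond-non-coop}).} The prefixes $SIP_{s_i,c_s}$ need not be non-cooperative. Suppose agent $a_{-i}$ can reach an earlier cooperation node $c'$ on $SIP_{s_i,c_s}$ at a cooperation-relevant time. By Lemma~\ref{early}, $a_{-i}$ then prefers to start cooperating at $c'$, following $SIP_{s_i,c'}\circ SIP_{c',c_s}$ and then the rest of the cooperation. I would replace $c_s$ by the earliest such node $c'$ and use the new segment $SIP_{c',c_s}\circ\pi^S_{c_s,c_d}$. Its first part is a shortest path traversed jointly, so it is rational. I would check stability by applying Lemma~\ref{lemma:strategy-pruning-claim} with the shortest piece in front. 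If instead the segment from $c'$ is unstable because an agent would leave early, we move the departure node backward in the way illustrated in Figure~\ref{fig:retreats}.

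This iteration is the main obstacle: earlier starts and earlier or later departures interact, and I need a potential function to show the process ends. I would first try the lexicographic pair (arrival time at the start node, then the travel time of the agent who moves second), which should be monotone over the finite set of node pairs $(c_s,c_d)$. I also need Condition~\ref{item:cond-cooperation-over-independent} to keep holding. Suppose a departure retreat makes some agent worse off than under $SIP$. Then that agent's best response, computed via Theorem~\ref{cor1}, is independent. In that case I would show the other agent's best response to $SIP$ is also $SIP$, so the independent joint strategy is a PNE; Section~\ref{section:independentPNE} gives exactly how to check this.

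\textbf{Mutual robustness.} Finally, the later-arriving agent should use the fastest non-cooperative path to $c_s$, which is $SIP^{NC}$ from Algorithm~\ref{algorithm:shortest-non-cooperative-partial-paths}. Replacing its prefix this way only lowers both agents' times, and Theorem~\ref{theorem:pne_equivalance} then concludes the argument.
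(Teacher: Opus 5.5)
There is a genuine gap at your joining-segment step, which is exactly where the paper's proof does its real work.

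\textbf{Why your joining step fails.} When you prepend $SIP_{c',c_s}$ to $\pi^S_{c_s,c_d}$, Lemma~\ref{lemma:strategy-pruning-claim} does not give you stability. That lemma appends a stable segment \emph{after} a stable segment. Using it ``with the shortest piece in front'' would therefore require $SIP_{c',c_s}$ to be a stable cooperation partial path on its own, and nothing guarantees that. Your fallback of retreating the departure node (Figure~\ref{fig:retreats}) rests on two unproven claims:
\begin{itemize}
\item The lexicographic potential is only conjectured to be monotone. A retreat can raise an agent's time, and that can make new earlier starts attractive.
\item The escape clause has no argument behind it. It says that if some agent ends up worse off than under $SIP$, then the other agent's best response to $SIP$ is $SIP$, so the independent strategy is a PNE. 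But one agent preferring its $SIP$ against a retreated cooperative path says nothing about the other agent's best response to $SIP_{s_{-i},g_{-i}}$. Moreover, Theorem~\ref{theorem:PNE-existance} invokes this lemma precisely when the independent joint strategy is \emph{not} a PNE, so that branch would have to be shown vacuous anyway.
\end{itemize}
As written, your construction is not shown to terminate in an ECJS.

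\textbf{The missing idea.} The lemma's own hypothesis (Condition~\ref{item:cond-cooperation-over-independent} for the given strategy) keeps the extended segment $SIP_{c_e,c_s}\circ\pi^S_{c_s,c_d}$ stable, so no retreat is ever needed.
\begin{itemize}
\item A departure node strictly inside $\pi^S_{c_s,c_d}$ is excluded by stability of $\pi^S$, as you already noted on the departure side.
\item Suppose an agent $a_i$ preferred to leave at some $c_{e'}$ on the new prefix. Then its cooperative continuation from $c_{e'}$ is strictly worse than $SIP_{c_{e'},g_i}$. Since $D(SIP_{s_i,c_{e'}})+T(SIP_{c_{e'},g_i})\ge T(SIP_{s_i,g_i})$, the paper lower-bounds the original strategy's time for $a_i$ by $D(SIP_{s_i,c_{e'}})$ plus that continuation. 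This gives a time exceeding $T(SIP_{s_i,g_i})$, contradicting the hypothesis. That lower bound is the delicate step, so you should justify it carefully rather than take it on trust.
\end{itemize}

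\textbf{How the rest then goes through.} With stability preserved, each repair strictly decreases the cooperation start time and yields a dominating strategy (Lemma~\ref{early}). So Conditions~\ref{item:cond-stable}, \ref{item:cond-shortest-departure} and~\ref{item:cond-cooperation-over-independent} persist, and finiteness of $V_C$ gives termination without any potential function.

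The final prefixes are non-cooperative shortest paths, and are therefore automatically mutually robust. Your $SIP^{NC}$ replacement is unnecessary. It also cannot lower times, contrary to your claim, since $T(SIP^{NC}_{s_i,c_s})\ge T(SIP_{s_i,c_s})$.

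Finally, doing the joining repair first and the Corollary~\ref{cor:dominated-strategies} extensions last removes the interaction you were worried about, because those extensions preserve Condition~\ref{item:cond-non-coop}.
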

Leveraging Lemma \ref{lemma:PNE-existance}, we establish the existence of a PNE in any scenario.\\
\begin{theorem}   
\label{theorem:PNE-existance}
\ictpp always admits a Pure Nash Equilibrium
\end{theorem}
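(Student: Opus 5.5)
The plan is a case split on whether the independent joint strategy $(SIP_{s_1,g_1}, SIP_{s_2,g_2})$ is already a PNE. If it is, we are done. Otherwise, by Section~\ref{section:independentPNE}, some agent, say w.l.o.g.\ $a_1$, has a profitable cooperation-involving deviation against $\pi^2 = SIP_{s_2,g_2}$. Theorem~\ref{cor1} tells us the form of this deviation. Its best response is
\[
\pi^* = SIP_{s_1,c^*_1}\circ \pi^2_{c^*_1,d}\circ SIP_{d,g_1},
\]
where $c^*_1$ is the earliest cooperation node on $SIP_{s_2,g_2}$ that $a_1$ can reach at a cooperation-relevant time, and $d$ is $a_1$'s optimal departure node along $SIP_{s_2,g_2}$. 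It satisfies $T(\pi^*\mid\pi^2) < T(SIP_{s_1,g_1})$. Our goal is to turn this into a joint strategy satisfying the hypotheses of Lemma~\ref{lemma:PNE-existance}, with $c_s = c^*_1$ and $c_d = d$ (or a suitably chosen node between them).

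First we check that $a_2$ also weakly gains. Since $a_2$ keeps its own shortest path $SIP_{s_2,g_2}$ and merely receives cooperation along part of it, its delays can only decrease. The only waiting in the joint execution is by $a_1$, which arrives at cooperation-relevant times, so $a_2$ is never delayed. Hence $T(SIP_{s_2,g_2}\mid\pi^*) \le T(SIP_{s_2,g_2})$. The joining parts are $SIP_{s_1,c_s}$ for $a_1$ and $\pi^2_{s_2,c_s}$ for $a_2$; the latter is a subpath of a shortest path, so it is itself $SIP_{s_2,c_s}$, as the lemma requires.

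Next we check stability of the segment $\pi^S_{c_s,c_d} := \pi^2_{c^*_1,d}$ in the sense of Definition~\ref{def:stable-cooperation}. For $a_1$, $d$ is the optimal departure node by construction. Restricting the candidate set to the prefix ending at $d$ does not change the argmin. Fixing $a_2$'s continuation as $SIP_{d,g_2}$ gives the same timings as $\pi^2$, because $\pi^2_{d,g_2}$ is itself a shortest path. For $a_2$, along a subpath of its own shortest path, the best departure is at the last node, since continuing cooperatively is never slower than $SIP$. The tie-break in the footnote of Theorem~\ref{theorem:pne_equivalance} resolves equalities in favour of longer cooperation. This uses rationality-type reasoning as in Lemma~\ref{cont}.

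Finally we apply Lemma~\ref{lemma:PNE-existance}. The main obstacle is exactly this stability verification for $a_2$ (and for $a_1$'s argmin) under ties and under the waiting $a_1$ may incur at $c_s$. If a node strictly before $d$ were better for $a_2$, my fallback is to truncate the segment at the latest node where both agents' optimal departures coincide, iterating the retreat argument illustrated in Figure~\ref{fig:retreats}. This iteration must terminate, at worst at the single node $c_s$. That node is trivially stable, and $a_1$ still gains by at least the cooperative reduction $\tau^1_{c_s}-\tau^2_{c_s}$ minus its waiting. This last point needs the cooperation-relevant-time condition, and I expect it to require the most care.
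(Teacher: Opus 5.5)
Your proposal is correct and follows the paper's proof essentially step for step. You split on whether $(SIP_{s_1,g_1},SIP_{s_2,g_2})$ is a PNE, take $a_1$'s cooperation-assisted best response to $SIP_{s_2,g_2}$ from Theorem~\ref{cor1}, argue that $SIP_{c_s,c_d}$ is stable because $c_d$ is $a_1$'s optimal departure node by construction and lies on $a_2$'s own shortest path, note that $a_2$ weakly gains, and conclude via Lemma~\ref{lemma:PNE-existance}. Your direct stability argument already suffices, so the truncation fallback is unnecessary; as stated it would also not guarantee that $a_1$ still beats $T(SIP_{s_1,g_1})$, since $c_s$ need not lie on $a_1$'s shortest path.
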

\begin{proof}
Consider the \emph{Independent Joint Strategy} \( (SIP_{s_1,g_1}, SIP_{s_2,g_2}) \). If this strategy is already a PNE, the claim follows directly. Otherwise, at least one agent, without loss of generality, \( a_1 \), has an incentive to deviate and leverage cooperation with \( a_2 \).

By Theorem~\ref{cor1}, given that \( a_2 \) follows its independent shortest path \( SIP_{s_2,g_2} \), the optimal path for \( a_1 \) adheres the structure  
$
SIP_{s_1,c_s} \circ SIP_{c_s,c_d} \circ SIP_{c_d,g_1}
$,
where \( {c_s} \) is the first cooperation node along \( a_2 \)'s path that \( a_1 \) can reach in time for cooperation, and \( {c_d} \) is its optimal departure node.  
Since \( {c_d} \) lies on the shortest independent path from \( {c_s} \) to \( {g_2} \), it is also \( a_2 \)'s optimal departure node along \( SIP_{c_s,c_d} \). Thus, \( SIP_{c_s,c_d} \) forms a \emph{Stable Cooperation Partial Path}.  

Since this strategy is $a_1$'s best response to $a_2$ shortest independent path, we have:
\[
T(SIP_{s_1,c_s} \circ SIP_{c_s,c_d} \circ SIP_{c_d,g_1} \mid SIP_{s_{2},c_s} \circ SIP_{c_s,c_d} \circ SIP_{c_d,g_{2}}) \leq T(SIP_{s_1,g_1})
\]
Additionally, since it involves cooperation along $a_2$'s shortest independent path, it also improves \( a_2 \)'s arrival time at its target node:
\[
T(SIP_{s_2,c_s} \circ SIP_{c_s,c_d} \circ SIP_{c_d,g_2} \mid SIP_{s_{1},c_s} \circ SIP_{c_s,c_d} \circ SIP_{c_d,g_{1}}) \leq T(SIP_{s_2,g_2})
\]
By Lemma~\ref{lemma:PNE-existance}, this implies the existence of a PNE.
\end{proof}

\section{Mapping The Set of Relevant ECJS}
\label{section:algorithms}
The structural results presented above restrict the equilibrium search to at most one candidate per cooperation departure node. Based on this observation, Algorithm~\ref{algorithm:optimal-equilibrium-joint-strategy} computes the optimal stable joint strategy ending cooperation at a given node, while Algorithm~\ref{algorithm:map-all-equilibrium-joint-strategy} enumerates all {\em non-dominated} cooperative PNEs by iterating over all possible departure nodes.
The restriction to non-dominated strategies is important. By Corollary~\ref{cor:dominated-strategies}, if cooperation can be beneficially extended along the departure segment, then the extended cooperation joint strategy dominates the shorter one. In such cases, the shorter strategy fails to constitute a PNE, since an agent would prefer to continue cooperating rather than depart. Algorithm~\ref{algorithm:map-all-equilibrium-joint-strategy} therefore removes these dominated candidates and retains the extended strategies that dominate them, returning only the non-dominated cooperative PNEs.

\subsection{Joint Paths To Cooperation Departure Node}
Given a cooperation departure node ${c_d}$, Algorithm~\ref{algorithm:optimal-equilibrium-joint-strategy} computes the dominating stable cooperation joint strategy ending cooperation at that node. 
The algorithm first computes the shortest independent paths from each agent’s starting node to all nodes in the graph, as well as from all nodes to each agent’s respective target node [lines 1-4], It then computes the shortest stable partial paths that end cooperation at $c_d$ from all nodes in the graph using Algorithm~\ref{algorithm:shortest-stable-partial-paths} [lines 5], and the shortest non-cooperative partial paths from each agent’s starting node to all nodes in the graph using Algorithm~\ref{algorithm:shortest-non-cooperative-partial-paths} [lines 6-7].

Next, for each cooperation node $c_s \in V_C$, the algorithm evaluates the agents’ arrival times at $c_d$ under the joint strategy
\[
\big(
SIP^{NC}_{s_1,c_s} \circ SCP^S_{c_s,c_d} \circ SIP_{c_d,g_1},\;
SIP^{NC}_{s_2,c_s} \circ SCP^S_{c_s,c_d} \circ SIP_{c_d,g_2}
\big),
\]
and records the cooperation-starting node that minimizes this arrival time [lines 8-12].

Finally, if the resulting stable cooperative joint strategy ending cooperation at $c_d$ improves both agents’ performance relative to following their shortest independent paths to their respective target nodes, it is returned as the optimal stable joint strategy ending cooperation at $c_d$ [lines~13-14]. Otherwise, the algorithm returns \texttt{None} [line 15].

\begin{algorithm} 
\caption{\textsc{Optimal Stable Joint Strategy Ending Cooperation At A Given Node}($G, V_C, {c_d}, {s_1}, {s_2}, {g_1}, {g_2}$)}
\label{algorithm:optimal-equilibrium-joint-strategy}
\begin{algorithmic}[1]
    \State $SIP_{s_1}\! \gets\! \textsc{shortest paths from}(G, \tau^1, {s_1})$
    \State $SIP_{s_2}\! \gets\! \textsc{shortest paths from}(G, \tau^1, {s_2})$ 
    \State $SIP_{g_1}\! \gets\! \textsc{shortest paths to}(G, \tau^1, {g_1})$
    \State $SIP_{g_2}\! \gets\! \textsc{shortest paths to}(G, \tau^1, {g_2})$ 
    \State $SCP^S_{c_d}\! \gets\! \textsc{shortest stable paths}(G,\! V_C,\! {c_d},\! SIP_{g_1},\! SIP_{g_2})$

    \State $SIP^{NC}_{s_1}\! \gets\! \textsc{shortest nc paths}(G, V_C, {s_1}, SIP_{s_2})$
    \State $SIP^{NC}_{s_2}\! \gets\! \textsc{shortest nc paths}(G, V_C, {s_2}, SIP_{s_1})$
    \State $optNE\! \gets\! \infty,\quad {c_{s^*}} \gets None$
    \ForAll{ ${c_s} \in V_C$}      
            \If{$\max(SIP^{NC}_{s_1,c_s}, SIP^{NC}_{s_2,c_s}) + \tau^2_{c_s} + SCP^S_{c_s,c_d} \leq optNE$}
                \State $optNE \! \gets \! \max(SIP^{NC}_{s_1,c_s}, SIP^{NC}_{s_2,c_s}) + \tau^2_{c_s} + SCP^S_{c_s,c_d}$
                \State ${c_{s^*}} \gets {c_s}$
            \EndIf
    \EndFor
\If{$optNE + \tau^2_{c_d} + SIP_{c_d,g_1} \leq SIP_{s_1,g_1}$ and $optNE + \tau^2_{c_d} + SIP_{c_d,g_2} \leq SIP_{s_2,g_2}$}
    \State \Return $(SIP^{NC}_{s_1,c_{s^*}} \circ SCP^S_{c_{s^*},c_d} \circ SIP_{c_d,g_1}, SIP^{NC}_{s_2,c_{s^*}} \circ SCP^S_{c_{s^*},c_d} \circ SIP_{c_d,g_2})$
\EndIf
    \State \Return $None$
\end{algorithmic}
\end{algorithm}

\begin{lemma}\label{lemma:alg-limited-ecjs}
Algorithm~\ref{algorithm:optimal-equilibrium-joint-strategy} computes, in polynomial time, the optimal stable joint strategy ending cooperation at $c_d$.
\end{lemma}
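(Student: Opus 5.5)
The proof splits into two parts: polynomial running time, and correctness (optimality).

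\textbf{Running time.} Lines 1--4 are four single-source (or single-target) Dijkstra runs with node weights $\tau^1$. Line 5 is one call to Algorithm~\ref{algorithm:shortest-stable-partial-paths}, which is polynomial by Lemma~\ref{lemma:stable-cooperation-algorithm}. Lines 6--7 are two calls to Algorithm~\ref{algorithm:shortest-non-cooperative-partial-paths}, polynomial by Lemma~\ref{lemma:non-cooperative-algorithm}. The loop in lines 8--12 performs $m=|V_C|$ constant-time comparisons once these tables are available. The final test is constant time, and the returned path has length $O(|V|)$. The total is therefore $\mathcal{O}(|E|+|V|\log|V| + m)$.

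\textbf{Correctness.} The target statement is that the output is the element of $\mathbb{ECJS}^{c_d}$, or more precisely of the stable joint strategies ending cooperation at $c_d$, that arrives at $c_d$ earliest. I would argue in three steps.

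\begin{enumerate}
\item \emph{All candidates share the same tail.} Every candidate leaves $c_d$ simultaneously and then follows $SIP_{c_d,g_i}$ by Condition~\ref{item:cond-shortest-departure}. So minimizing both agents' path times is equivalent to minimizing the common arrival time at $c_d$.

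\item \emph{Fixed start node $c_s$.} For a fixed $c_s$, I would show the arrival time at $c_d$ is
\[
\max\bigl(T(\pi^1_{s_1,c_s}),T(\pi^2_{s_2,c_s})\bigr)+\tau^2_{c_s}+T(\pi^S_{c_s,c_d}\mid\pi^S_{c_s,c_d}).
\]
This holds because the agents cooperate continuously (Lemma~\ref{cont}), so both arrive at $c_s$, wait until the later one arrives, and then traverse the stable segment together. The expression is monotone in each term separately. Moreover, the three components are constrained independently:
\begin{itemize}
\item the non-cooperative property of each joining path depends only on that agent's path and on $SIP_{s_{-i}}$ (Definition~\ref{def:non-cooperative});
\item stability depends only on the segment from $c_s$ onward (as noted before Algorithm~\ref{algorithm:shortest-stable-partial-paths}).
\end{itemize}
Hence the minimum over feasible choices is attained at $SIP^{NC}_{s_1,c_s}$, $SIP^{NC}_{s_2,c_s}$ and $SCP^S_{c_s,c_d}$, which are exactly the quantities computed in lines 5--7 (by Lemmas~\ref{lemma:non-cooperative-algorithm} and~\ref{lemma:stable-cooperation-algorithm}). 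The loop minimizes over $c_s$, matching Eq.~\eqref{eq:armin}.

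\item \emph{Final filter.} The test in lines 13--14 is exactly Condition~\ref{item:cond-cooperation-over-independent}. If the optimal candidate fails it, every other candidate ending at $c_d$ has a later arrival at $c_d$ and therefore fails too, so returning \texttt{None} is correct.
\end{enumerate}

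\textbf{Main obstacle: mutual robustness.} Condition~\ref{item:cond-non-coop} requires the joining paths to be mutually robust, but $SIP^{NC}_{s_i,c_s}$ need not satisfy this. The earlier agent can always be synchronized by waiting, so it does not need its shortest path. The difficulty is the later agent: if $T(SIP^{NC}_{s_j,c_s})>T(SIP_{s_j,c_s})$, then robustness fails.

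I would first try to argue that this case does not harm the claimed optimality, reading ``optimal stable joint strategy'' as in the text preceding Eq.~\eqref{eq:armin}. The computed value is a lower bound on the arrival time of any ECJS through $c_s$, and the paper states that arrival times are synchronized to ensure mutual robustness. If robustness fails for the minimizing $c_s$, no ECJS through that $c_s$ attains a better value anyway, and Condition~\ref{item:cond-departure} is handled afterwards by Algorithm~\ref{algorithm:map-all-equilibrium-joint-strategy} via Corollary~\ref{cor:dominated-strategies}.

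Making this accounting precise, so that the lemma is stated for the relaxed class (stable segment, non-cooperative joining paths, weak dominance of the independent strategy), is where I expect the care to be needed.
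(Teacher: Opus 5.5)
Your argument follows the paper's. The paper's justification is that all stable joint strategies ending at $c_d$ share the tail from $c_d$, so the pair $SIP^{NC}_{s_i,c_s}\circ SCP^S_{c_s,c_d}\circ SIP_{c_d,g_i}$ at the minimizer of Eq.~\eqref{eq:armin} dominates all of them; it then counts the same Dijkstra-type passes you do. Your steps 1--3 spell this out correctly for the relaxed class you end with: Conditions~\ref{item:cond-stable}, \ref{item:cond-shortest-departure} and \ref{item:cond-cooperation-over-independent}, with non-cooperative but not necessarily mutually robust joining paths. Step 1 uses Condition~\ref{item:cond-shortest-departure}, so list it in that class explicitly.

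What does not work is the salvage in your ``main obstacle'' paragraph; drop it rather than hedge. A lower bound on every ECJS through $c_s$ says nothing about whether the returned pair is itself one, and in general it is not. Suppose $T(SIP^{NC}_{s_i,c_s}) < T(SIP^{NC}_{s_j,c_s})$ and $T(SIP_{s_j,c_s}) < T(SIP^{NC}_{s_j,c_s})$, so $a_i$ arrives first and waits at $c_s$. Let $a_j$ replace its joining segment by $SIP_{s_j,c_s}$.
\begin{itemize}
\item It meets no one earlier. Because $a_i$'s joining path is non-cooperative with respect to $SIP_{s_j}$, $a_j$ reaches every earlier cooperation node on it more than $\tau^1_c-\tau^2_c$ after $a_i$.
\item Hence the common start at $c_s$ moves strictly earlier, and $a_j$'s arrival at $g_j$ strictly improves.
\end{itemize}
So the returned pair is not a PNE. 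Figure~\ref{fig:robust-non-coop} exhibits exactly this configuration of joining paths.

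Nothing in lines 8--12, or in Algorithm~\ref{algorithm:map-all-equilibrium-joint-strategy}, tests for this. So Condition~\ref{item:cond-non-coop} is not ``handled afterwards'' the way Condition~\ref{item:cond-departure} is. The paper's own phrase that arrival times are ``synchronized to ensure mutual robustness'' has the same hole. Waiting at $c_s$ changes departures, not arrival times, and Definition~\ref{def:robust-non-cooperative} compares arrivals.

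So state and prove the lemma for the relaxed class, which is what both your argument and the paper's actually establish. Keep in mind that Theorem~\ref{theorem:map-correctness} relies on the stronger reading, in which the output also satisfies Condition~\ref{item:cond-non-coop}. That reading would need at least an explicit robustness check in the loop over $c_s$.
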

The correctness of Lemma~\ref{lemma:alg-limited-ecjs} follows from the observation that, if there exists a stable joint strategy that ends cooperation at $c_d$, then the joint strategy
\[
(\pi^1, \pi^2) =
\big(
SIP^{NC}_{s_1,c_s} \circ SCP^S_{c_s,c_d} \circ SIP_{c_d,g_1},\;
SIP^{NC}_{s_2,c_s} \circ SCP^S_{c_s,c_d} \circ SIP_{c_d,g_2}
\big),
\]
where
\[
{c_s}
=
\arg \min_{{c} \in V_C}
\left\{
\max\!\big(
T(SIP^{NC}_{s_1,c}),
T(SIP^{NC}_{s_2,c})
\big)
+ \tau^2_{c}
+ T\!\left(SCP^S_{c,c_d} \mid SCP^S_{c,c_d}\right)
\right\},
\]
dominates all other stable joint strategies that end cooperation at $c_d$.

The algorithm iterates over all cooperation nodes, evaluating each as a potential starting node and selecting the one that minimizes the arrival time at ${c_d}$.  

\paragraph{Complexity Analysis}  
The algorithm begins with four executions of Dijkstra's algorithm to determine the shortest paths from ${s_1}$, ${s_2}$, ${g_1}$, and ${g_2}$. Each execution of Dijkstra's algorithm has a time complexity of $\mathcal{O}(|E| + |V| \log |V|)$. Consequently, the total complexity for these four runs is:
\[
\mathcal{O}(4 \cdot (|E| + |V| \log |V|)) = \mathcal{O}(|E| + |V| \log |V|)
\]
Next, the algorithm runs Algorithm \ref{algorithm:shortest-stable-partial-paths}, which operates similarly to Dijkstra's algorithm, with a complexity of:
\[
\mathcal{O}(|E| + |V| \log |V|)
\]
Subsequently, the algorithm executes Algorithm \ref{algorithm:shortest-non-cooperative-partial-paths} twice, each with the same complexity:
\[
2 \cdot \mathcal{O}(|E| + |V| \log |V|) = \mathcal{O}(|E| + |V| \log |V|)
\]
Finally, the algorithm iterates over all cooperation nodes in the graph to find the one that optimizes the Nash equilibrium overall path time. This iteration has a complexity of:
\[
\mathcal{O}(|V|)
\]
Combining all these components, the total complexity of the algorithm can be expressed as:
\[
\mathcal{O}(|E| + |V| \log |V| + |V|)
\]
Which concludes to:
\[
\mathcal{O}(|E| + |V| \log |V|)
\]


\subsection{Optimal ECJS Computation}
To map all non-dominated joint strategies in $\mathbb{ECJS}$, we evaluate each cooperation node as a potential cooperation departure node, retaining only the dominating ECJS and discarding dominated concatenations of stable paths.
\begin{algorithm}
\caption{\textsc{Optimal ECJS}($G, V_C, {s_1}, {s_2}, {g_1}, {g_2}$)}\label{algorithm:map-all-equilibrium-joint-strategy}
\begin{algorithmic}[1]
 \State $ECJSmap \gets \{(SIP_{s_1,g_1},SIP_{s_2,g_2})\}, \quad dominated \gets \emptyset$
    \ForAll{${c_d} \in V_C$}
        \If{${c_d} \notin dominated$}
            \State $ECJSmap[{c_d}] \gets \textsc{algorithm \ref{algorithm:optimal-equilibrium-joint-strategy}}(G, V_C, {c_d}, \dots)$
            \State $dominated \gets dominated \cup \textsc{shortest stable paths}(G, V_C, {c_d}, SIP_{g_1}, SIP_{g_2})$
        \EndIf
    \EndFor
    \State $S_1 \gets \textsc{Best Response Path})(G,V_C,{s_1},{g_1},{s_2},{g_2},SIP_{s_2,g_2})$
    \State $S_2 \gets \textsc{Best Response Path})(G,V_C,{s_2},{g_2},{s_1},{g_1},SIP_{s_1,g_1})$
    \If{$S_1 \neq SIP_{s_1,g_1}$ or $S_2 \neq SIP_{s_2,g_2}$}
        \State $ECJSmap \gets ECJSmap \setminus \{(SIP_{s_1,g_1},SIP_{s_2,g_2})\}$
    \EndIf
    \State \Return $ECJSmap \setminus dominated$
\end{algorithmic}
\end{algorithm}
Algorithm \ref{algorithm:map-all-equilibrium-joint-strategy} begins by initializing a dictionary that maps each cooperation node in the graph to its optimal stable joint strategy ending cooperation at that node, using Algorithm~\ref{algorithm:optimal-equilibrium-joint-strategy}.  
The dictionary is initialized with the independent shortest-paths joint strategy, and an empty set is initialized to track cooperation nodes that admit a stable cooperation partial path to another cooperation node [line~1].
The algorithm then iterates over all cooperation nodes ${c_d} \in V_C$, excluding nodes that already have a stable cooperation path to another cooperation node.  
For each node, it determines the optimal stable joint strategy ending cooperation at that node using Algorithm~\ref{algorithm:optimal-equilibrium-joint-strategy}, and updates the list of dominated nodes with those that admit a stable cooperation partial path to $c_d$ [lines 2-5].  Next, the algorithm verifies whether the independent shortest paths joint strategy constitutes a pure Nash equilibrium (PNE).  
If it does not, the strategy is removed [lines 6-9].  
Finally, the algorithm removes all joint strategies that terminate cooperation at nodes dominated by others (i.e., nodes with stable cooperation partial paths to another node)  
and returns the set of all non-dominated ECJS 
[line 10].

\begin{theorem}\label{theorem:map-correctness}
Algorithm \ref{algorithm:map-all-equilibrium-joint-strategy} returns a set of joint strategies satisfying the following properties:  
\begin{enumerate}
    \item \textbf{Nash Equilibrium Guarantee:} Every joint strategy $(\pi^1,\pi^2)$ returned by the algorithm is a PNE.
    \item \textbf{Dominance:} For every joint strategy \((\pi^1,\pi^2) \in \mathbb{ECJS}\), the algorithm returns either \((\pi^1,\pi^2)\) itself or a joint strategy \((\pi'^1,\pi'^2)\) that dominates it. Thus, the algorithm returns all non-dominated joint strategies in \(\mathbb{ECJS}\).
\end{enumerate}
\end{theorem}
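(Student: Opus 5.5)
The proof splits into two parts, one per property, and rests mainly on Theorem~\ref{theorem:pne_equivalance}, Lemma~\ref{lemma:alg-limited-ecjs}, Lemma~\ref{lemma:strategy-pruning-claim} and Corollary~\ref{cor:dominated-strategies}. The output of Algorithm~\ref{algorithm:map-all-equilibrium-joint-strategy} consists of two kinds of elements: possibly the independent joint strategy, and the cooperative candidates $ECJSmap[c_d]$ for non-dominated $c_d$. I will treat these two kinds separately throughout.

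\textbf{Nash Equilibrium Guarantee.} The independent joint strategy survives lines 6--9 only if $S_1=SIP_{s_1,g_1}$ and $S_2=SIP_{s_2,g_2}$. By Theorem~\ref{cor1} and the discussion in Section~\ref{section:independentPNE}, this means neither agent has a profitable cooperation-involving deviation. No profitable non-cooperative deviation exists either, since each agent already follows its shortest independent path. Hence it is a PNE.

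For a returned cooperative candidate, I will check the five conditions of Definition~\ref{def:equilibrium_cooperation_joint_strategy} and then invoke Theorem~\ref{theorem:pne_equivalance}:
\begin{itemize}
\item Condition~\ref{item:cond-stable} holds because $SCP^S_{c_{s^*},c_d}$ is produced by Algorithm~\ref{algorithm:shortest-stable-partial-paths} (Lemma~\ref{lemma:stable-cooperation-algorithm}).
\item Condition~\ref{item:cond-non-coop} holds because the joining paths are $SIP^{NC}$ paths (Lemma~\ref{lemma:non-cooperative-algorithm}). For mutual robustness, I will argue that the later-arriving agent cannot reach $c_s$ earlier without passing an earlier node usable by the other agent; otherwise that deviation would yield a different candidate $c_s$ with smaller value in \eqref{eq:armin}.
\item Condition~\ref{item:cond-shortest-departure} holds by construction.
\item Condition~\ref{item:cond-cooperation-over-independent} holds by the check in line 13 of Algorithm~\ref{algorithm:optimal-equilibrium-joint-strategy}.
\item Condition~\ref{item:cond-departure} is the delicate one. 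I will argue by contraposition using Corollary~\ref{cor:dominated-strategies}. If some agent's optimal departure node along the other's path were $c_{d'}\neq c_d$, then $SIP_{c_d,c_{d'}}$ would be a stable cooperation path from $c_d$ to $c_{d'}$. Algorithm~\ref{algorithm:shortest-stable-partial-paths}, run with target $c_{d'}$, would then place $c_d$ in $dominated$, so the candidate for $c_d$ would be removed at line 10.
\end{itemize}

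\textbf{Dominance.} Let $(\pi^1,\pi^2)\in\mathbb{ECJS}$ end cooperation at $c_d$ and start at $c_s$. I first replace it by $(SIP^{NC}_{s_1,c_s}\circ SCP^S_{c_s,c_d}\circ SIP_{c_d,g_1},\dots)$. This replacement weakly dominates: both agents reach $c_d$ no later, since the max of the $SIP^{NC}$ arrival times plus the shortest stable time is minimal, and the suffixes from $c_d$ are identical. Next, minimizing over $c_s$ gives the output of Algorithm~\ref{algorithm:optimal-equilibrium-joint-strategy}, by Lemma~\ref{lemma:alg-limited-ecjs}. This output passes line 13 because $(\pi^1,\pi^2)$ itself satisfies Condition~\ref{item:cond-cooperation-over-independent}.

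If $c_d$ is not marked dominated, we are done. Otherwise $c_d$ has a stable cooperation path to some $c_{d_2}$ that was processed, and Lemma~\ref{lemma:strategy-pruning-claim} yields a stable concatenation ending at $c_{d_2}$ that dominates. Iterating this, I need termination: every dominating step strictly decreases some agent's total time, or else I choose a node maximal in the ``has stable path to'' relation. The resulting strategy at $c_{d_2}$ is in turn dominated by $ECJSmap[c_{d_2}]$, by the optimality argument above. Dominance is transitive, which closes the argument.

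\textbf{Main obstacle.} I expect the hardest part to be the bookkeeping of the $dominated$ set. The loop skips nodes already marked, so a node can be marked by a $c_{d_2}$ that is itself later marked dominated. I will handle this by ordering the argument along chains of stable paths, showing that the final, unmarked endpoint of each chain is evaluated and returned, and using Lemma~\ref{lemma:strategy-pruning-claim} repeatedly. The mutual-robustness part of Condition~\ref{item:cond-non-coop} is the second place I would check carefully.
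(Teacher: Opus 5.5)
Your overall route matches the paper's. You certify the independent profile by the best-response test. You certify each cooperative candidate as an ECJS via Theorem~\ref{theorem:pne_equivalance}, delegating Condition~\ref{item:cond-departure} to the $dominated$ set. You get dominance from the $SIP^{NC}/SCP^S$ replacement followed by Lemma~\ref{lemma:strategy-pruning-claim}, applied toward a node with no outgoing stable path. Your chain bookkeeping is more careful than the paper's. To finish it, drop the ``strict decrease'' option, since the lemma only gives weak inequalities, and justify the maximal node directly. The relation ``has a stable path to'' is transitive by Lemma~\ref{lemma:strategy-pruning-claim}. It is also acyclic: adding the two stability inequalities for stable paths $a\to b$ and $b\to a$ forces the cooperative travel time around the cycle to be nonpositive, contradicting $\tau_{vu}>0$.

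The genuine gap is the mutual-robustness part of Condition~\ref{item:cond-non-coop}. Definition~\ref{def:robust-non-cooperative} requires the later-arriving agent to reach $c_s$ in time $T(SIP_{s_{-i},c_s})$, not merely in the shortest \emph{non-cooperative} time. It is true by construction of $SIP^{NC}$ that every faster route passes a node $c$ usable by the other agent, but this does not block a deviation. The other agent's path is fixed and need not visit $c$, so the later agent simply takes its faster route and cooperation at $c_s$ starts earlier. Your fallback, that $c$ would then be a cheaper candidate in \eqref{eq:armin}, also fails: $c$ may admit no stable cooperation path to $c_d$ at all, because the other agent would leave right at $c$, so its value is $\infty$.

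Here is a concrete instance.
\begin{itemize}
\item Edges: $\tau_{s_2c}=\tau_{cG}=\tau_{cg_1}=5$, $\tau_{s_2x}=12$, $\tau_{xG}=\tau_{s_1c}=13$, $\tau_{s_1G}=\tau_{Gg_1}=\tau_{Gg_2}=1$.
\item Cooperation nodes: $V_C=\{c,G\}$ with $(\tau^1_c,\tau^2_c)=(10,0)$ and $(\tau^1_G,\tau^2_G)=(100,0)$; all other delays are $0$.
\item $T(SIP_{s_2,G})=20$ via $c$. But $c$ is usable by $a_1$ ($13\le 5+10$), so $SIP^{NC}_{s_2,G}=(s_2,x,G)$ with time $25$.
\item From $c$, agent $a_1$ prefers leaving (time $5$) to continuing to $G$ (time $\ge 6$), so no stable path from $c$ to $G$ exists.
\item Algorithm~\ref{algorithm:optimal-equilibrium-joint-strategy} with $c_d=G$ therefore returns $((s_1,G,g_1),(s_2,x,G,g_2))$, with times $26\le 28$ and $26\le 121$. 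Since $G$ has no stable path to $c$, it is not pruned, and this profile is output.
\item Yet $a_2$ can deviate to $(s_2,c,G,g_2)$: it reaches $G$ at $20$ while $a_1$ is still waiting, and finishes at $21<26$.
\end{itemize}

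So this step cannot be proved for the algorithm as written. The paper's proof does not cover it either; it simply asserts that an output of Algorithm~\ref{algorithm:optimal-equilibrium-joint-strategy} can fail to be a PNE only through Condition~\ref{item:cond-departure}. A correct argument needs an explicit robustness filter on the candidate starting nodes, together with a recheck of the dominance part under that filter.
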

\begin{proof}

    \textbf{Nash Equilibrium Guarantee:}\\
    Every joint strategy $(\pi^1,\pi^2)$ that Algorithm \ref{algorithm:map-all-equilibrium-joint-strategy} returns is either the shortest independent joint strategy or a cooperation joint strategy returned by Algorithm~\ref{algorithm:optimal-equilibrium-joint-strategy}.  
    If $(\pi^1,\pi^2) = (SIP_{s_1,g_1}, SIP_{s_2,g_2})$, the algorithm explicitly verifies that it is a PNE in lines 7-10.  
    Otherwise, if $(\pi^1,\pi^2) \neq (SIP_{s_1,g_1}, SIP_{s_2,g_2})$ and is returned by Algorithm~\ref{algorithm:optimal-equilibrium-joint-strategy}, it may fail to be a PNE only if the shortest independent path from its cooperation ending node ${c_d}$ to one of the agents' target nodes ${g_i}$ contains a stable cooperation partial path, violating Condition~\ref{item:cond-departure} for all cooperation joint strategies that end cooperation at ${c_d}$. 
    However, since the algorithm returns only joint strategies that end cooperation at nodes without stable cooperation partial paths to other nodes in the graph, this scenario cannot occur. Consequently, all conditions of Definition \ref{def:equilibrium_cooperation_joint_strategy} are satisfied, implying that $(\pi^1,\pi^2) \in ECJS = PNE$.\\
    \textbf{Dominance:}\\
      If $(\pi^1,\pi^2) = (SIP_{s_1,g_1},SIP_{s_2,g_2})$, then the algorithm implicitly adds this joint strategy to its result set. 
      Otherwise, if $(\pi^1,\pi^2) \neq (SIP_{s_1,g_1},SIP_{s_2,g_2})$, then by Theorem \ref{theorem:pne_equivalance}, $(\pi^1,\pi^2) \in \mathbb{ECJS}$ and can be represented as:
\[
\pi^1 = \pi^{1(NC)}_{s_1, c_s} \circ \pi^S_{c_s, c_d} \circ SIP_{c_d, g_1}, \quad 
\pi^2 = \pi^{2(NC)}_{s_2, c_s} \circ \pi^S_{c_s, c_d} \circ SIP_{c_d, g_2}
\]
where ${c_s}$ is the cooperation starting node and ${c_d}$ is the cooperation ending node.
This joint strategy is dominated by the optimal joint strategy in $\mathbb{ECJS}^{c_d}$:
\[
(\pi'^1, \pi'^2) = \left(SIP^{NC}_{s_1,c_s} \circ SCP^S_{c_s,c_d} \circ SIP_{c_d,g_1},\ SIP^{NC}_{s_2,c_s} \circ SCP^S_{c_s,c_d} \circ SIP_{c_d,g_2}\right)
\]
If there is no stable cooperation partial path from ${c_d}$ to any other cooperation node in the graph, the algorithm uses Algorithm \ref{algorithm:optimal-equilibrium-joint-strategy} to identify $(\pi'^1, \pi'^2)$, ensuring that $(\pi'^1, \pi'^2)$ is returned.
If, however, there exists a cooperation node ${c_{d'}}$ such that a stable cooperation partial path $\pi^S_{c_d,c_{d'}}$ exists  
(among multiple such nodes, we select ${c_{d'}}$ as one that has no stable cooperation partial path leading to any other cooperation node in the graph),  
then by Lemma~\ref{lemma:strategy-pruning-claim}, the following joint strategy:
\[
(\pi''^1, \pi''^2) = \left(SIP^{NC}_{s_1,c_s} \circ SCP^S_{c_s,c_d} \circ \pi^S_{c_d,c_{d'}} \circ SIP_{c_{d'},g_1},\  
SIP^{NC}_{s_2,c_s} \circ SCP^S_{c_s,c_d} \circ \pi^S_{c_d,c_{d'}} \circ SIP_{c_{d'},g_2}\right)
\]
dominates $(\pi'^1, \pi'^2)$.  
Since $(\pi''^1, \pi''^2) \in \mathbb{ECJS}^{c_{d'}}$ and the algorithm returns the optimal joint strategy in $\mathbb{ECJS}^{c_{d'}}$,  
which dominates all other joint strategies in $\mathbb{ECJS}^{c_{d'}}$, it follows that the returned strategy also dominates $(\pi''^1, \pi''^2)$,  
which in turn dominates $(\pi^1, \pi^2)$.\\
Consequently, for any given joint strategy $(\pi^1,\pi^2) \in PNE$, Algorithm~\ref{algorithm:map-all-equilibrium-joint-strategy} either returns $(\pi^1,\pi^2)$ or an equilibrium cooperation joint strategy $(\pi'^1,\pi'^2)$ that dominates it.
\end{proof}

\paragraph{Complexity Analysis.}
Algorithm \ref{algorithm:map-all-equilibrium-joint-strategy} iterates over all $m$ cooperation nodes, invoking
\textsc{shortest stable paths} and
Algorithm~\ref{algorithm:optimal-equilibrium-joint-strategy} for each, yielding 
complexity of $\mathcal{O}(m \cdot(|E| + |V| \log |V|))$.

\section{ECJS Selection} \label{section:strategy selection}
When multiple non-dominated ECJSs exist, trade-offs arise: 
one agent’s travel time improves only if the other’s worsens (see example in Figure~\ref{fig:correlation}).
However, since every ECJS improves upon independent shortest paths, agents share an interest in agreeing on a strategy.

\begin{figure}[ht]
    \centering
    \includegraphics[width=0.6\linewidth]{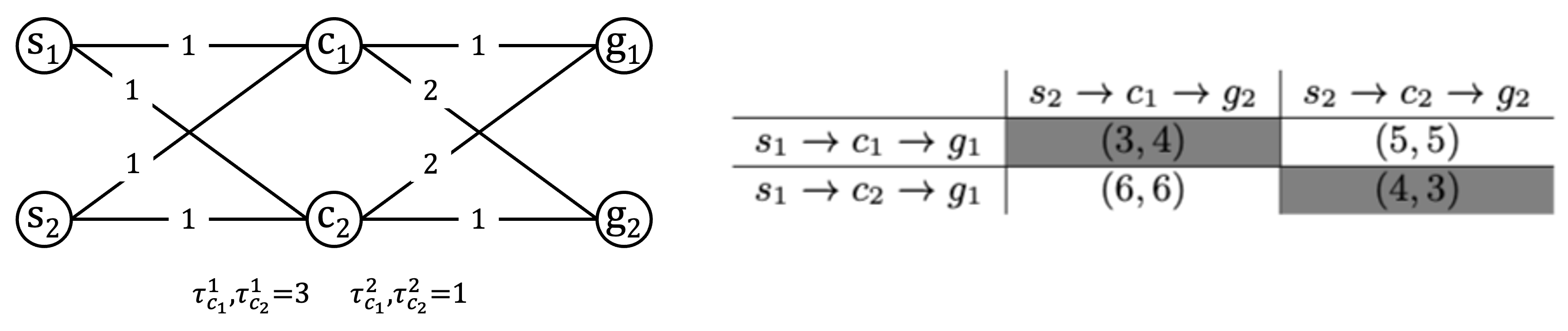}
    \caption{Two non-dominated ECJSs: cooperation at $c_1$ benefits $a_1$, while cooperation at $c_2$ benefits $a_2$; both outperform no cooperation.}
    \label{fig:correlation}
\end{figure}


This coordination problem can be modeled as a \emph{correlation game}
\cite{aumann1974subjectivity}, where agreement on a joint strategy is essential to
avoid suboptimal outcomes. Coordination can be achieved either through
\emph{conventions} \cite{lewis2008convention,vanderschraaf1995convention} (e.g.,
minimizing the maximum path time, maximizing minimum utility relative to the
$SIP$, or maximizing social welfare) or through the concept of a correlated equilibrium \cite{aumann1974subjectivity}.
In this setting, the agents agree on a shared probability distribution over the set of non-dominated ECJSs 
and use it to jointly select a strategy (e.g. via an external correlation device). This mechanism ensures consistent execution and prevents mismatched decisions, enabling the agents to maximize the expected value of a shared objective function, such as fairness or social welfare.

Since the joint strategies in $\mathbb{ECJS}$ inherently reflect conflicting preferences, reaching a mutually acceptable agreement can be challenging. To address this, we model the problem as a bargaining game \cite{nash1950bargaining}, aiming to identify a correlated equilibrium guided by standard bargaining solution concepts.
Specifically, we seek a solution that satisfies the following properties:\\
\begin{enumerate}
    \item \textbf{Pareto Optimality}: No agent's outcome can be improved without worsening the outcome of the other.
    \item \textbf{Symmetry}: Identical agents with symmetric options should receive identical outcomes.
    \item \textbf{Fairness}: The outcome should be impartial and just, avoiding favoritism or discrimination between agents.
\end{enumerate}
While \emph{Pareto optimality} and \emph{symmetry} are well defined, \emph{fairness} is more nuanced and context dependent. We therefore operationalize fairness by comparing four common bargaining solutions: Nash solution~\shortcite{nash1950bargaining}, Kalai-Smorodinsky solution~\shortcite{Other-Solutions}, and the egalitarian and utilitarian solutions.

Formally, we denote the set of non-dominated joint strategies in $\mathbb{ECJS}$ by $\mathbb{ECJS}^{ND}$, and represent the selection of a probability distribution over the joint strategies in $\mathbb{ECJS}^{ND}$ as a bargaining game $(S, d)$. The utility of each agent is defined by
the time saved compared to its $SIP$. That is, for a given joint strategy $(\pi^1,\pi^2)$ the utility of agent $a_i$ is defined as $u_i(\pi^1,\pi^2)=T(SIP_{s_i,g_i})-T(\pi^{i}|\pi^{-i})$.
We define the disagreement point \( d \) as:
\[
d = \left( u_1(SIP_{s_1,g_1}, SIP_{s_2,g_2}),\; u_2(SIP_{s_2,g_2}, SIP_{s_1,g_1}) \right)
\]
The set \( S \) of all possible outcomes in the bargaining game is defined as the set of expected utility pairs achievable by randomizing over joint strategies in \(\mathbb{ECJS}^{ND}\). 
Let $\mathbb{ECJS}^{ND}=\{(\pi_1^1,\pi_1^2),\ldots,(\pi_n^1,\pi_n^2)\}$. Then
\[
S=\Bigl\{\sum_{i=1}^n \alpha_i \cdot
\bigl(u_1(\pi_i^1,\pi_i^2),\,u_2(\pi_i^2,\pi_i^1)\bigr)
\;\Big|\;
\alpha_i\ge0,\ \sum_{i=1}^n\alpha_i=1
\Bigr\}
\]

where each $\alpha_i$ denotes the probability assigned to the $i$-th joint strategy in  $\mathbb{ECJS}^{ND}$, and the resulting pair represents the expected utilities of the two agents.
We determine the parameters $\alpha_1,\ldots,\alpha_n$ for each solution concept as follows. 
\begin{enumerate}
    \item \textbf{Nash Solution}, maximizes the product of the agents’ utilities relative to the disagreement point, capturing mutual benefit under rational cooperation.
    \item \textbf{Kalai-Smorodinsky Solution}, ensures proportional fairness by preserving each agent’s utility ratio relative to its maximum attainable utility.
    \item \textbf{Egalitarian Solution}, seeks to equalize outcomes by maximizing the minimum utility across agents.
    \item \textbf{Utilitarian Solution}, maximizes the overall social welfare.
\end{enumerate}

Since we seek a Pareto-optimal solution, we restrict our attention to the Pareto frontier of the set~\(S\), 
where no agent's expected utility can be improved without worsening the other's. As a result, the solution corresponds to a probability distribution supported on at most {\em two} joint strategies. The specific strategies involved, however, may vary depending on the chosen bargaining solution concept.

\section{Implementation and Results}
To provide a comprehensive perspective of \ictpp and understand how different selection methods affect individual and social outcomes, we developed a simulation tool and fully implemented all algorithms\footnote{The repositories for the simulation tool, algorithm implementation, experimental data, and results are publicly available at \url{https://iscmpp.info/}.}. We then evaluated \ictpp algorithms on randomly generated graphs with diverse topologies, travel times, and node delays, examining the factors that influence cooperation incentives across these varied settings.
To broaden the experimental evaluation, we also utilize the MAPF Benchmark Set~\cite{stern2019multi}, which provides a variety of grid-based maps. To adapt these maps to \ictpp, each grid cell is modeled as a node with two delay values, $\tau^1$ for single-agent and $\tau^2$ for cooperative execution, while edges between adjacent cells have a travel time of $1$ time unit.

To capture the factors that influence cooperation, we varied the following attributes of an \ictpp instance:
    \begin{itemize}
        \item \textbf{Cooperation Magnitude} --- The average ratio between the task execution time of a single agent and that of two agents cooperating at a cooperation node.
         \item \textbf{Cooperation Density} --- The ratio of cooperation nodes to all nodes in the graph.
         \item \textbf{Path Lengths} --- The minimum, over the two agents, of the shortest independent path time from the agent's start node to its target node.
        \item \textbf{Shortest Paths Divergence (SPD)} --- A path-alignment measure inspired by the Fréchet-distance view of comparing paths according to their ordered progression. SPD measures the minimum shortest independent path time between the two agents while they traverse their respective shortest independent paths.
        Lower SPD values indicate stronger alignment between the agents' independent paths, while higher values indicate that the paths remain more separated.
    \end{itemize}

   
    

We compare methods for joint strategy selection as follows. First, we compute the optimal social welfare for each scenario using a polynomial time algorithm (explained in details in Appendix~\ref{supp:social-welfare}). Then, using Algorithm~\ref{algorithm:map-all-equilibrium-joint-strategy}, we identify all non-dominated ECJS and evaluate the \emph{Price of Anarchy} (PoA) \cite{koutsoupias1999worst} and \emph{Price of Stability} (PoS) \cite{anshelevich2008price}, defined respectively as the ratios of the worst and best ECJS to the optimal social welfare. Finally, we compare individual path times and social welfare across ECJS selection methods under varying \emph{Cooperation Magnitude, Density, Path Length}, and \emph{SPD}.
We conducted four separate sets of experiments, each evaluating the effect of a different cooperation factor (Magnitude, Density, Path Length, and SPD) across a range of values.

Cooperation Density and Cooperation Magnitude are controlled directly by assigning cooperation nodes and their corresponding delay values. In contrast, Path Length and Shortest Paths Divergence (SPD) depend on the induced shortest paths between the agents’ start and target nodes, and therefore cannot be fixed directly. We control these factors indirectly using proxy measures based on the grid topology: the Manhattan distance between each agent’s start and target nodes for Path Length, and the Manhattan distance between the agents’ corresponding start and target nodes for SPD. The generated instances are then grouped according to their actual Path Length and SPD values, and the results are aggregated within each group.
For each configuration of Cooperation Density and Cooperation Magnitude, the algorithms were evaluated on 100 randomly sampled scenarios from the MAPF Benchmark Set. For SPD and Path Length, we evaluated approximately 2,500 and 3,500 scenarios, respectively, using varying configurations of the relevant Manhattan-distance proxy measures. Since these factors are controlled only indirectly, the resulting number of instances differs across intervals.
When evaluating the effect of a given attribute, the remaining attributes were fixed as follows: Cooperation Density was set to $0.7$, Cooperation Magnitude to $10$, the Manhattan distance between each agent's start and target nodes to $20$, corresponding to Path Length values of approximately $200$, and the Manhattan distance between the agents' corresponding start and target nodes to $3$, corresponding to SPD values of approximately $30$. These values were chosen because they produced a diverse set of PNEs, enabling a meaningful comparison among the different selection methods.
Appendix~\ref{appendix:tabular-data} reports the aggregated numerical values underlying all charts presented in this section.

Figure~\ref{fig:avg} reports the average number of PNEs across the four cooperation factors. The number of PNEs increases with the potential for cooperation, as reflected by higher Cooperation Density, Cooperation Magnitude, and Path Length, as well as by lower SPD values. For high SPD values, which correspond to highly separated paths, the average number of PNEs converges to one. For short paths, the number of potential PNEs is limited. As Path Length increases, more scenarios admit multiple PNEs. However, for very long paths, the number of PNEs decreases slightly again, possibly because longer paths introduce more opportunities for profitable deviations, making cooperative equilibria harder to sustain.
\begin{figure}[ht]
    \centering
    \begin{subfigure}[b]{0.23\linewidth}
        \centering
        \includegraphics[width=\linewidth]{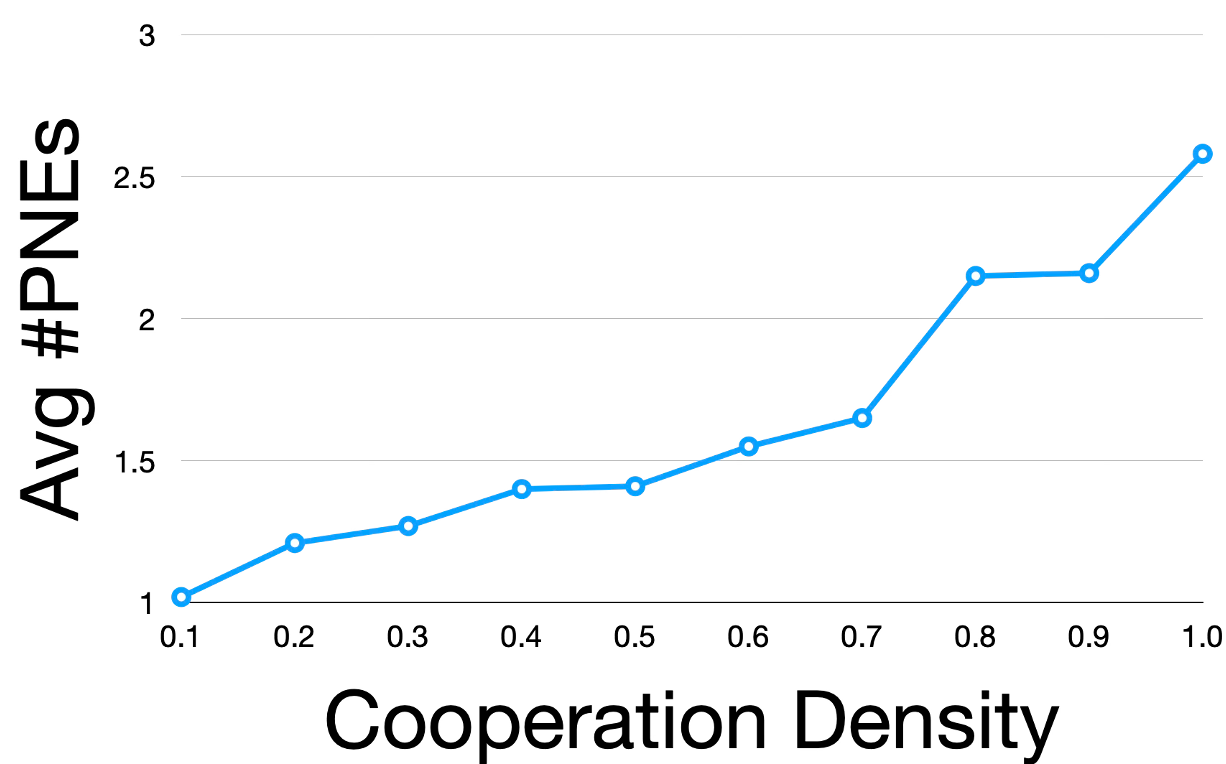}
        \label{fig:avg-den}
    \end{subfigure}
    \hfill
    \begin{subfigure}[b]{0.23\linewidth}
        \centering
        \includegraphics[width=\linewidth]{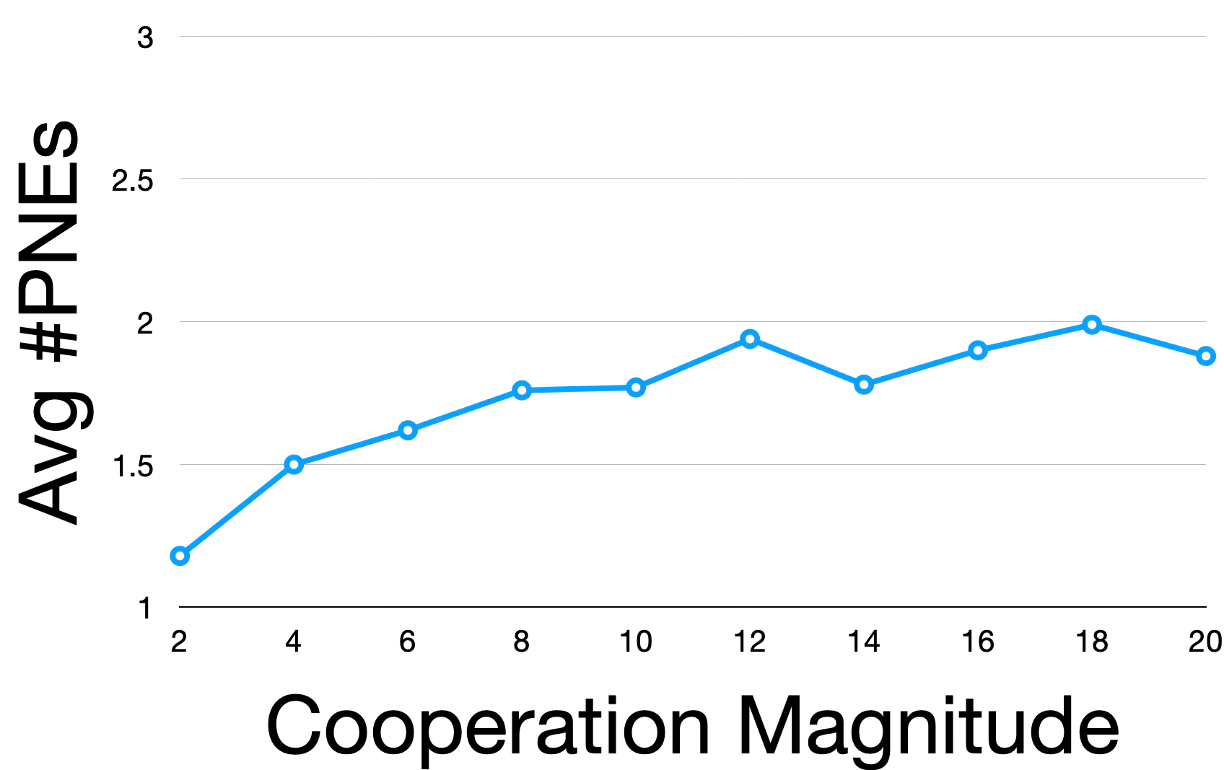}
        \label{fig:avg-magnitude}
    \end{subfigure}
    \hfill
     \begin{subfigure}[b]{0.23\linewidth}
        \centering
        \includegraphics[width=\linewidth]{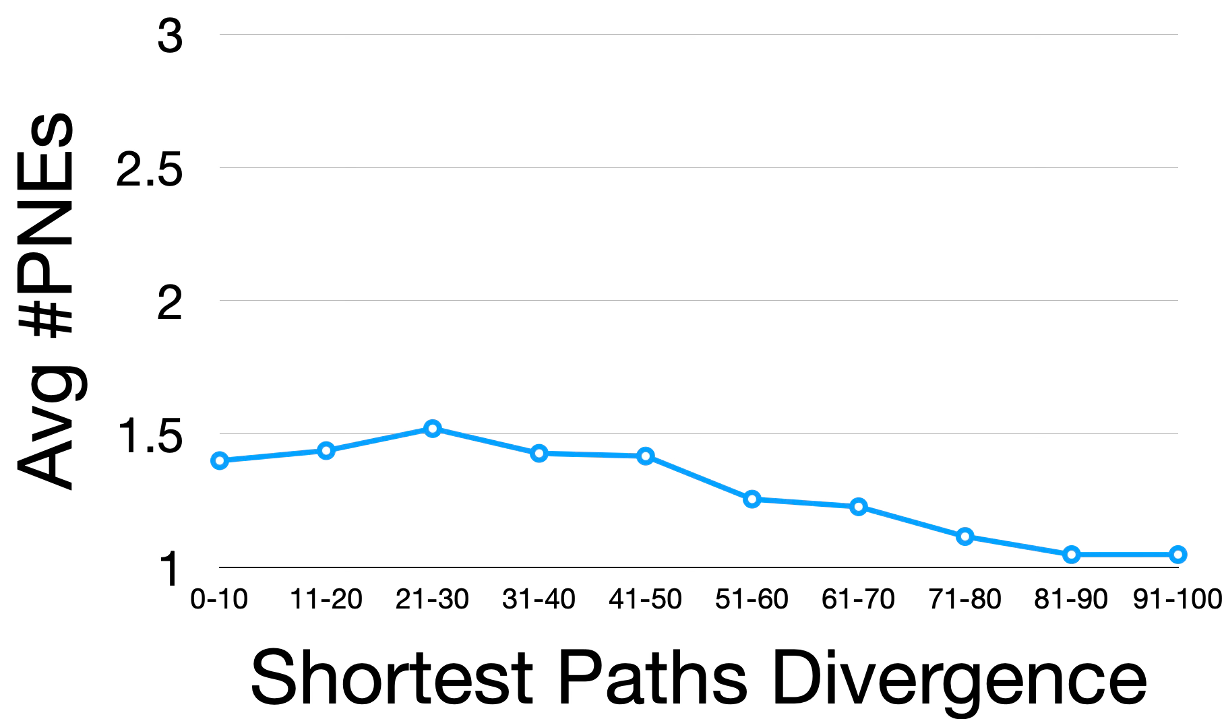}
        \label{fig:avg-divergence}
    \end{subfigure}
    \hfill
    \begin{subfigure}[b]{0.23\linewidth}
        \centering
        \includegraphics[width=\linewidth]{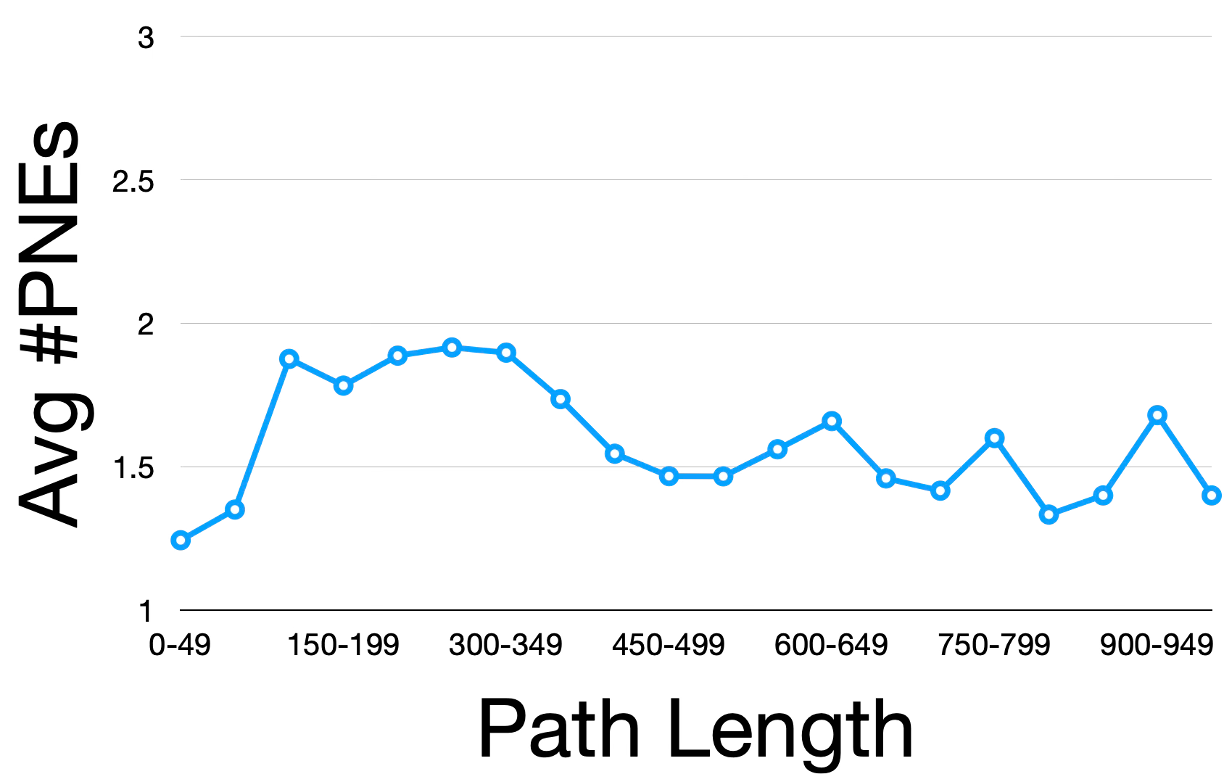}
        \label{fig:avg-length}
    \end{subfigure}
    \hfill
    \caption{Effects of cooperation factors on average amount of PNEs.}
      \label{fig:avg}
\end{figure}

\begin{figure}[ht]
    \centering
    \begin{subfigure}[b]{0.49\linewidth}
        \centering
        \includegraphics[width=\linewidth]{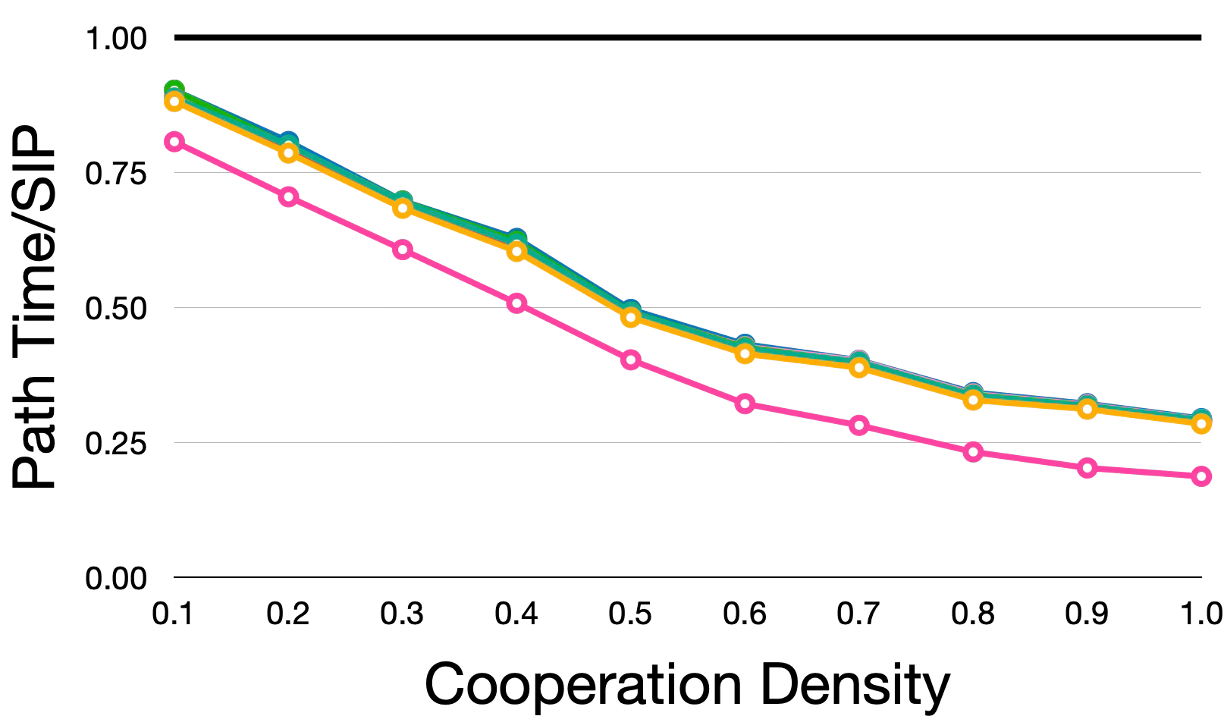}
        \label{fig:res-density-individual}
    \end{subfigure}
    \hfill
    \begin{subfigure}[b]{0.49\linewidth}
        \centering
        \includegraphics[width=\linewidth]{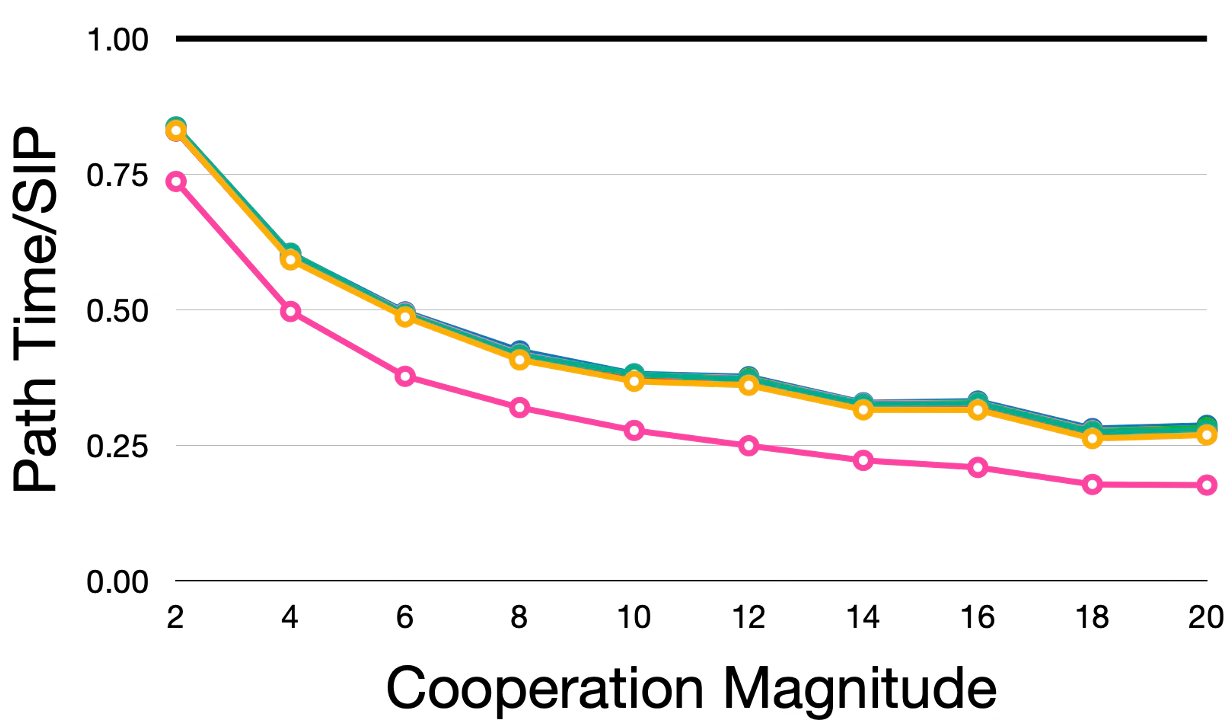}
        \label{fig:res-magnitude-individual}
    \end{subfigure}
    \hfill
     \begin{subfigure}[b]{0.49\linewidth}
        \centering
        \includegraphics[width=\linewidth]{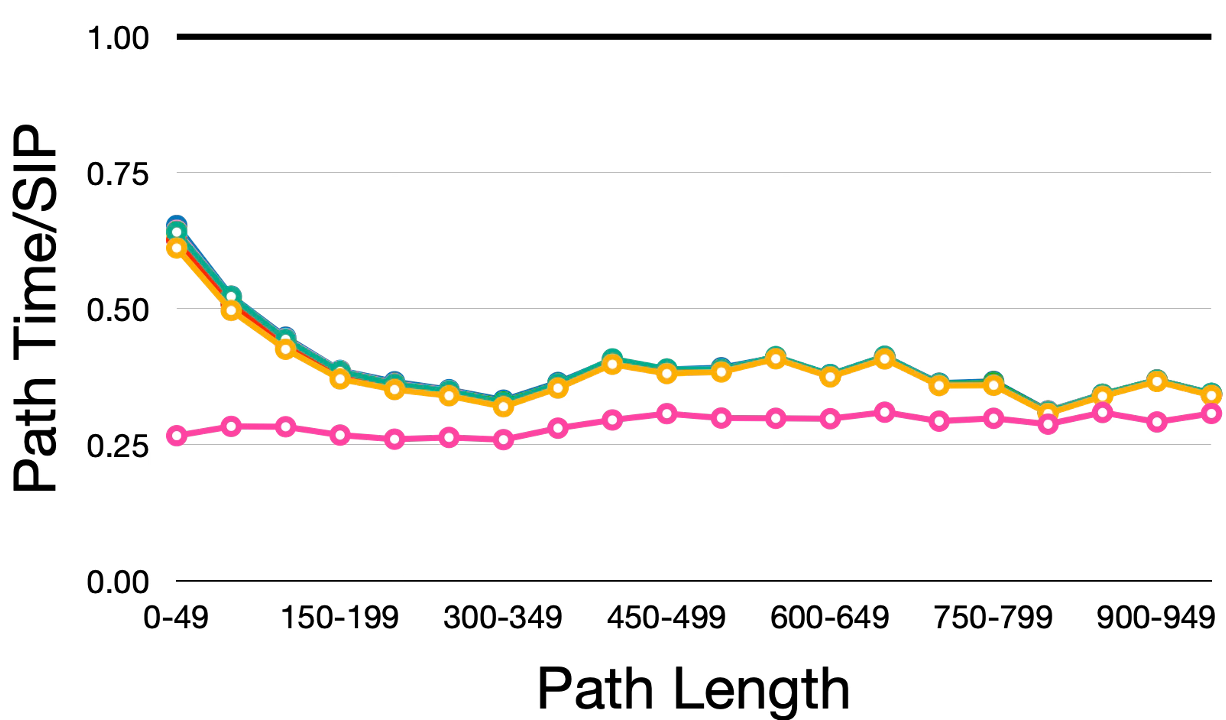}
        \label{fig:res-divergence-individual}
    \end{subfigure}
    \hfill
    \begin{subfigure}[b]{0.49\linewidth}
        \centering
        \includegraphics[width=\linewidth]{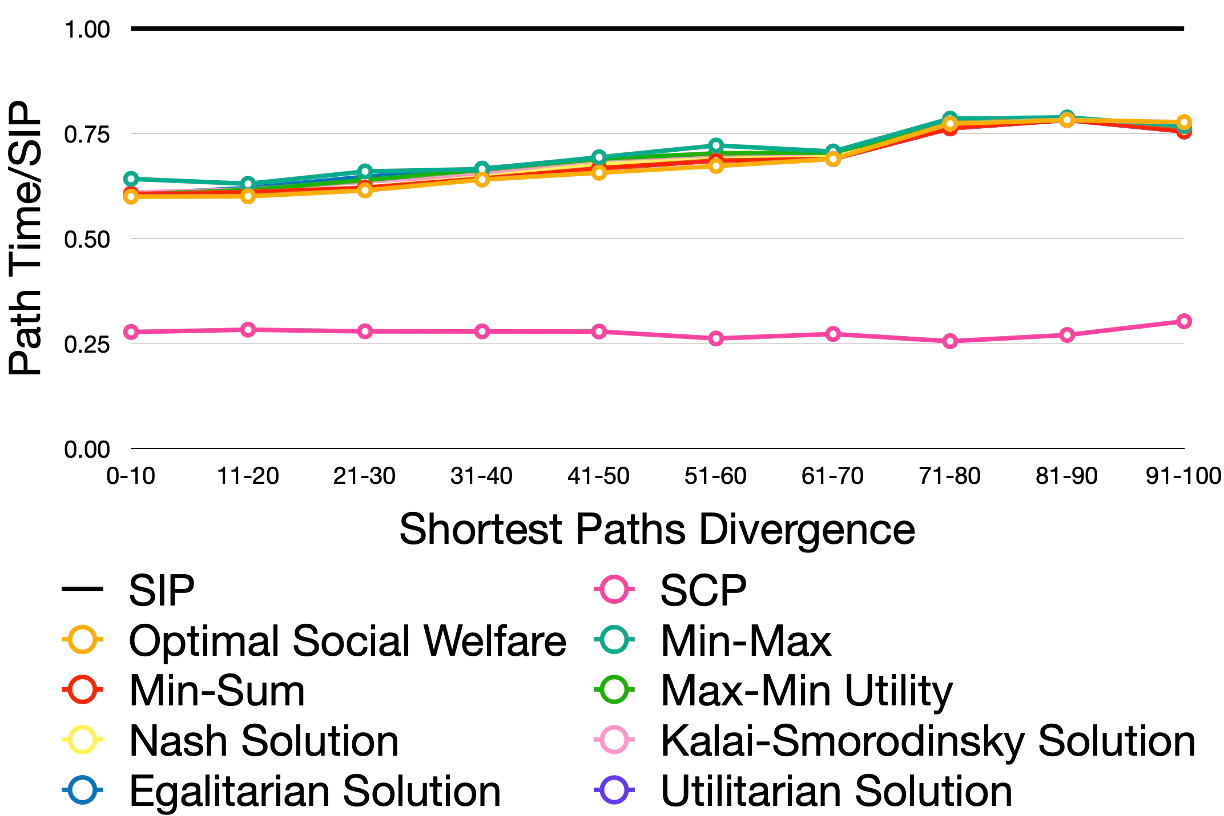}
        \label{fig:res-length-individual}
    \end{subfigure}
    \hfill
    \caption{Effects of cooperation factors on individual path times.}
      \label{fig:results}
\end{figure}
\begin{figure}[ht]
    \begin{subfigure}[b]{0.49\linewidth}
        \centering
        \includegraphics[width=\linewidth]{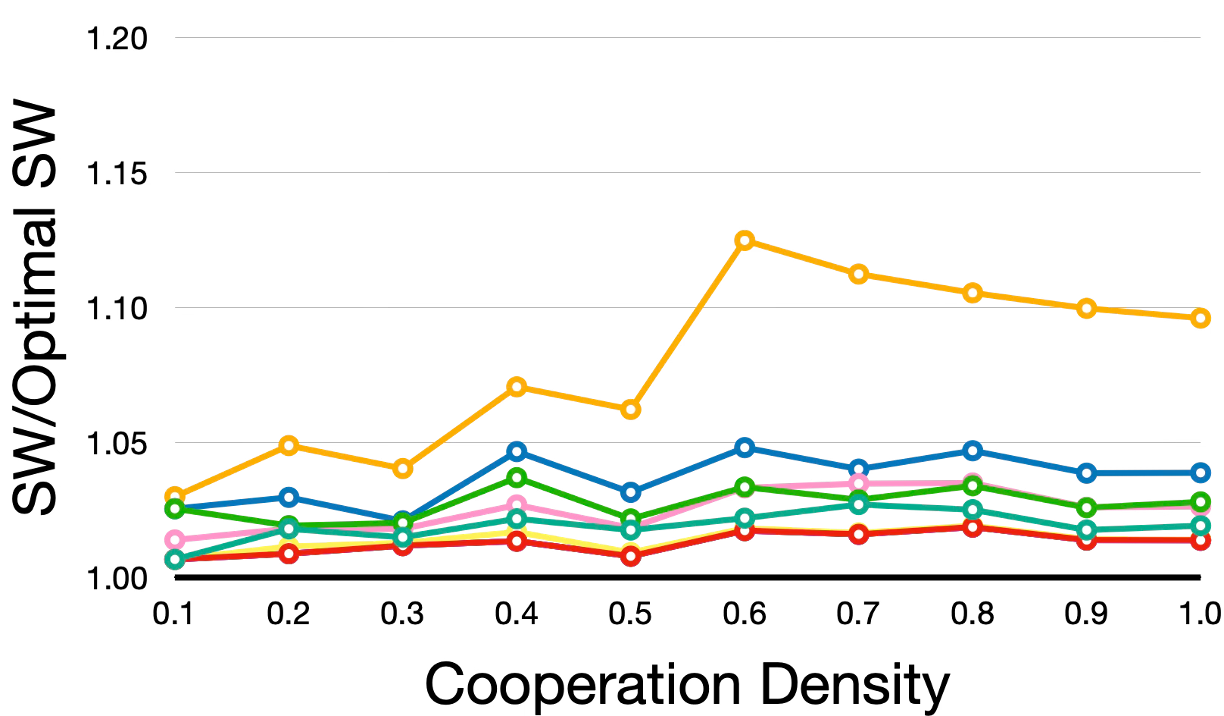}
    \end{subfigure}
    \hfill
    \begin{subfigure}[b]{0.49\linewidth}
        \centering
        \includegraphics[width=\linewidth]{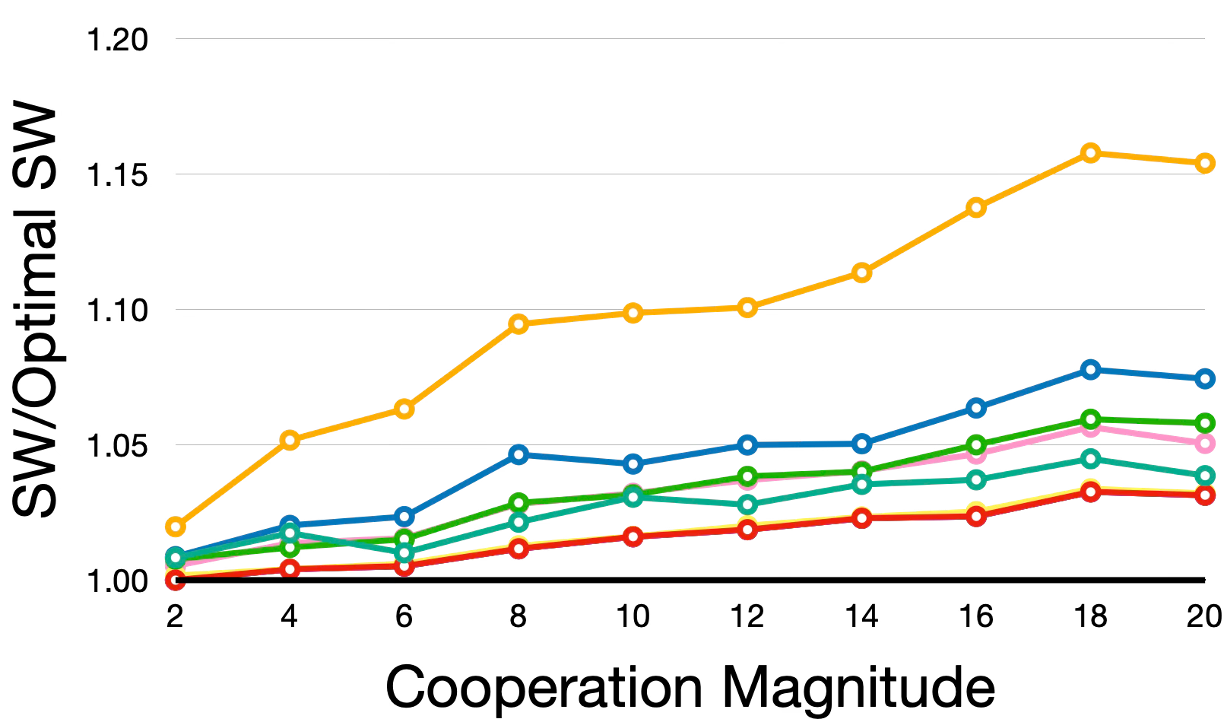}
    \end{subfigure}
    \hfill
    \begin{subfigure}[b]{0.49\linewidth}
        \centering
        \includegraphics[width=\linewidth]{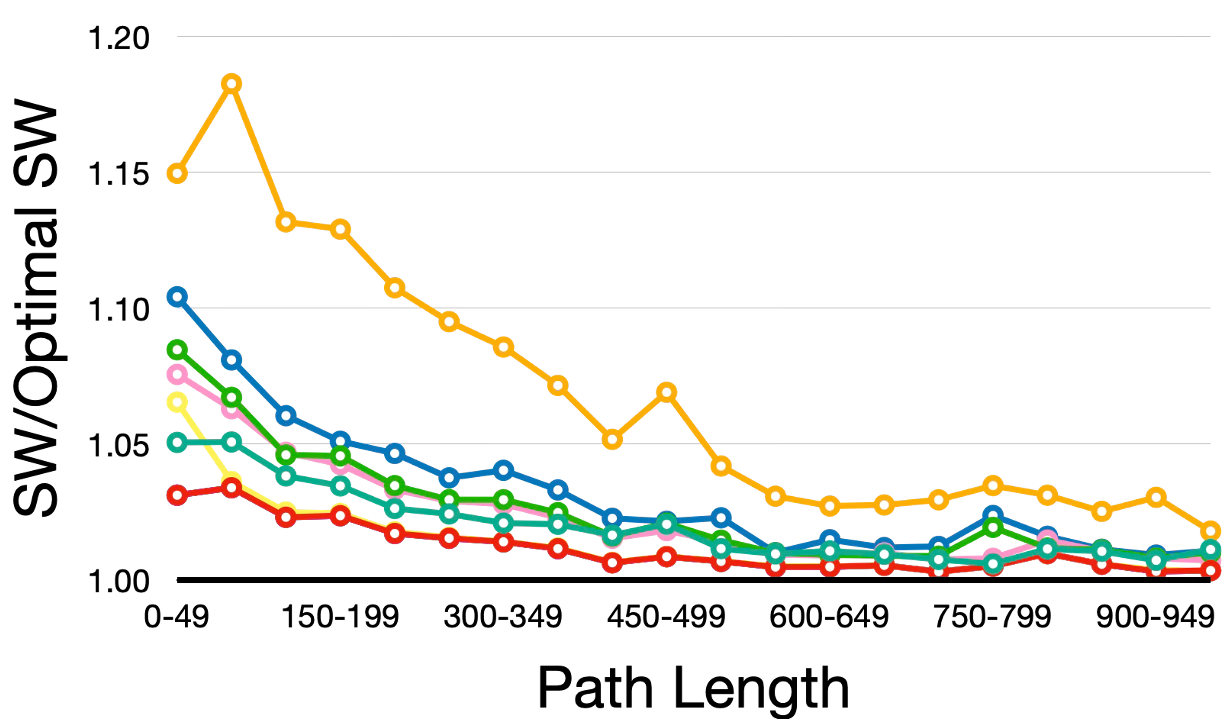}
    \end{subfigure}
    \hfill
    \begin{subfigure}[b]{0.49\linewidth}
        \centering
        \includegraphics[width=\linewidth]{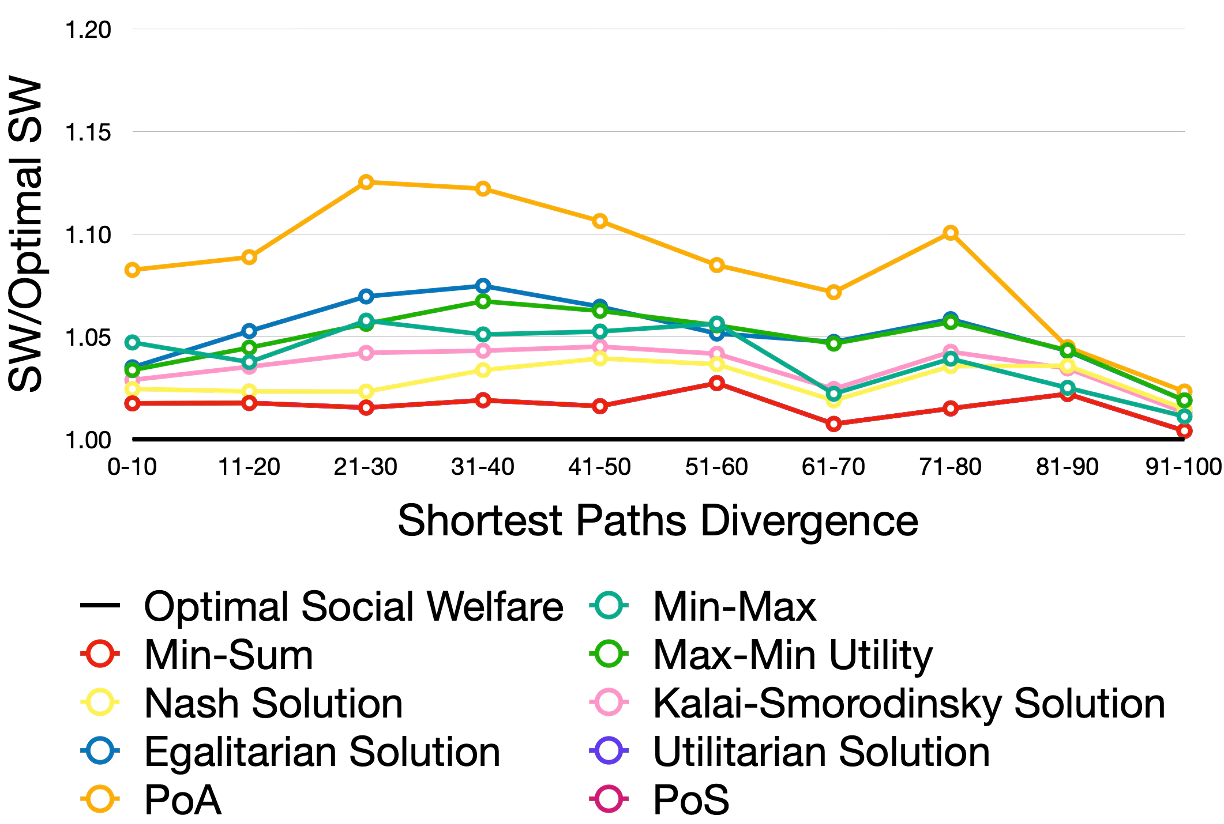}
    \end{subfigure}
    \caption{Effects of cooperation factors on social welfare.}
      \label{fig:res-length-social}
\end{figure}

To enable a more meaningful comparison among the different selection methods, we focus on the subset of instances that admit more than one PNE. Appendix~\ref{appendix:non-filtered-charts} reports the corresponding results over all tested scenarios.
Figure~\ref{fig:results} summarizes the impact of the cooperation parameters on individual path times, measured by improvement ratios relative to the shortest independent paths. Increasing Cooperation Density and Cooperation Magnitude improves the agents' individual path times under all selection methods, following trends similar to those of the shortest cooperation path ($SCP$)\footnote{Recall that the shortest cooperation path ($SCP$) is defined as the shortest path obtained under the idealized assumption that the agents cooperate at all nodes without waiting. This path is generally not attainable and applies only in specific cases, such as when both agents follow the same path.}.
Longer paths also lead to greater individual improvements, with the outcomes approaching the performance of the shortest cooperation path.
The effect of SPD is more gradual: as paths become more separated, the improvement decreases slowly. However, this trend should be interpreted with caution, since the analysis is restricted to scenarios with more than one PNE, and such scenarios constitute only a small fraction of highly separated instances.
Across all four factors, the different selection methods yield nearly identical individual path times. Moreover, these path times are close to those achieved by the optimal social-welfare joint strategy.
Figure~\ref{fig:res-length-social} summarizes the impact of the cooperation parameters on social welfare, measured by ratios relative to the optimal social welfare. Cooperation Density has little effect on the social welfare achieved by the different selection methods. However, when the density of cooperation nodes exceeds 0.5, the importance of coordination increases substantially, as reflected by the sharp increase in the PoA.
As Cooperation Magnitude increases, the gap between PNE outcomes and the optimal social-welfare solution widens, highlighting the social-welfare cost of stability.
SPD does not exhibit a consistent trend. Nevertheless, for highly separated paths, stable joint strategies appear to approach the optimal social-welfare solution. In contrast, Path Length exhibits a strong relation to the social welfare of PNE outcomes: as paths become longer, the outcomes approach the optimal social welfare.
Across all scenarios, the Min-Sum convention and the Utilitarian solution consistently attain the lowest social-welfare ratios, as both explicitly optimize total path time. In contrast, the Min-Max convention and the Egalitarian solution yield the highest ratios, reflecting their emphasis on fairness and equality rather than purely minimizing total path time.

Overall, since the different selection methods produce comparable individual outcomes, agents can adopt shared social objectives, such as makespan minimization or social-welfare maximization, without substantially compromising individual performance.

\section{Conclusions}
In this paper, we introduced the \ictpp framework for analyzing strategic cooperation among two self-interested agents whose incentives to cooperate depend on temporal and spatial context. 
We characterized the structure of Pure Nash Equilibria (PNE) and developed an
efficient algorithm to enumerate all non-dominated PNE, proving existence in
all instances.
We compared coordination mechanisms for two agents and examined how different factors affect cooperation outcomes in general.

These findings establish a foundation for understanding and guiding rational cooperation in autonomous multi-agent systems operating in real-world settings, where cooperation incentives are context-dependent and shaped by temporal and spatial synchronization. 
Therefore, this work opens the door to a broad line of future research, including extensions to $k>2$ agents, online planning, and scenarios with limited information about other agents.

\appendix
\bibliographystyle{plainnat}
\bibliography{bibli}

\appendix
\section{Best Response Algorithm}\label{app:br-algo}
Algorithm \ref{algorithm:BRP} identifies the best response strategy for agent $a_1$, given $a_2$'s path $\pi$.  
The algorithm begins by computing $SIP_{s_1}$ (the shortest independent paths from $a_1$'s starting node to all other nodes)  
and $SIP_{g_1}$ (the shortest independent paths from all nodes to $a_1$'s target node) and initializes the optimal cooperation  
starting and departure nodes to \textsc{None} [lines 1-3]. It then iterates over all nodes in $\pi$ to find the first cooperation  
node $a_1$ can reach at a cooperation-relevant time, denoted as ${c^*_1}$ [lines 4-7].  
If no such node exists, since no cooperation can be established, the algorithm returns the independent shortest path as the best response to $\pi$ [lines 8-9]. Otherwise, it tracks the cooperation path time along $\pi_{c^*_1,g_2}$  
and determines the departure node $d$ that optimizes $a_1$'s arrival at its target [lines 10-18].  
Finally, the algorithm compares the cooperation-assisted path time with the independent shortest path time  
and returns the strategy that minimizes travel time [lines 19-22].
\begin{algorithm}[ht]
\caption{\textsc{Best Response Path ($G, V_C, {s_1}, {g_1}, {s_2}, {g_2}, \pi$)}}\label{algorithm:best-response}
\label{algorithm:BRP}
\begin{algorithmic}[1]

    \State $SIP_{s_1} \gets \textsc{all shortest paths from}(G, \tau^1, {s_1})$ \Comment{Find shortest paths from starting nodes}
    \State $SIP_{g_1} \gets \textsc{all shortest paths to}(G, \tau^1, {g_1})$ \Comment{Find shortest paths to target nodes}
    \State ${c^*_1}, d \gets None$
    \ForAll {${c_i} \in \pi$ by its order of occurrence along $\pi$}
        \If{$SIP_{s_1,c_i} \leq \pi_{s_2,c_i} +\tau^1_{c_1}- \tau^2_{c_i}$}  \Comment{Find the first cooperation node in $\pi$ that $a_1$ can reach}
            \State ${c^*_1} \gets {c_i}$ \Comment{At a cooperation-relevant time}
            \State \textbf{break}
        \EndIf
    \EndFor
    \If{${c^*_1} = None$} \Comment{If cooperation cannot be established, return the independent shortest path}
        \State \Return $SIP_{s_1,g_1}$
    \EndIf
    \State $optimalPathTime \gets \infty$
    \State $cooperationPathTime \gets 0$
    \ForAll {$v \in \pi_{c^*_1,g_2}$ by its order of occurrence along $\pi$} \Comment{Find the optimal cooperation departure node}
        \State $cooperationPathTime \gets cooperationPathTime + \tau^2_{v}$
        \If{$cooperationPathTime + SIP_{v,g_1} \leq optimalPathTime $} \footnote{}
            \State $optimalPathTime \gets cooperationPathTime + SIP_{v,g_1}$
            \State ${d} \gets v$
        \EndIf
        \State $v_{next} \gets \text{next node in } \pi$
        \State $cooperationPathTime \gets cooperationPathTime + \tau_v,v_{next}$
    \EndFor
    \If{$SIP_{s_1,c^*_1} + optimalPathTime \leq SIP_{s_1,g_1}$} \Comment{Return the shorter path between the independent}
        \State \Return $SIP_{s_1,c^*_1} \circ \pi_{c^*_1,d} \circ SIP_{d,g_1}$ \Comment{shortest path and the cooperation-assisted path}
    \Else
        \State \Return $SIP_{s_1,g_1}$
    \EndIf
\end{algorithmic}
\end{algorithm}

\footnotetext{In the context of the pseudocode, $SP_{v,u}$ denotes both the shortest path between nodes $v$ and $u$ and the associated path time, depending on the context.}

\paragraph{Complexity Analysis}  
The algorithm begins with two executions of Dijkstra's algorithm to compute the shortest paths from ${s_1}$ and ${g_1}$ to all nodes in the graph.  
Each run of Dijkstra's algorithm has a time complexity of \( \mathcal{O}(|E| + |V| \log |V|) \).  
Thus, the total complexity for these two runs is:  
\[
\mathcal{O}(2 \cdot (|E| + |V| \log |V|)) = \mathcal{O}(|E| + |V| \log |V|)
\]  
The algorithm then iterates over all nodes in $\pi$ twice, first to find the optimal cooperation starting node  
and then to determine the optimal cooperation departure node. Since $|\pi| \leq |V|$, these two iterations contribute:  
\[
\mathcal{O}(2 \cdot |V|) = \mathcal{O}(|V|)
\]  
Combining these, the overall complexity of the algorithm remains:  
\[
\mathcal{O}(|E| + |V| \log |V|)
\]

\section{Proof of Theorem \ref{theorem:pne_equivalance}}\label{app:theorem:pne_equivalance}
\paragraph{Theorem \ref{theorem:pne_equivalance}}
A cooperation joint strategy constitutes a PNE if and only if it is an ECJS.
\begin{proof}
     \textbf{Direction 1: } $\boldsymbol{(\pi^1, \pi^2)\neq (SIP_{s_1,g_1},SIP_{s_2,g_2})}$ \textbf{is an ECJS} $\boldsymbol{\implies(\pi^1, \pi^2)}$ \textbf{is a PNE:}\\
Since $(\pi^1,\pi^2)$ is an ECJS, it can be expressed as:
\[
\pi^1 = \pi^{1(NC)}_{s_1, c_s} \circ \pi^S_{c_s, c_d} \circ SIP_{c_d, g_1}, \quad 
\pi^2 = \pi^{2(NC)}_{s_2, c_s} \circ \pi^S_{c_s, c_d} \circ SIP_{c_d, g_2}.
\]
Where ${c_s}, {c_d} \in V_C$, $\pi^{1(NC)}_{s_1, c_s} \in \Pi^{NC}_{s_1,c_s}$, $\pi^{2(NC)}_{s_2, c_s} \in \Pi^{NC}_{s_2,c_s}$, and $\pi^S_{c_s,c_d} \in \Pi^S_{c_s,c_d}$.\\    

    By Theorem \ref{cor1}, the optimal path for $a_i$, assuming $\pi_{-i}$'s is fixed, known in advance, and rational, is one of the following:
    \begin{enumerate}
        \item The shortest independent path from ${s_i}$ to ${g_i}$.
        \item The path $SIP_{s_i,c^*_1} \circ \pi^{-i}_{c^*_1,d} \circ SIP_{d,g_i}$, where ${c^*_1}$ is the first cooperation node along $\pi^{-i}$ that $a_i$ can reach at a cooperation-relevant time, and $d$ is the optimal departure node for $a_i$ along $\pi^{-i}_{c^*_1,g_i}$
    \end{enumerate}
    Since $(\pi^1, \pi^2)$ is an ECJS, following Condition \ref{item:cond-cooperation-over-independent}, it holds that:
    \[
    T(\pi^i \mid \pi^{-i}) \leq T(SIP_{s_i, g_i})
    \]
    We therefore examine the cooperation-assisted path $SIP_{s_i,c^*_1} \circ \pi^{-i}_{c^*_1,d} \circ SIP_{d,g_i}$.  
    Since $a_{-i}$'s path to ${c_s}$ is non-cooperative (Condition \ref{item:cond-non-coop}, it follows that ${c_s}$ is the first cooperation node along $\pi^{-i}$ that $a_i$ can reach at a cooperation-relevant time. Furthermore, by Condition~\ref{item:cond-departure}, \(c_d\) is the optimal departure node for $a_i$ along $\pi^{-i}_{c_s,g_{-i}}$. Therefore, $a_i$'s best response to $\pi^{-i}$ is the path 
    \[
    \pi^{i^*} = SIP_{s_i,c_s} \circ \pi^{-i}_{c_s, c_d} \circ SIP_{c_d, g_i} =  SIP_{s_i,c_s} \circ \pi^S_{c_s, c_d} \circ SIP_{c_d, g_i}
    \]
    The arrival time of $a_i$ at its target node using this path can be expressed as:
    \[
    T(\pi^{i^*} | \pi^{-i}) = \max(T(SIP_{s_i, c_s}), T(\pi^{{-i}(NC)}_{s_{-i}, c_s})) + \tau^2_{c_s} + D(\pi^S_{c_s, c_d} | \pi^S_{c_s, c_d}) + T(SIP_{c_d, g_i})
    \]  
    We now analyze $a_i$'s path $\pi^i = \pi^{i(NC)}_{s_i, c_s} \circ \pi^S_{c_s, c_d} \circ SIP_{c_d, g_i}$. The arrival time of $a_i$ using $\pi^i$ is:
    \[
    T(\pi^i | \pi^{-i}) = \max(T(\pi^{i(NC)}_{s_i, c_s}), T(\pi^{{-i}(NC)}_{s_{-i}, c_s})) + \tau^2_{c_s} + D(\pi^S_{c_s, c_d} | \pi^S_{c_s, c_d}) + T(SIP_{c_d, g_i})
    \]
    Since $\pi^{1(NC)}_{s_1, c_s}$ and $\pi^{2(NC)}_{s_2, c_s}$ are \emph{Mutually-Robust Non-Cooperative Partial Paths}, there are two possible cases for $T(\pi^{i(NC)}_{s_i, c_s})$:
    \begin{enumerate}
        \item If $T(\pi^{i(NC)}_{s_i, c_s}) \leq T(\pi^{{-i}(NC)}_{s_{-i}, c_s})$, then $T(SIP_{s_i, c_s}) \leq T(\pi^{i(NC)}_{s_i, c_s}) \leq T(\pi^{{-i}(NC)}_{s_{-i}, c_s})$ Thus:
        \[
        \max(T(\pi^{i(NC)}_{s_i, c_s}), T(\pi^{{-i}(NC)}_{s_{-i}, c_s})) = \max(T(SIP_{s_i, c_s}), T(\pi^{{-i}(NC)}_{s_{-i}, c_s})) = T(\pi^{{-i}(NC)}_{s_{-i}, c_s})
        \]
        Hence:
        \[
        T(\pi^i | \pi^{-i}) = T(\pi^{i^*} | \pi^{-i})
        \]
        \item If $T(\pi^{i(NC)}_{s_i, c_s}) > T(\pi^{{-i}(NC)}_{s_{-i}, c_s})$, then, since $\pi^{1(NC)}_{s_1, c_s}$ and $\pi^{2(NC)}_{s_2, c_s}$ are Mutually-Robust Non-Cooperative Partial Paths, it follows that $\pi^{i(NC)}_{s_i, c_s} = SIP_{s_i, c_s}$. Thus:
        \[
        \max(T(\pi^{i(NC)}_{s_i, c_s}), T(\pi^{{-i}(NC)}_{s_{-i}, c_s})) = \max(T(SIP_{s_i, c_s}), T(\pi^{{-i}(NC)}_{s_{-i}, c_s}))
        \]
        Hence:
        \[
        T(\pi^i | \pi^{-i}) = T(\pi^{i^*} | \pi^{-i})
        \]
    \end{enumerate} 
    In both cases, $a_i$'s path $\pi^i$ serves as the best response to $\pi^{-i}$.  
    Therefore, the joint strategy $(\pi^1, \pi^2)$ constitutes a PNE.
    
 \textbf{Direction 2: } $\boldsymbol{(\pi^1, \pi^2)\neq (SIP_{s_1,g_1},SIP_{s_2,g_2})}$ \textbf{is a PNE}  $\boldsymbol{\implies (\pi^1, \pi^2)}$ \textbf{is an ECJS:}\\
Given a Pure Nash Equilibrium (PNE) joint strategy $(\pi^1, \pi^2)$, we assume, by contradiction, that one or more of the five conditions for an \emph{ECJS} is violated. Each condition is analyzed individually:
\begin{enumerate}
    \item \underline{Condition~\ref{item:cond-stable}:} 
    We consider both parts of Condition \ref{item:cond-stable}: (a) the agents cooperate along exactly one continuous cooperation segment, and (b) this cooperation segment is stable.
    \begin{enumerate}
        \item
        We assume, by contradiction, that the joint strategy $(\pi^1, \pi^2)$ contains more than one cooperation segment. In that case, it can be expressed as:
        \[
        \pi^1 = \pi^1_{s_1, c_{s_1}} \circ \pi^C_{c_{s_1}, c_{d_1}} \circ \pi^1_{c_{d_1}, c_{s_2}} \circ \pi^C_{c_{s_2}, c_{d_2}} \circ \pi^1_{c_{d_2}, c_{g_1}}
        \]
        \[
        \pi^2 = \pi^2_{s_2, c_{s_1}} \circ \pi^C_{c_{s_1}, c_{d_1}} \circ \pi^2_{c_{d_1}, c_{s_2}} \circ \pi^C_{c_{s_2}, c_{d_2}} \circ \pi^2_{c_{d_2}, c_{g_2}}
        \]
        However, according to Lemma \ref{cont} (Cooperation Continuity), we have:
        \[
        T(\pi^1_{s_1, c_{d_1}} \circ \pi^2_{c_{d_1}, c_{s_2}} | \pi^2_{s_2, c_{s_2}}) \leq T(\pi^1_{s_1, c_{s_2}}| \pi^2_{s_2, c_{s_2}})
        \]
        Therefore, it follows that:
        \[
        T(\pi^1_{s_1, c_{d_1}} \circ \pi^2_{c_{d_1}, c_{s_2}} \circ \pi^1_{c_{s_2}, g_1}|\pi^2) \leq T(\pi^1| \pi^2)
        \]
If equality holds, since the joint path \((\pi^1_{s_1, c_{d_1}} \circ \pi^2_{c_{d_1}, c_{s_2}} \circ \pi^1_{c_{s_2}, g_1},\pi^2)\) involves a single continuous cooperation segment, \(a_1\) would prefer the strategy \(\pi^1_{s_1, c_{d_1}} \circ \pi^2_{c_{d_1}, c_{s_2}} \circ \pi^1_{c_{s_2}, g_1}\) over $\pi^1$, contradicting the assumption that \((\pi^1, \pi^2)\) is a PNE.
    
        \item
            We assume, by contradiction, that the cooperation partial path $\pi_{c_s, c_d}$ is not stable. In that case, for one of the agents, without loss of generality $a_1$, there exists a better departure node ${c_{d'}} = v_{d^*}^1(\pi_{c_s, c_d})$, where ${c_{d'}} \neq {c_d}$, along the cooperation path: 
        \[
        T(\pi_{c_s, c_{d'}} \circ SIP_{c_{d'}, g_1} | \pi_{c_s, c_d}) < T(\pi_{c_s, c_{d}} \circ SIP_{c_{d}, g_1} | \pi_{c_s, c_d})
        \]
         Therefore, it follows that:
        \[
        T(\pi^1_{s_1, c_s} \circ \pi_{c_s, c_{d'}} \circ SIP_{c_{d'}, g_1} | \pi^2) < T(\pi^1, \pi^2)
        \]
        which contradicts the assumption that $(\pi^1, \pi^2)$ is a PNE.
    \end{enumerate}

    \item \underline{Condition \ref{item:cond-non-coop}:}
    We assume, by contradiction, that the paths of the two agents from their starting nodes to ${c_s}$ are not \emph{Mutually-Robust Non-Cooperative Partial Paths}.
    We consider two options:
    \begin{enumerate}
        \item The path of one of the agents, without loss of generality $a_2$, to ${c_s}$, $\pi^2_{s_2,c_s}$, is not non-cooperative. In this case, there exists a cooperation node along this path, $c \in \pi^2_{s_2,c_s}$, $c \neq {c_s}$, that $a_1$ can reach at a cooperation-relevant time (if there is more than one such node, we take $c$ to be the first one along $\pi^2_{s_2,c_s}$).  
    From Lemma \ref{early} (Early Cooperation), it follows that:
    \[
        T(SIP_{s_1, c} \circ \pi^2_{c, c_s} \circ \pi^1_{c_s, g_1}, \pi^2) \leq T(SIP_{s_1, c_s} \circ \pi^1_{c_s, g_1}, \pi^2)
    \]
    Additionally, since $SIP_{s_1,c_s}$ is the shortest path from ${s_1}$ to ${c_s}$ without involving cooperation, it follows that:
    \[
        T(SIP_{s_1, c_s} \circ \pi^1_{c_s, g_1}, \pi^2) \leq T(\pi^1_{s_1, c_s} \circ \pi^1_{c_s, g_1}, \pi^2) = T(\pi^{1}_{s_{1},g_{1}} | \pi^2_{s_2,g_2})
    \]
    If equality holds, since the joint path $(SIP_{s_1, c} \circ \pi^2_{c, c_s} \circ \pi^1_{c_s, g_1}, \pi^2)$ starts cooperation earlier than
    $(\pi^1,\pi^2)$, \(a_1\) would prefer the strategy $SIP_{s_1, c} \circ \pi^2_{c, c_s} \circ \pi^1_{c_s, g_1}$ over $\pi^1$, contradicting the assumption that \((\pi^1, \pi^2)\) is a PNE.
    \item One of the agents, without loss of generality $a_1$, can modify its path toward ${c_s}$ to a new path $\pi^{1'}_{s_1,c_s}$, allowing cooperation to start earlier at time 
    \[
        t' = \max(T(\pi^{1'}_{s_1,c_s}), T(\pi^{2}_{s_{2},c_s})) < \max(T(\pi^{1}_{s_1,c_s}), T(\pi^{2}_{s_{2},c_s}))
    \]
    The agent can then use the modified full path $\pi^{1'}_{s_1,g_1} = \pi^{1'}_{s_1,c_s} \circ \pi^1_{c_s,g_1}$ and reach its target node at an improved time:
    \[
        T(\pi^{1'}_{s_1,g_1} | \pi^{2}_{s_{2},g_{2}}) = t' + \tau^2_{c_s} + T(\pi^1_{c_s,g_1} | \pi^{2}_{c_s,g_{2}}) <
    \]
    \[
        \max(T(\pi^{1}_{s_1,c_s}), T(\pi^{2}_{s_{2},c_s})) + \tau^2_{c_s} + T(\pi^1_{c_s,g_1} | \pi^{2}_{c_s,g_{2}}) = T(\pi^1, \pi^2)
    \]
    contradicting the assumption that $(\pi^1, \pi^2)$ is a PNE.
    \end{enumerate}
    \item \underline{Condition \ref{item:cond-departure}:} 
We assume, by contradiction, that the optimal departure node for one of the agents, without loss of generality $a_1$, along the other agent's path $\pi^2_{c_s,g_2}$ is ${c_{d'}} \neq {c_d}$. This implies that:
\[
    T(\pi^1_{s_1,c_s} \circ \pi^{2}_{c_s,c_{d'}} \circ SIP_{c_{d'},g_1} | \pi^{2}) < T(SIP_{s_1,c_s} \circ \pi^{2}_{c_s,c_d} \circ SIP_{c_d,g_1} | \pi^{2}) = T(\pi^1, \pi^2)
\]
which contradicts the assumption that $(\pi^1, \pi^2)$ is a PNE.

    \item \underline{Condition \ref{item:cond-shortest-departure}:} 
We assume, by contradiction, that for one of the agents, without loss of generality $a_1$, the partial path $\pi^1_{c_d,g_1}$ from the departure node ${c_d}$ to its target node ${g_1}$ differs from the shortest independent path $SIP_{c_d,g_1}$.  
Since $(\pi^1, \pi^2)$ involves a single continuous cooperation segment (Condition \ref{item:cond-stable}), and ${c_d}$ is the cooperation departure node, the joint path $(\pi^1_{c_d,g_1}, \pi^2_{c_d,g_2})$ contains no cooperation. Therefore:  
\[
T(\pi^1_{c_d,g_1} | \pi^2_{c_d,g_2}) = T(\pi^1_{c_d,g_1}) > T(SIP_{c_d,g_1}),
\]  
which implies:  
\[
T(\pi^1 | \pi^2) > T(\pi^1_{s_1, c_d} \circ SIP_{c_d, g_1} | \pi^2),
\]  
contradicting the assumption that $(\pi^1, \pi^2)$ is a PNE.

\item \underline{Condition \ref{item:cond-cooperation-over-independent}:} 
We assume, by contradiction, that one of the agents, without loss of generality $a_1$, prefers to take its independent path directly from its starting node to its target node, $SIP_{s_1,g_1}$. In this case, the independent shortest path for $a_1$ would be a better response to the $\pi^2$ than $\pi^1$:
\[
T(SIP_{s_1,g_1}, \pi^2) < T(\pi^1, \pi^2).
\]
This contradicts the assumption that the joint strategy $(\pi^1, \pi^2)$ is a PNE.
\end{enumerate}
Therefore, it holds that a cooperation joint strategy $(\pi^1, \pi^2)\neq (SIP_{s_1,g_1},SIP_{s_2,g_2})$ constitutes a \emph{Pure Nash Equilibrium (PNE)} if and only if $(\pi^1, \pi^2)$ is an \emph{ECJS}.
\end{proof}

\section{Proof of Lemma \ref{lemma:strategy-pruning-claim}}\label{app:lem:pruning}
\paragraph{Lemma \ref{lemma:strategy-pruning-claim}}
Let \(c_{d_1},c_{d_2} \in V_C\) be two distinct cooperation nodes. 
Suppose there exists a stable cooperation partial path from \(c_{d_1}\) to \(c_{d_2}\), denoted 
\(\pi^S_{c_{d_1},c_{d_2}}\). 
Then, for any stable cooperation partial path \(\pi_{c_s,c_{d_1}}\) from \(c_s\) to \(c_{d_1}\), the concatenated path
\[
\pi'_{c_s,c_{d_2}}
=
\pi_{c_s,c_{d_1}} \circ \pi^S_{c_{d_1},c_{d_2}}
\]
is also stable. Moreover, the optimal ECJS whose cooperation segment is \(\pi'_{c_s,c_{d_2}}\) dominates the optimal ECJS whose cooperation segment is \(\pi_{c_s,c_{d_1}}\). That is,
\begin{align*}
\forall i \in \{1,2\}, \quad
&
T\!\left(
    \pi_{c_s,c_{d_1}} \circ SIP_{c_{d_1},g_i}
    \mid
    \pi_{c_s,c_{d_1}} \circ SIP_{c_{d_1},g_{-i}}
\right)
\\
&\geq
T\!\left(
    \pi'_{c_s,c_{d_2}} \circ SIP_{c_{d_2},g_i}
    \mid
    \pi'_{c_s,c_{d_2}} \circ SIP_{c_{d_2},g_{-i}}
\right).
\end{align*}
\begin{proof}
    Given a stable path \( \pi_{c_s, c_{d_1}} \), since $\pi^S_{c_{d_1}, c_{d_2}}$ is stable,  the path $ \pi'_{c_s, c_{d_2}} = \pi_{c_s, c_{d_1}} \circ \pi^S_{c_{d_1}, c_{d_2}} $ is also stable.\\
    For an agent \( a_i \), the path time to its target node via a cooperation along $\pi_{c_s, c_{d_1}}$ can be described as:
    \[
    T(\pi_{c_s, c_{d_1}} \circ SIP_{c_{d_1}, g_i} | \pi_{c_s, c_{d_1}} \circ SIP_{c_{d_1}, g_{-i}}) = T
    (\pi_{c_s, c_{d_1}} | \pi_{c_s, c_{d_1}}) + \tau^2_{c_{d_1}} + T(SIP_{c_{d_1}, g_i} | SIP_{c_{d_1}, g_{-i}})
    \]
    Similarly, the path time to its target node via a cooperation along $\pi'_{c_s, c_{d_2}}$  is:
    \[
    T(\pi'_{c_s, c_{d_2}}\!\! \circ SIP_{c_{d_2}, g_i} | \pi'_{c_s, c_{d_2}}\!\! \circ SIP_{c_{d_2}, g_{-i}}) = T(\pi_{c_s, c_{d_1}} | \pi_{c_s, c_{d_1}}) + \tau^2_{c_{d_1}}\!\! + T(\pi^S_{c_{d_1}, c_{d_2}}\!\! \circ SIP_{c_{d_2}, g_i} | \pi^S_{c_{d_1}, c_{d_2}}\!\! \circ SIP_{c_{d_2}, g_{-i}})
    \]
    Since \( \pi^S_{c_{d_1}, c_{d_2}} \) is stable, it holds that:
    \[
    T(SIP_{c_{d_1}, g_i} | \pi^S_{c_{d_1}, c_{d_2}} \circ SIP_{c_{d_2}, g_{-i}}) \geq T(\pi^S_{c_{d_1}, c_{d_2}} \circ SIP_{c_{d_2}, g_i} | \pi^S_{c_{d_1}, c_{d_2}} \circ SIP_{c_{d_2}, g_{-i}})
    \]
    Additionally, since $c_{d_1}$ is the last cooperation node along $\pi_{c_s,c_{d_1}}$, \( (SIP_{c_{d_1}, g_i}, SIP_{c_{d_1}, g_{-i}}) \) involves no cooperation, and it follows that:
    \[
    T(SIP_{c_{d_1}, g_i} | SIP_{c_{d_1}, g_{-i}}) \geq T(SIP_{c_{d_1}, g_i} | \pi^S_{c_{d_1}, c_{d_2}} \circ SIP_{c_{d_2}, g_{-i}})
    \]
    Hence,
    \[
    T(\pi_{c_s, c_{d_1}} | \pi_{c_s, c_{d_1}}) + \tau^2_{c_{d_1}}\!\! + T(SIP_{c_{d_1}, g_i} | SIP_{c_{d_1}, g_{-i}})\!\! \geq \!\!\]\[ T(\pi_{c_s, c_{d_1}} | \pi_{c_s, c_{d_1}}) + \tau^2_{c_{d_1}} + T(\pi^S_{c_{d_1}, c_{d_2}} \circ SIP_{c_{d_2}, g_i} | \pi^S_{c_{d_1}, c_{d_2}} \circ SIP_{c_{d_2}, g_{-i}})
    \]
    which implies that:
    \[
    T(\pi_{c_s, c_{d_1}} \circ SIP_{c_{d_1}, g_i} | \pi_{c_s, c_{d_1}} \circ SIP_{c_{d_1}, g_{-i}}) \geq 
    T(\pi'_{c_s, c_{d_2}} \circ SIP_{c_{d_2}, g_i} | \pi'_{c_s, c_{d_2}} \circ SIP_{c_{d_2}, g_{-i}})
    \] 
   and the lemma holds.
\end{proof}

\section{Proof of Lemma \ref{lemma:PNE-existance}}\label{app:pne-existance}
\paragraph{Lemma~\ref{lemma:PNE-existance}}
    Consider a cooperative joint strategy of the form:
    \[
    \Big(SIP_{s_1,c_s} \circ \pi^S_{c_s,c_d} \circ SIP_{c_d,g_1},\ 
    SIP_{s_{2},c_s} \circ \pi^S_{c_s,c_d} \circ SIP_{c_d,g_{2}}\Big),
    \]
    where ${c_s}$ is the first cooperation node and ${c_d}$ is the last cooperation node in which the agents cooperate, and $\pi^S_{c_s,c_d} \in \Pi_{c_s,c_d}$ is a stable partial path. If the following condition holds:
    \[
    \forall i \in \{1,2\}, \quad
    T\Big(SIP_{s_i,c_s} \circ \pi^S_{c_s,c_d} \circ SIP_{c_d,g_i}|\ 
    SIP_{s_{-i},c_s} \circ \pi^S_{c_s,c_d} \circ SIP_{c_d,g_{-i}}\Big) 
    \leq T\Big(SIP_{s_i,g_i}\Big) 
    \]
    then a Pure Nash Equilibrium (PNE) exists.
    \begin{proof}
Following Theorem~\ref{theorem:pne_equivalance}, to show that a cooperative joint strategy constitutes a PNE, it is sufficient to show that it is an ECJS (by verifying the five conditions of Definition~\ref{def:equilibrium_cooperation_joint_strategy}).

Considering the joint strategy:
\[
\left(SIP_{s_1,c_s} \circ \pi^S_{c_s,c_d} \circ SIP_{c_d,g_1},\ 
SIP_{s_2,c_s} \circ \pi^S_{c_s,c_d} \circ SIP_{c_d,g_2}\right)
\]
We show that this joint strategy is either already an ECJS or can serve as the foundation for constructing one.
    The cooperation segment $\pi^S_{c_s,c_d}$ is stable by the Lemma's assumption, satisfying \textbf{Condition~\ref{item:cond-stable}}.  
    Additionally, since both agents follow their respective shortest independent paths from ${c_d}$ to their target nodes, \textbf{Condition~\ref{item:cond-shortest-departure}} is satisfied.  
    Furthermore, because both agents prefer this cooperative joint strategy over their independent shortest-path strategy, \textbf{Condition~\ref{item:cond-cooperation-over-independent}} also holds.

    We show that if Conditions~\ref{item:cond-non-coop} and~\ref{item:cond-departure} do not hold, the cooperation segment $\pi^S_{c_s,c_d}$ can be extended to satisfy them without violating the other conditions.

\textbf{Condition~\ref{item:cond-non-coop}:} If the paths $SIP_{s_1,c_s}$ and $SIP_{s_2,c_s}$ toward the cooperation starting node are non-cooperative, then, since neither agent can shorten its path to ${c_s}$, these paths are mutually robust, satisfying the condition.  
However, if one agent (without loss of generality, \(a_1\)) can reach a cooperation node \(c_e \in SIP_{s_2,c_s}\) at a cooperation-relevant time, we show that the partial path \(\pi_{c_e,c_d} = SIP_{c_e,c_s} \circ \pi^S_{c_s,c_d}\) is also stable, and the resulting joint strategy  
\[
\left(SIP_{s_1,c_e} \circ \pi_{c_e,c_d} \circ SIP_{c_d,g_1},\ 
SIP_{s_2,c_e} \circ \pi_{c_e,c_d} \circ SIP_{c_d,g_2}\right)
\]  
continues to satisfy Conditions~\ref{item:cond-stable}, \ref{item:cond-shortest-departure}, and~\ref{item:cond-cooperation-over-independent}.

     We assume, by contradiction, that for one of the agents, without loss of generality \(a_1\),  \(v_{d^*}^1(\pi_{c_e,c_d}) \neq {c_d}\), and denote the optimal departure node as \(c_{e'}\). Then:
\[
    T(SIP_{s_1,c_e} \circ \pi_{c_e,c_{e'}} \circ SIP_{c_{e'},g_1} \mid SIP_{s_2,c_e} \circ \pi_{c_e,c_d} \circ SIP_{c_d,g_2}) < 
    T(SIP_{s_1,c_e} \circ \pi_{c_e,c_d} \circ SIP_{c_d,g_1} \mid SIP_{s_2,c_e} \circ \pi_{c_e,c_d} \circ SIP_{c_d,g_2})
\]
Since both agents start cooperating at $c_e$, the above inequality can be rewritten as:
\[
    \max(T(SIP_{s_1,c_e} \mid SIP_{s_2,c_e}), T(SIP_{s_2,c_e} \mid SIP_{s_1,c_e})) + \tau^2_{c_e} + D(\pi_{c_e,c_{e'}} \mid \pi_{c_e,c_{e'}}) + T(SIP_{c_{e'},g_1} \mid \pi_{c_{e'},c_d} \circ SIP_{c_d,g_2})
\]
\[
    < \max(T(SIP_{s_1,c_e} \mid SIP_{s_2,c_e}), T(SIP_{s_2,c_e} \mid SIP_{s_1,c_e})) + \tau^2_{c_e} + D(\pi_{c_e,c_{e'}} \mid \pi_{c_e,c_{e'}}) + T(\pi_{c_{e'},c_d} \circ SIP_{c_d,g_1} \mid \pi_{c_{e'},c_d} \circ SIP_{c_d,g_2})
\]
Simplifying:
\[
    T(SIP_{c_{e'},g_1} \mid \pi_{c_{e'},c_d} \circ SIP_{c_d,g_2}) < T(\pi_{c_{e'},c_d} \circ SIP_{c_d,g_1} \mid \pi_{c_{e'},c_d} \circ SIP_{c_d,g_2})
\]
Adding the arrival time of \(a_1\) at \(c_{e'}\) (without cooperation) to both sides of the in-equation:
\[
    D(SIP_{s_1,c_{e'}}) + T(SIP_{c_{e'},g_1} \mid \pi_{c_{e'},c_d} \circ SIP_{c_d,g_2}) < D(SIP_{s_1,c_{e'}}) + T(\pi_{c_{e'},c_d} \circ SIP_{c_d,g_1} \mid \pi_{c_{e'},c_d} \circ SIP_{c_d,g_2})
\]
Since \(c_{e'}\) is the optimal departure node for \(a_1\), \((SIP_{c_{e'},g_1}, \pi_{c_{e'},c_d} \circ SIP_{c_d,g_2})\) does not involve cooperation. Moreover, as \(SIP_{s_1,g_1}\) denotes the shortest path to \(g_1\) without cooperation:
\[
    D(SIP_{s_1,c_{e'}}) + T(SIP_{c_{e'},g_1} \mid \pi_{c_{e'},c_d} \circ SIP_{c_d,g_2}) = D(SIP_{s_1,c_{e'}}) + T(SIP_{c_{e'},g_1}) \geq T(SIP_{s_1,g_1})
\]
Thus:
\[
D(SIP_{s_1,c_{e'}}) + T(\pi_{c_{e'},c_d} \circ SIP_{c_d,g_1} \mid \pi_{c_{e'},c_d} \circ SIP_{c_d,g_2}) > T(SIP_{s_1,g_1})
\]
Since the path time $T(\pi_{c_{e'},c_d} \circ SIP_{c_d,g_1} \mid \pi_{c_{e'},c_d} \circ SIP_{c_d,g_2})$ assumes cooperation along $\pi_{c_{e'},c_s}$, it follows that:
\[
T(SIP_{s_1,c_s} \circ \pi^S_{c_s,c_d} \circ SIP_{c_d,g_1} \mid SIP_{s_2,c_s} \circ \pi^S_{c_s,c_d} \circ SIP_{c_d,g_2}) 
\geq\]\[ D(SIP_{s_1,c_{e'}}) + T(\pi_{c_{e'},c_d} \circ SIP_{c_d,g_1} \mid \pi_{c_{e'},c_d} \circ SIP_{c_d,g_2})
> T(SIP_{s_1,g_1})
\]
contradicting our assumption.
Thus, we conclude that $\pi_{c_e,c_d} = SIP_{c_e,c_s} \circ \pi^S_{c_s,c_d}$ constitutes a \emph{Stable Cooperation Partial Path}. Since the joint strategy
$
\left(SIP_{s_1,c_e} \circ\pi_{c_e,c_d} \circ SIP_{c_d,g_1},\ 
SIP_{s_2,c_e} \circ \pi_{c_e,c_d} \circ SIP_{c_d,g_2}\right)
$ initiates cooperation at node $c_{e}$, which precedes ${c_s}$, it dominates the original joint strategy and therefore preserves the conditions of Definition~\ref{def:equilibrium_cooperation_joint_strategy} that were satisfied by the original joint strategy.

Since the set of cooperation nodes is finite, the cooperation segment can be repeatedly extended until reaching a cooperation starting node \( {c_s} \in V_C \) for which the paths \( SIP_{s_1,c_s} \) and \( SIP_{s_2,c_s} \) are mutually robust and non-cooperative, satisfying Condition~\ref{item:cond-non-coop}.

  \textbf{Condition \ref{item:cond-departure}:} By Corollary \ref{cor:dominated-strategies}, if there exists a cooperation joint strategy that satisfies all conditions of Definition \ref{def:equilibrium_cooperation_joint_strategy} except Condition \ref{item:cond-departure} (i.e., one of the agents, $a_i$, prefers to continue cooperating with $a_{-i}$ beyond ${c_d}$ along the path to $a_{-i}$'s target node), then a cooperation joint strategy extending this cooperation satisfies the same conditions of Definition \ref{def:equilibrium_cooperation_joint_strategy} and dominates the original one.  
    
Since the set of cooperation nodes is finite, the cooperation segment can be repeatedly extended until reaching a cooperation ending node \( {c_{d'}} \in V_C \) that satisfies Condition~\ref{item:cond-departure}.

Thus, the joint strategy  
\[
\left(SIP_{s_1,c_{s'}} \circ SCP^S_{c_{s'},c_{d'}} \circ SIP_{c_{d'},g_1}, \quad SIP_{s_2,c_{s'}} \circ SCP^S_{c_{s'},c_{d'},g_2} \circ SCP_{c_{d'}}\right)
\]
where \({c_{s'}}\) is the cooperation starting node that satisfies Condition~\ref{item:cond-non-coop}, found by repeatedly extending the cooperation segment from the beginning, and \({c_{d'}}\) is the cooperation ending node that satisfies Condition~\ref{item:cond-departure}, found by repeatedly extending the segment from the end, is an ECJS and, by Theorem~\ref{theorem:pne_equivalance}, constitutes a PNE.
\end{proof}

\section{Social Welfare Optimization}\label{supp:social-welfare}
We aim to determine the joint strategy \((\pi^{1^*}, \pi^{2^*})\) for agents \(a_1\) and \(a_2\) that minimizes their total path time:
\[
(\pi^{1^*}, \pi^{2^*})=\arg \min_{\pi^1, \pi^2} \left[ T_{g_1}(\pi^1|\pi^2) + T_{g_2}(\pi^2|\pi^1) \right]
\]
Considering $V_C = \{{c_1}, \ldots, {c_m}\}$ for some $m > 0$,  
a naive approach might involve evaluating all combinations of possible cooperation paths.  
However, Lemma~\ref{cont} provides key insights into the structure of the optimal social welfare joint strategy,  
significantly streamlining the search process. According to Lemma \ref{cont}, the most efficient cooperation path, minimizing the total path time between some initial cooperation node (denoted with ${c_s}$) and a final cooperation node (denoted with ${c_d}$) for both agents, involves continuous cooperation along the same path $SCP_{c_s,c_d}$.
The joint strategy that optimizes social welfare is therefore:
\[
(\pi^{1^*}, \pi^{2^*})=(SIP_{s_1, c_s} \circ SCP_{c_s, c_d} \circ SIP_{c_d, g_1}, SIP_{s_2, c_s} \circ SCP_{c_s, c_d} \circ SIP_{c_d, g_2})
\]
Thus, our objective is to identify the cooperation starting node ${c_s}$ and the cooperation departure node ${c_d}$ that optimize social welfare (see illustration in Figure \ref{fig:phases}):
\[
\begin{aligned}
   {c_s}, {c_d}=\arg \min_{{c_s}, {c_d}} \big[\\ & T_{g_1}(SIP_{s_1, c_s} \circ SCP_{c_s, c_d} \circ SIP_{c_d, g_1}| SIP_{s_2, c_s} \circ SCP_{c_s, c_d} \circ SIP_{c_d, g_2}) +\\
   &T_{g_2}(SIP_{s_2, c_s} \circ SCP_{c_s, c_d} \circ SIP_{c_d, g_2}|SIP_{s_1, c_s} \circ SCP_{c_s, c_d} \circ SIP_{c_d, g_1})\\ \big]
\end{aligned}
\]
\begin{figure}[ht]
    \centering
    \includegraphics[width=0.25\linewidth]{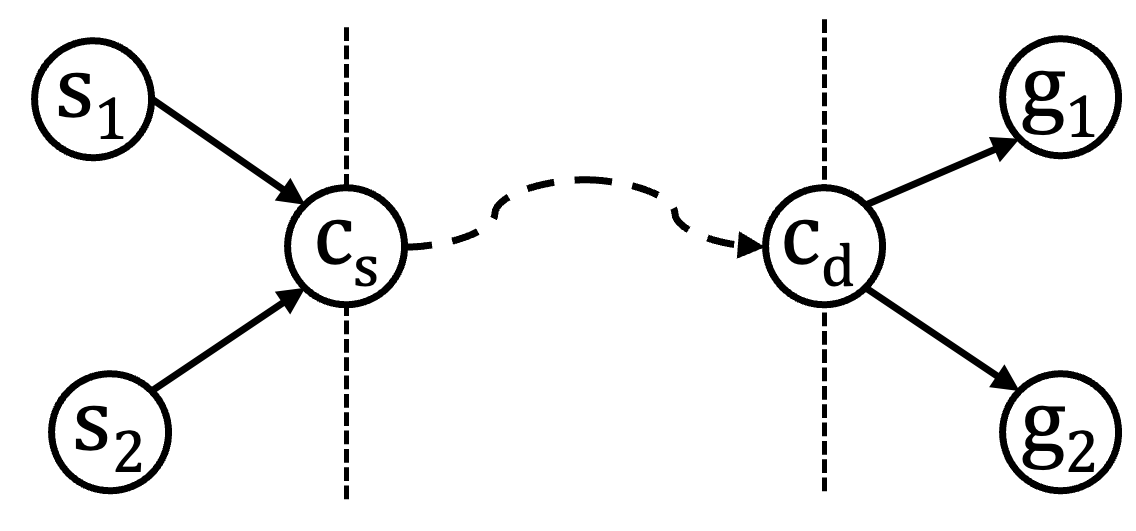}
    \caption{Social welfare optimization occurs when both agents follow the shortest path from their initial nodes to a cooperation node ${c_s}$, then continuously cooperate along a shared path to the final cooperation node ${c_d}$. After reaching ${c_d}$, the agents depart and take the shortest paths to their respective target nodes.}
    \label{fig:phases}
\end{figure}
\noindent Using a shortest path algorithm from ${g_1}$ and ${g_2}$, we can determine the shortest path time from each cooperation node ${c_i} \in V_C$ to ${g_1}$, denoted as $SIP_{c_i, g_1}$, and to ${g_2}$, denoted as $SIP_{c_i, g_2}$. The total path time of these paths, $T_{c_i,{g_1}}(SIP_{c_i, g_1}) + T_{c_i,{g_2}}(SIP_{c_i, g_2})$, is referred to as the \emph{departure value} of node ${c_i}$ and is denoted by $d^*({c_i})$:
\[
d^*({c_i}) = T_{c_i,{g_1}}(SIP_{c_i g_1}) + T_{c_i,{g_2}}(SIP_{c_i, g_2})
\]
To find the \emph{optimal departure node} for a cooperation that starts at a given node ${c_s}$, we identify the departure node that minimizes the combined path time of reaching it cooperatively from $c_s$ and subsequently reaching the target nodes from it:
\[
v_d({c_s}) = \arg\min_{c \in V_C} \left( 2 \cdot \left( T_{{c_s},c}(SCP_{{c_s},c}|SCP_{{c_s},c})+\tau^2_c\right) + d^*(c) \right)
\]
However, determining the optimal node to initiate cooperation, ${c_s}$, may still require evaluating all potential cooperation nodes. While Lemma \ref{early} demonstrates that starting cooperation earlier along a specific path generally results in a shorter overall path time, it cannot be applied to globally compare cooperation nodes, as different nodes may result in distinct subsequent paths. Consequently, a cooperation that begins later but follows a different path may achieve a better overall path time than one that begins earlier (see example in Figure \ref{fig:early-label}).
\begin{figure}[ht]
    \centering
    \includegraphics[width=0.25\linewidth]{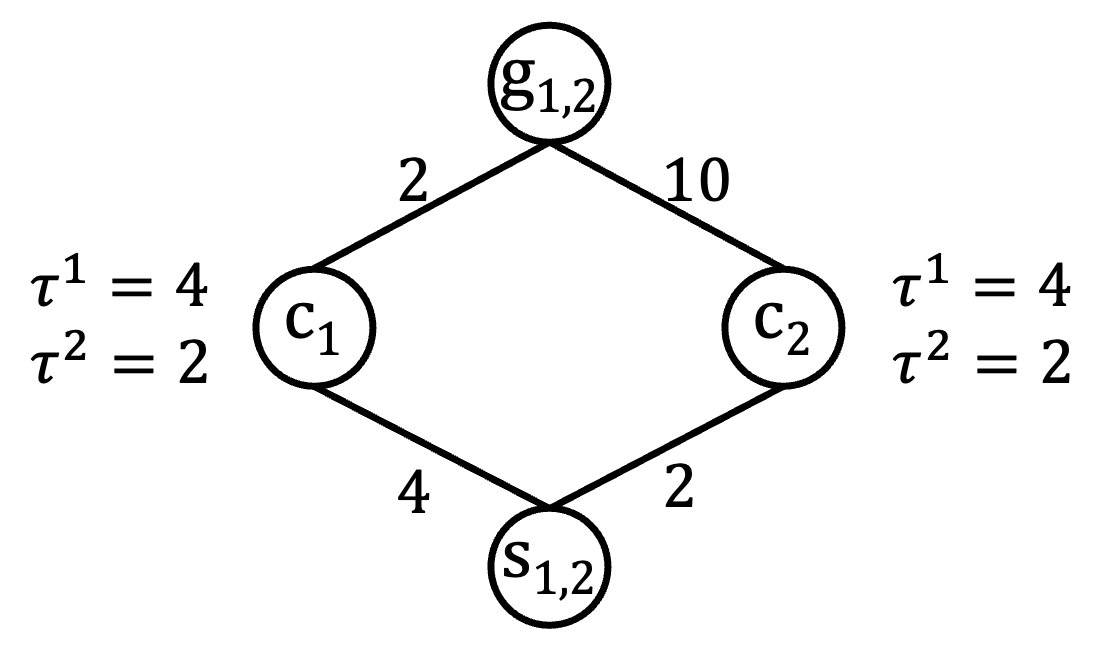}
    \caption{Although $a_1$ and $a_2$ can begin cooperation at ${c_2}$ earlier than at ${c_1}$, the optimal social welfare is achieved when the cooperation starts at ${c_1}$.}
    \label{fig:early-label}
\end{figure}
To improve the performance of finding the optimal cooperation starting node ${c_s}$ we propose a pruning approach that leverages Lemma \ref{early} to reduce the number of evaluated cooperation nodes while ensuring all possible cooperation paths are considered.

Using a shortest path algorithm from ${s_1}$ and ${s_2}$, we can determine the shortest path time to each cooperation node ${c_i} \in V_C$, denoted as $SIP_{s_1,c_i}$ and $SIP_{s_2,c_i}$, respectively.
We denote the \emph{earliest cooperation time} of node ${c_i}$ as $t^*({c_i})$, which represents the earliest shared departure time for two agents starting to cooperate at this node:
\[
t^*({c_i}) = \max \left( T_{s_1, c_i}(SIP_{s_1, c_i}), T_{s_2, c_i}(SIP_{s_2, c_i}) \right) + \tau^2_{c_i}
\]
We sort the cooperation nodes in ascending order based on their \emph{earliest cooperation times} . For each potential cooperation starting node ${c_i}$, we determine its associated \emph{optimal departure node} $v_{d_i}=v_d({c_i})$ and calculate the social welfare of the path dictated by the two nodes:
\[
2 \cdot \left(t^*({c_i}) + T_{c_i,d_i}(SCP_{c_i,d_i}| SCP_{c_i,d_i})+\tau^2_{d_i}\right) + d^*(v_{d_i})
\]
With the goal of finding the optimal cooperation starting node ${c_s}$ that minimizes social welfare:
\[
{c_s}=\arg\min_{{c_i} \in V_C} \left( 2 \cdot \left(t^*({c_i}) + T_{c_i,d_i}(SCP_{c_i,d_i}| SCP_{c_i,d_i})+\tau^2_{v_{d_i}}\right) + d^*(v_{d_i}) \right)
\]
Since evaluating a cooperation starting node ${c_i}$ as a starting node involves finding the shortest cooperative path to every other cooperation node ${c_j} \in V_C$, we compare the arrival time at ${c_j}$ through ${c_i}$ with $t^*({c_j})$. If 
\[
t^*({c_j}) \geq t^*({c_i})  + D_{c_i,c_j}(SCP_{c_i,{c_j}}|SCP_{c_i,{c_j}}) 
\]
then we can prune ${c_j}$ and exclude it as a potential cooperation starting node, thereby reducing the number of nodes to be evaluated (see example in Figure \ref{fig:pruning}).
\begin{figure}[ht]
    \centering
    \includegraphics[width=0.15\linewidth]{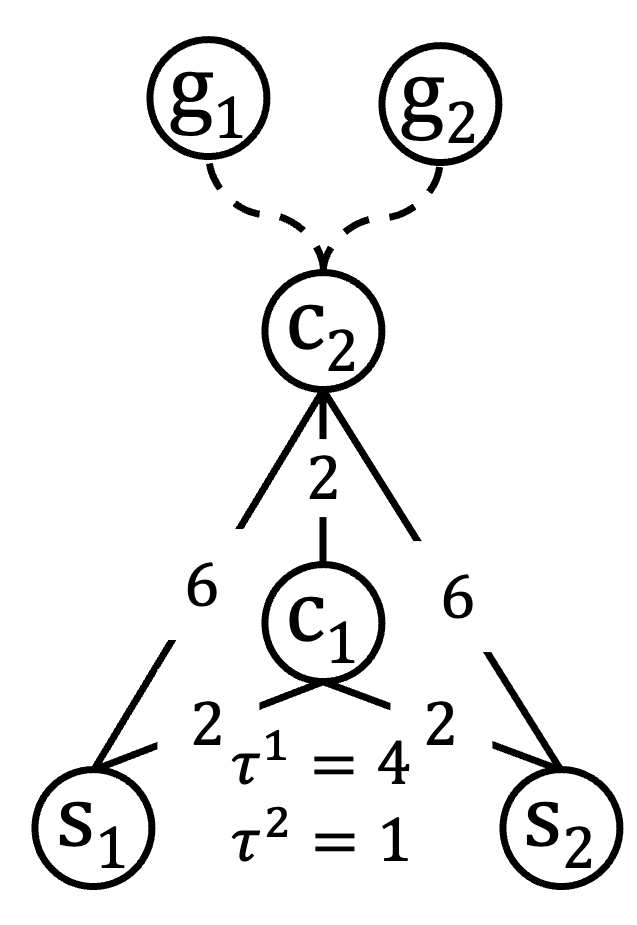}
    \caption{Although $a_1$ and $a_2$ can individually reach ${c_2}$ by $t=6$, starting cooperation earlier at ${c_1}$ allows them to reach ${c_2}$ by $t=5$, improving path efficiency. As a result, ${c_2}$ can be excluded from the set of potential cooperation starting nodes, as initiating cooperation earlier at ${c_1}$ yields a more efficient outcome.}
    \label{fig:pruning}
\end{figure}

Algorithm \ref{algorithm:so-wel-opt} is designed to identify the optimal joint strategy ($\pi^*_1, \pi^*_2$) that minimizes social welfare for two agents navigating a graph with $m$ cooperation nodes.
The algorithm starts by finding $SIP_{s_1}$ and $SIP_{s_2}$ (all shortest independent paths from the initial nodes to other graph nodes), and $SIP_{g_1}$ and $SIP_{g_2}$ (all shortest independent paths to the target nodes from other graph nodes) [lines 1-4]. These paths are then used to compute the \emph{departure value} $d^*({c})$ and the \emph{earliest cooperation time} $t^*({c})$ for each cooperation node ${c} \in V_C$ [lines 5-8].
The algorithm iterates over all cooperation nodes $c_s \in V_C$, in ascending order of their $t^*$ values, evaluating each as a potential cooperation starting node within a continuous cooperation joint strategy $(SIP_{s_1, c_s} \circ SCP_{c_s, c_d} \circ SIP_{c_d, g_1}, SIP_{s_2, c_s} \circ SCP_{c_s, c_d} \circ SIP_{c_d, g_2})$. It first computes the shortest paths from the selected node to all other nodes in the graph, assuming cooperation at all nodes. Then, it evaluates the remaining cooperation nodes as potential cooperation departure nodes by calculating the total path time for both agents when these nodes are designated as the cooperation endpoints [lines 9-26]. If a cooperation node can be reached earlier through cooperation than individually, it is pruned from consideration as a cooperation starting node [lines 16-17]. Finally, the algorithm returns the social welfare optimal joint strategy [line 27].

\begin{algorithm}[ht]
\caption{\textsc{Social Welfare Optimal Path($G, V_C, {s_1}, {s_2}, {g_1}, {g_2}$)}}\label{algorithm:so-wel-opt}
\begin{algorithmic}[1]

    \State $SIP_{s_1} \gets \textsc{all shortest paths from}(G, \tau^1, {s_1})$ \Comment{Find shortest paths from starting nodes}
    \State $SIP_{s_2} \gets \textsc{all shortest paths from}(G, \tau^1, {s_2})$ 
    \State $SIP_{g_1} \gets \textsc{all shortest paths to}(G, \tau^1, {g_1})$ \Comment{Find shortest paths to target nodes}
    \State $SIP_{g_2} \gets \textsc{all shortest paths to}(G, \tau^1, {g_2})$ 
    
    \ForAll {${c} \in V_C$} \Comment{Determine $t^*$ and $d^*$ for all node}
        \State $t^*_{c} \gets \max(SIP_{s_1,c}, SIP_{s_2,c}) + \tau^2_{c}$
        \State $d^*_{c} \gets SIP_{g_1,c} + SIP_{g_2,c}$
    \EndFor
    
    \State $coopStartNodes \gets \textsc{Sort}(V_C, \text{by ascending } t^*)$ \Comment{Sort cooperation nodes by $t^*$}
    \State $\pi^*_1 \gets SIP_{s_1,g_1}, \pi^*_2 \gets SIP_{s_2,g_2}$
    \State $minimalJointPathTime \gets SIP_{s_1,g_1} + SIP_{s_2,g_2}$
    \While{$coopStartNodes \neq \emptyset$}
        \State ${c_s} \gets coopStartNodes.\textsc{pop}()$ \Comment{Select a starting node}
        \State $SCP_{c_s} \gets \textsc{all shortest paths from}(G, \tau^2, {c_s})$ \Comment{Find shortest paths from ${c_s}$}
        
        \ForAll {${c} \in V_C$}
            \If{$t^*_{c_s} + SCP_{c_s,c} + \tau^2_{c} \leq t^*_{c}$}   
                 
                \State $coopStartNodes.\textsc{remove}({c})$ \Comment{Prune non-optimal cooperation starting nodes}           
                \State $totalPathTime \gets 2 \cdot (t^*_{c_s} + SCP_{c_s,c} + \tau^2_{c}) + d^*_{c}$
                \If{$totalPathTime < minimalJointPathTime$} \Comment{Update optimal joint path time}
                    \State $minimalJointPathTime \gets totalPathTime$
                    \State $\pi^*_1 \gets SIP_{s_1,c_s} \circ SCP_{c_s,c} \circ SIP_{c,g_1}$
                    \State $\pi^*_2 \gets SIP_{s_2,c_s} \circ SCP_{c_s,c} \circ SIP_{c,g_2}$
                \EndIf
            \EndIf
        \EndFor
    \EndWhile
    \State \Return $(\pi^*_1, \pi^*_2)$  \Comment{Return social welfare optimal path}
\end{algorithmic}
\end{algorithm}

\noindent To establish the optimality of the algorithm, we verify two key lemmas, analogous to Lemmas \ref{early} and \ref{cont} presented in the Fixed Strategy Assumption scenario:
\begin{enumerate}
\item  \textbf{Cooperation Continuity:} The joint strategy that optimizes social welfare through cooperation is one in which $a_1$ and $a_2$ independently follow the shortest paths to a cooperation starting node ${c_s}$ ($SIP_{s_1,c_s}, SIP_{s_2,c_s}$), cooperate along the joint shortest path to a cooperation departure node ${c_d}$ ($SCP_{c_s,c_d}$), and then each independently follows the shortest path to their respective target nodes ($SIP_{c_d,g_1}, SIP_{c_d,g_2}$).
    \item \textbf{Early Cooperation:} If agents $a_1$ and $a_2$ can reach a cooperation node $c \in V_C$ earlier through cooperation rather than individually, then $c$ is not the first cooperation node in the joint strategy that optimizes social welfare.
\end{enumerate}
\begin{lemma}[Cooperation Continuity]\label{lem_social_1}
Let $\pi^1 \in \Pi_{s_1,g_1}$ and $\pi^2 \in \Pi_{s_2,g_2}$ be two paths that begin cooperation at node ${c_s}$ and finish the cooperation at node ${c_d}$. Then, it holds that:
\[
    T_{g_1}(\pi'^1|\pi'^2) + T_{g_2}(\pi'^2|\pi'^1) \leq T_{g_1}(\pi^1|\pi^2) + T_{g_2}(\pi^2|\pi^1)
\]
Where:
\[
    \pi'^1=SIP_{s_1, c_s} \circ SCP_{c_s, c_d} \circ SIP_{c_d, g_1}
\]
\[
    \pi'^2=SIP_{s_2, c_s} \circ SCP_{c_s, c_d} \circ SIP_{c_d, g_2}
\]
\end{lemma}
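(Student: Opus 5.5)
The plan is to split each agent's total time into the three segments of the structure in Figure~\ref{fig:segments}, and show that replacing each segment of $(\pi^1,\pi^2)$ by the corresponding segment of $(\pi'^1,\pi'^2)$ cannot increase the sum. Since cooperation begins at $c_s$, both agents depart $c_s$ simultaneously. Call this common departure time $t_s$. The cooperation segment then gives a common departure time $t_d$ from $c_d$. After $c_d$ there is no further cooperation, so
\[
T_{g_1}(\pi^1|\pi^2)+T_{g_2}(\pi^2|\pi^1) = 2t_d + T(\pi^1_{c_d,g_1}) + T(\pi^2_{c_d,g_2}).
\]
It therefore suffices to prove three things: (i) $t'_s\le t_s$, (ii) $t'_d-t'_s\le t_d-t_s$, and (iii) $T(SIP_{c_d,g_i})\le T(\pi^i_{c_d,g_i})$. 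Summing these gives the claim.

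\textbf{Step (iii).} This holds directly by the definition of $SIP$, since the departure segment involves no cooperation.

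\textbf{Step (i).} Before $c_s$ there is no cooperation. So agent $a_i$ reaches $c_s$ no earlier than $T(SIP_{s_i,c_s})$, and
\[
t_s \ge \max_i T(SIP_{s_i,c_s}) + \tau^2_{c_s} = t'_s .
\]
One subtlety is that $\pi^1$ may visit a cooperation node before $c_s$ that $a_2$ also visits. By the definition of $c_s$ as the first cooperation node, no cooperation occurs there, so the lower bound still holds.

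\textbf{Step (ii).} This is the main obstacle. Between $c_s$ and $c_d$ the two paths may separate and rejoin, so cooperation may be intermittent. I would follow the argument of Lemma~\ref{cont}: partition $[c_s,c_d]$ into maximal joint pieces and divergence pieces $d\to r$. On a joint piece, each node $v$ costs at least $\tau^2_v$, which is what $SCP$ charges. On a divergence piece no cooperation can occur, so each agent pays at least the $\tau^1$-weighted distance $T(SIP_{d,r})$. That distance is itself at least the $\tau^2$-weighted distance, because $\tau^2\le\tau^1$. Waiting only adds time.

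Hence the later of the two agents leaves $c_d$ at a time at least $t_s$ plus the $\tau^2$-weighted length of some walk from $c_s$ to $c_d$. Concatenating the joint pieces with one agent's route on each divergence piece gives such a walk. Its weighted length is at least $T(SCP_{c_s,c_d}|SCP_{c_s,c_d})+\tau^2_{c_d}$, and along $SCP$ both agents move together with no waiting, so this equals $t'_d-t'_s$.

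The delicate point in step (ii) is the relation between the departure times. The cooperation at $c_d$ synchronizes the agents, so $t_d\ge$ the arrival time of the later agent plus $\tau^2_{c_d}$, and this lower bound applies to both agents. I expect the careful bookkeeping of waiting times inside $\lambda$ to be the only technical work here.
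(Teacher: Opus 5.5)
Your proof is correct and follows essentially the same route as the paper. Both split the social welfare into three parts and then sum: the pre-$c_s$ part, bounded via the max of the $SIP$ arrival times; the cooperation part, where each agent's time is at least $T(SCP_{c_s,c_d}\mid SCP_{c_s,c_d})$, hence so is the max; and the post-$c_d$ part, bounded by $SIP$.

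In step (ii) you can drop the divergence-piece bookkeeping borrowed from Lemma~\ref{cont}. Its claim that no cooperation occurs on a divergence piece need not hold in general, since the agents may visit common nodes in a different order there. The simpler argument suffices: each agent pays at least $\tau^2_v$ at every node it visits, so its own route from $c_s$ to $c_d$ already gives the $SCP$ bound, which is exactly what the paper uses.
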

\begin{proof}
Since $a_1$ and $a_2$ cooperate at nodes ${c_s}$ and ${c_d}$, the social welfare induced by $\pi'^1$ and $\pi'^2$ is given by:
\[
 T_{g_1}(\pi'^1|\pi'^2) + T_{g_2}(\pi'^2|\pi'^1)=
\]
\[
2 \cdot \left(\max \left( T_{s_1, c_s}(SIP_{s_1, c_s}), T_{s_2, c_s}(SIP_{s_2, c_s}) \right) 
+ \tau^2_{c_s} + T_{c_s,c_d}(SCP_{c_s,c_d}|SCP_{c_s,c_d}) + \tau^2_{c_d}\right) 
\]
\[
+ T_{c_d, g_1}(SIP_{c_d, g_1}) + T_{c_d, g_2}(SIP_{c_d, g_2})
\]
The social welfare induced by $\pi^1$ and $\pi^2$ is:
\[
T_{g_1}(\pi^1|\pi^2) + T_{g_2}(\pi^2|\pi^1) =
\]
\[
2 \cdot \left(\max \left( T_{s_1, c_s}(\pi^1), T_{s_2, c_s}(\pi^2) \right) + \tau^2_{c_s} + \max(T_{c_s, c_d}(\pi^1|\pi^2), T_{c_s, c_d}(\pi^2|\pi^1)) +  \tau^2_{c_d}\right)
\]
\[
 + T_{c_d, g_1}(\pi^1) + T_{c_d, g_2}(\pi^2)
\]

Since the execution time for $a_1$ at each node $v$ visited by $\pi_1$ before ${c_s}$ or after ${c_d}$ is $\tau^1_i$, Then by the definitions of $SIP_{s_1, c_s}$ and $SIP_{c_d, g_1}$ we have:
\[
T_{s_1, c_s}(SIP_{s_1, c_s}) \leq T_{s_1, c_s}(\pi^1)
\]
\[
T_{c_d, g_1}(SIP_{c_d, g_1}) \leq T_{c_d, g_1}(\pi^1)
\]
Similarly:
\[
T_{s_2, c_s}(SIP_{s_2, c_s}) \leq T_{s_2, c_s}(\pi^2)
\]
\[
T_{c_d, g_2}(SIP_{c_d, g_2}) \leq T_{c_d, g_2}(\pi^2)
\]
Therefore:
\[
\max \left( T_{s_1, c_s}(SIP_{s_1, c_s}), T_{s_2, c_s}(SIP_{s_2, c_s}) \right) \leq  \max \left( T_{s_1, c_s}(\pi^1|\pi^2), T_{s_2, c_s}(\pi^2|\pi^1) \right)
\]
Additionally, since $a_1$ and $a_2$ depart from node ${c_s}$ simultaneously, traveling together ensures they arrive at subsequent cooperation nodes at the same time, enabling them to cooperate at those nodes. Therefore, by the definition of $SCP_{c_s, c_d}$:
\[
T_{c_s, c_d}(SCP_{c_s, c_d}|SCP_{c_s, c_d}) \leq T_{c_s, c_d}(\pi^1|\pi^2)
\]
\[
T_{c_s, c_d}(SCP_{c_s, c_d}|SCP_{c_s, c_d}) \leq T_{c_s, c_d}(\pi^2|\pi^1)
\]
Thus:
\[
T_{c_s, c_d}(SCP_{c_s, c_d}|SCP_{c_s, c_d}) \leq \max(T_{c_s, c_d}(\pi^1|\pi^2),T_{c_s, c_d}(\pi^2|\pi^1))
\]
Combining these results:
\[
T_{g_1}(\pi'^1|\pi'^2) + T_{g_2}(\pi'^2|\pi'^1) \leq T_{g_1}(\pi^1|\pi^2) + T_{g_2}(\pi^2|\pi^1)
\]
\end{proof}

\begin{lemma}[Early Cooperation]\label{lem_social_2}
Assume two cooperation nodes ${c_i}, {c_j} \in V_C$ such that $a_1$ and $a_2$ can both reach ${c_i}$ by cooperating at ${c_j}$ no later than they would by traveling directly, i.e.,
\[
t^*({c_i}) \geq t^*({c_j}) + T_{c_j,c_i}(SCP_{c_j,c_i}| SCP_{c_j,c_i}) + \tau^2_{c_i}.
\]
Then, the minimal joint path time possible by a path starting cooperation at ${c_j}$ (achieved by departing from $v_{d_j} = v_d({c_j})$) is at least as low as the minimal joint path time possible by a path starting cooperation at ${c_i}$ (achieved by departing from $v_{d_i} = v_d({c_i})$), i.e.,
\[
2 \cdot \left(t^*({c_j}) + T_{c_j,v_{d_j}}(SCP_{c_j,v_{d_j}}|SCP_{c_j,v_{d_j}}) + \tau^2_{v_{d_j}}\right) + d^*(v_{d_j}) \leq
\]
\[
2 \cdot \left(t^*({c_i}) + T_{c_i,v_{d_i}}(SCP_{c_i,v_{d_i}}|SCP_{c_i,v_{d_i}}) + \tau^2_{v_{d_i}}\right) + d^*(v_{d_i}).
\]
\end{lemma}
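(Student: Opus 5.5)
The plan is an exchange argument: the candidate obtained by starting at ${c_i}$ and departing at $v_{d_i}$ is dominated by a candidate that starts at ${c_j}$, travels cooperatively to ${c_i}$, and then continues exactly as before. The minimality that defines $v_{d_j}$ then finishes the proof.

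\textbf{Step 1.} Let $v_{d_i}=v_d({c_i})$ and set
\[
A = t^*({c_i}) + T_{c_i,v_{d_i}}(SCP_{c_i,v_{d_i}}\mid SCP_{c_i,v_{d_i}}) + \tau^2_{v_{d_i}},
\]
so the right-hand side of the claim is $2A + d^*(v_{d_i})$. Consider the concatenated cooperative route $\rho = SCP_{c_j,c_i}\circ SCP_{c_i,v_{d_i}}$ from ${c_j}$ to $v_{d_i}$. Both agents traverse $\rho$ together after departing ${c_j}$ simultaneously, so (as in the proof of Lemma~\ref{cont}) they reach every node of $\rho$ at the same time. Hence every node incurs $\tau^2$ and no waiting occurs. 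Splitting the travel time at ${c_i}$, the delay $\tau^2_{c_i}$ appears exactly once as an interior delay. This gives
\[
t^*({c_j}) + T_{c_j,v_{d_i}}(\rho\mid\rho) + \tau^2_{v_{d_i}}
= \Bigl(t^*({c_j}) + T_{c_j,c_i}(SCP_{c_j,c_i}\mid SCP_{c_j,c_i}) + \tau^2_{c_i}\Bigr) + T_{c_i,v_{d_i}}(SCP_{c_i,v_{d_i}}\mid SCP_{c_i,v_{d_i}}) + \tau^2_{v_{d_i}}.
\]
By the hypothesis, the bracketed term is at most $t^*({c_i})$, so the whole expression is at most $A$.

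\textbf{Step 2.} Because $SCP_{c_j,v_{d_i}}$ is the shortest path under full cooperation, $T_{c_j,v_{d_i}}(SCP_{c_j,v_{d_i}}\mid SCP_{c_j,v_{d_i}}) \le T_{c_j,v_{d_i}}(\rho\mid\rho)$. Combining this with Step 1,
\[
2\bigl(t^*({c_j}) + T_{c_j,v_{d_i}}(SCP_{c_j,v_{d_i}}\mid SCP_{c_j,v_{d_i}}) + \tau^2_{v_{d_i}}\bigr) + d^*(v_{d_i}) \le 2A + d^*(v_{d_i}).
\]

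\textbf{Step 3.} By definition, $v_{d_j}=v_d({c_j})$ minimizes $2\bigl(T_{c_j,c}(SCP_{c_j,c}\mid SCP_{c_j,c})+\tau^2_c\bigr)+d^*(c)$ over $c\in V_C$. Since $v_{d_i}\in V_C$ is a feasible choice of $c$, adding $2t^*({c_j})$ to both sides bounds the left-hand side of the claim by the left-hand side of Step 2, which gives the lemma.

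\textbf{Main obstacle.} The delicate point is Step 1: checking that concatenating cooperative shortest paths preserves the cost accounting. Two things must hold. First, $\tau^2_{c_i}$ must be counted once, as an interior delay, and no waiting may be introduced at ${c_i}$. This requires that departure times are synchronized, which holds because both agents arrive together via cooperation from ${c_j}$. Second, $SCP$ costs must add along subpaths. I would handle both by writing $T(\cdot\mid\cdot)$ for synchronized agents explicitly as the sum of edge times plus interior $\tau^2$ delays, using the definition of $T_{v,u}$ with $\lambda=\tau^2$ whenever arrival times coincide.
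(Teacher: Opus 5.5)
The paper gives no proof of this lemma. It is only invoked in the outline of the subsequent theorem, and its intuition appears in the caption of Figure~\ref{fig:pruning}. Your exchange argument is correct and formalizes exactly that intuition. You reroute the $c_i$-start candidate through $c_j$, use that cooperative shortest-path times satisfy a triangle inequality, and finish with the minimality that defines $v_{d_j}$.

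One small correction concerns the displayed equality in Step~1, which fails in the degenerate case $v_{d_i}=c_i$. There $\rho=SCP_{c_j,c_i}$, and $c_i$ is an endpoint rather than an interior node. Your right-hand side then counts $\tau^2_{c_i}$ twice. This is an artifact of the paper's formula, since $t^*(c_i)$ already includes that delay. In this case the relation is only the inequality $\le$, which is all you need, so the chain still goes through.

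It is cleaner to merge Steps~1--2 into the single inequality $T_{c_j,v}(SCP_{c_j,v}\mid SCP_{c_j,v})\le T_{c_j,c_i}(SCP_{c_j,c_i}\mid SCP_{c_j,c_i})+\tau^2_{c_i}+T_{c_i,v}(SCP_{c_i,v}\mid SCP_{c_i,v})$. This holds for every $v$, including $v\in\{c_i,c_j\}$. It also holds when the concatenation is a walk rather than a simple path, since deleting cycles cannot increase cost with positive edge times and nonnegative delays. Then apply it with $v=v_{d_i}$ together with the hypothesis on $t^*$.
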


\begin{theorem}
Given an \emph{IC2PP} instance with two agents $a_1$ and $a_2$, Algorithm \ref{algorithm:so-wel-opt} finds the joint strategy $(S^{1^*}, S^{2^*})$ that optimizes the social welfare of $a_1$ and $a_2$.
\end{theorem}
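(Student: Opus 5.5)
The optimality proof will compare Algorithm~\ref{algorithm:so-wel-opt} against an arbitrary social-welfare optimal joint strategy $(\pi^1,\pi^2)$, split into two cases.

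\textbf{Case 1: no cooperation.} If $(\pi^1,\pi^2)$ contains no cooperation node at which the agents actually cooperate, then each agent incurs $\tau^1$ delays everywhere. Hence $T_{g_i}(\pi^i|\pi^{-i}) \ge T(SIP_{s_i,g_i})$ for each $i$. The algorithm's initialization (lines 10--11) already stores $SIP_{s_1,g_1}+SIP_{s_2,g_2}$, and the stored value only ever decreases. So the returned value is at most the optimum.

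\textbf{Case 2: cooperation starting at $c_s$ and ending at $c_d$.} Lemma~\ref{lem_social_1} lets us replace $(\pi^1,\pi^2)$ by the canonical strategy $(SIP_{s_1,c_s}\circ SCP_{c_s,c_d}\circ SIP_{c_d,g_1},\ SIP_{s_2,c_s}\circ SCP_{c_s,c_d}\circ SIP_{c_d,g_2})$ without increasing total time. The cost of this strategy is exactly
\[
2\bigl(t^*(c_s)+T_{c_s,c_d}(SCP_{c_s,c_d}\mid SCP_{c_s,c_d})+\tau^2_{c_d}\bigr)+d^*(c_d).
\]
Consequently the optimum equals $\min_{c_s}\bigl[2t^*(c_s)+\min_{c}(\ldots)\bigr]$, the minimum over start nodes of the values attained at $v_d(c_s)$. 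It remains to show that the algorithm evaluates, for some start node achieving this minimum, all relevant departure nodes $c$.

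\textbf{Step 1: the inner loop covers every useful departure node.} The inner loop only evaluates $c$ when $t^*_{c_s}+SCP_{c_s,c}+\tau^2_c\le t^*_c$. I would argue that the skipped departure nodes are harmless. If $t^*(c)$ is strictly smaller than the cooperative arrival through $c_s$, then starting cooperation directly at $c$ already beats the pair $(c_s,c)$. Indeed,
\[
2\,t^*(c)+d^*(c)\;<\;2\bigl(t^*(c_s)+SCP_{c_s,c}+\tau^2_c\bigr)+d^*(c),
\]
since $t^*(c)$ includes $\tau^2_c$ and corresponds to the pair $(c,c)$. Moreover $c$ itself is evaluated with $c_s=c$ (where $SCP_{c,c}=0$), unless $c$ was pruned earlier. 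This step also handles $c=c_s$, because the popped node satisfies the test trivially.

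\textbf{Step 2: pruning of start nodes is safe.} A node $c$ is removed from the start list only when some earlier-popped $c_j$ satisfies the hypothesis of Lemma~\ref{lem_social_2} with $c_i=c$. That lemma then says the best value obtainable from start $c_j$ is no worse than from start $c$. By the triangle-type inequality $SCP_{c_j,c'}\le SCP_{c_j,c}+\tau^2_c+SCP_{c,c'}$, the pair $(c_j,c')$ dominates $(c,c')$ for every departure node $c'$. I would make this inequality explicit, because it is what makes Step 1 apply to $c_j$ as well. Since $c_j$ was processed, the optimum is reached by induction on the pop order, where the sort by $t^*$ guarantees that pruning only ever goes forward. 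The case where the dominating pair $(c_j,c')$ was itself skipped by the inner test then reduces, via Step 1, to a start at $c'$ or an earlier pruner.

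\textbf{Main obstacle.} I expect the hardest part to be this interaction between the two filters. A pair $(c_s,c_d)$ may be excluded both because $c_s$ was pruned and because $c_d$ failed the inner test for the pruner. The plan is to define a potential, namely the earliest achievable cooperative arrival time at $c_d$, and show that every exclusion strictly or weakly decreases it while moving to an evaluated pair. Finiteness of $V_C$ then closes the induction. The complexity bound of $O(m(|E|+|V|\log|V|))$ follows from one Dijkstra run per popped node.
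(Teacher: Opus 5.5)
Your skeleton (the $SIP$ baseline, the canonical form from Lemma~\ref{lem_social_1}, and start-node pruning via Lemma~\ref{lem_social_2}) is the paper's. You also correctly spot something its outline glosses over: the pseudocode scores a departure node only inside the pruning test, so it never computes $v_d(c_s)$ over all of $V_C$. Your triangle inequality $SCP_{c_j,c'}\le SCP_{c_j,c}+\tau^2_c+SCP_{c,c'}$ is in effect the proof of Lemma~\ref{lem_social_2} that the paper omits.

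However, Step 1 rests on a false claim. With $SCP_{c,c}=0$, the test for $c=c_s$ reads $t^*_{c}+\tau^2_{c}\le t^*_{c}$, which fails whenever $\tau^2_{c}>0$. The popped node does \emph{not} pass its own test, so the single-node pair $(c,c)$, whose true cost is $2t^*(c)+d^*(c)$, is never scored. (Your Case 2 formula, like the paper's $v_d$, would in any case charge it $2(t^*(c)+\tau^2_c)+d^*(c)$, counting $\tau^2_c$ twice.) Both the fallback in Step 1 (``start cooperation directly at $c$'') and the reduction at the end of Step 2 route through exactly this unscored pair, so the argument does not close.

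This is a defect of the algorithm as written, not only of your write-up. Take $V_C=\{c\}$, edges $(s_i,c)$ and $(c,g_i)$ of time $1$, a private delay-free bypass $s_i\to x_i\to g_i$ of total time $5$ for each agent, $\tau^1_c=10$ and $\tau^2_c=1$. Cooperating only at $c$ gives total time $6$. But the only test the algorithm evaluates is $2+0+1\le 2$, which fails, so it returns $(SIP_{s_1,g_1},SIP_{s_2,g_2})$ with total $10$.

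So what you can prove is the statement for a corrected algorithm that additionally scores $2t^*_{c_s}+d^*_{c_s}$ for each popped $c_s$ (or the statement under $\tau^2\equiv 0$). With that fix your argument goes through, and it actually shortens. Let $A(c_s,c_d)$ be the common departure time from $c_d$ (so $A(c,c)=t^*(c)$), and let $c_s^\ast$ minimize $A(\cdot,c_d)$. Then $A(c_s^\ast,c_d)\le t^*(c_d)$, so $(c_s^\ast,c_d)$ passes the test. If $c_s^\ast$ was pruned by the popped node $c_j$, your triangle inequality gives $A(c_j,c_d)\le A(c_s^\ast,c_d)$, so $c_j$ is itself a minimizer and scores $c_d$ when it is popped. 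No potential function or induction over the pop order is needed, and the sort by $t^*$ affects only efficiency.
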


\emph{Outline.} This result follows directly from Lemmas \ref{lem_social_1} and \ref{lem_social_2}. Lemma \ref{lem_social_1} establishes that the joint strategy that optimizes social welfare is of the following form:
\[
(\pi^{1^*}, \pi^{2^*}) = (SIP_{s_1, c_s} \circ SCP_{c_s, c_d} \circ SIP_{c_d, g_1}, SIP_{s_2, c_s} \circ SCP_{c_s, c_d} \circ SIP_{c_d, g_2}),
\]
which implies that finding the joint strategy that optimizes social welfare is equivalent to finding the nodes ${c_s}$ and ${c_d}$ that minimize the total path time.

Lemma \ref{lem_social_2} demonstrates that if a cooperation node can be reached earlier through cooperation rather than individually, then this node is not the starting cooperation node in the optimal social welfare joint strategy and can be disregarded.

By combining these results, we conclude that Algorithm \ref{algorithm:so-wel-opt}, which evaluates all possible cooperation nodes ${c} \in V_C$ as potential starting cooperation nodes (excluding those that can be pruned by Lemma \ref{lem_social_2}), identifies their corresponding optimal departure nodes $v_d({c})$, and returns the paths $\pi^{1^*}, \pi^{2^*}$
\[
(\pi^{1^*}, \pi^{2^*}) = (SIP_{s_1, c_s} \circ SCP_{c_s, c_d} \circ SIP_{c_d, g_1}, SIP_{s_2, c_s} \circ SCP_{c_s, c_d} \circ SIP_{c_d, g_2}),
\]
corresponding to:
\[
{c_s}=\arg\min_{{c} \in V_C} \left( 2 \cdot \left(t^*({c}) + T_{c,c_d}(SCP_{c,c_d}| SCP_{c,c_d})+\tau^2_{c_d}\right) + d^*(v_d({c}) \right)
\]
\[
{c_d}=v_d({c_s})
\]
successfully finds the joint strategy $(\pi^{1^*}, \pi^{2^*})$ that optimizes the social welfare of $a_1$ and $a_2$.

\paragraph{Complexity Analysis}
Every run of the algorithm starts with four executions of Dijkstra's algorithm to determine the shortest paths from ${s_1},{s_2}, {g_1}$, and ${g_2}$.
Each run of Dijkstra's algorithm has a time complexity of \( \mathcal{O}(|E| + |V| \log |V|) \). Consequently, the total complexity for these four runs is:
\[
\mathcal{O}(4 \cdot (|E| + |V| \log |V|)) = \mathcal{O}(|E| + |V| \log |V|)
\]
The algorithm then evaluates at most \( m \) potential starting cooperation nodes. Thus, the overall complexity for evaluating these nodes is:
\[
\mathcal{O}(m \cdot (|E| + |V| \log |V|))
\]
Combining these, the total complexity of the algorithm is:
\[
\mathcal{O}(m \cdot (|E| + |V| \log |V|))
\]

\section{Experimental Results}
\subsection{Results Across All Scenarios}\label{appendix:non-filtered-charts}
\begin{figure}[H]
    \centering
    \begin{subfigure}[b]{0.37\linewidth}
        \centering
        \includegraphics[width=\linewidth]{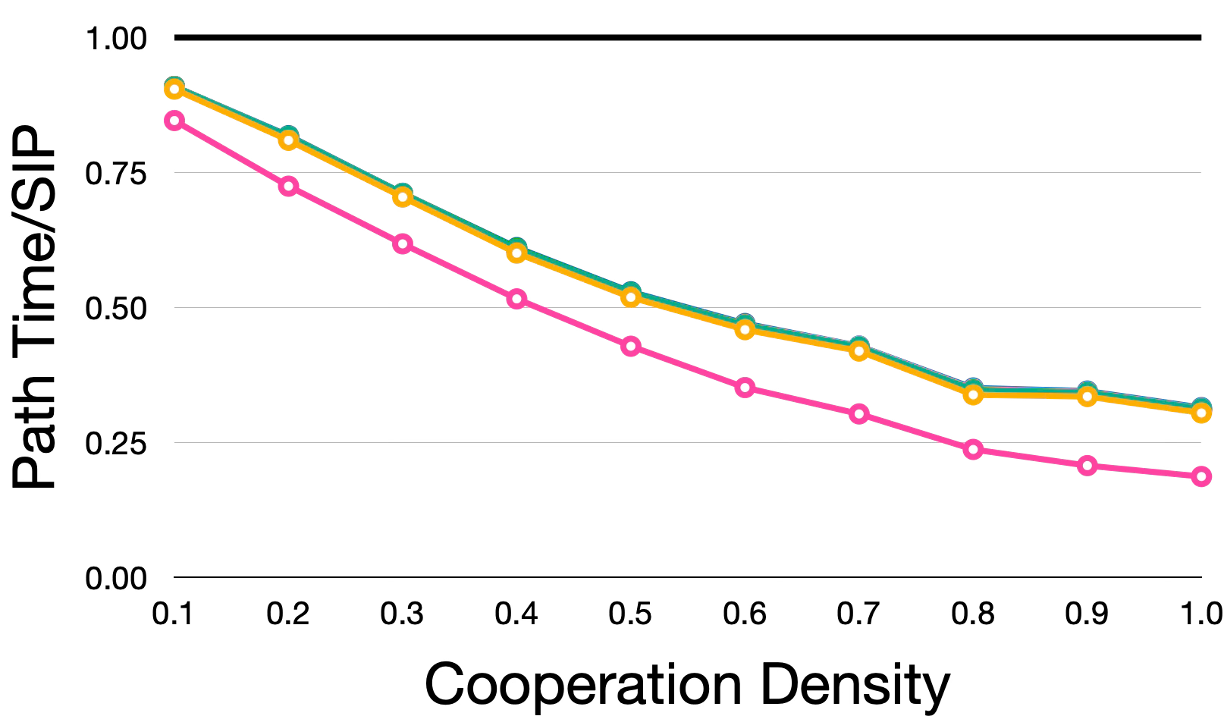}
        \label{fig:res-density-individual-nf}
    \end{subfigure}
    \hfill
    \begin{subfigure}[b]{0.37\linewidth}
        \centering
        \includegraphics[width=\linewidth]{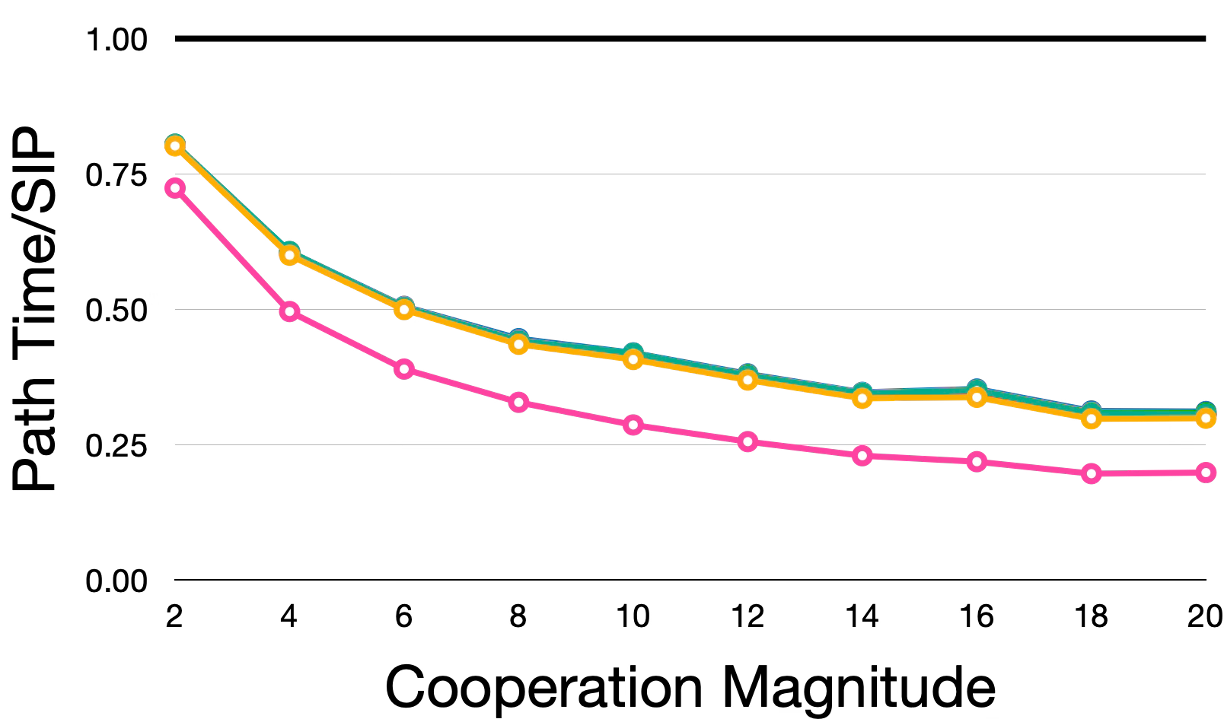}
        \label{fig:res-magnitude-individual-nf}
    \end{subfigure}
    \hfill
     \begin{subfigure}[b]{0.37\linewidth}
        \centering
        \includegraphics[width=\linewidth]{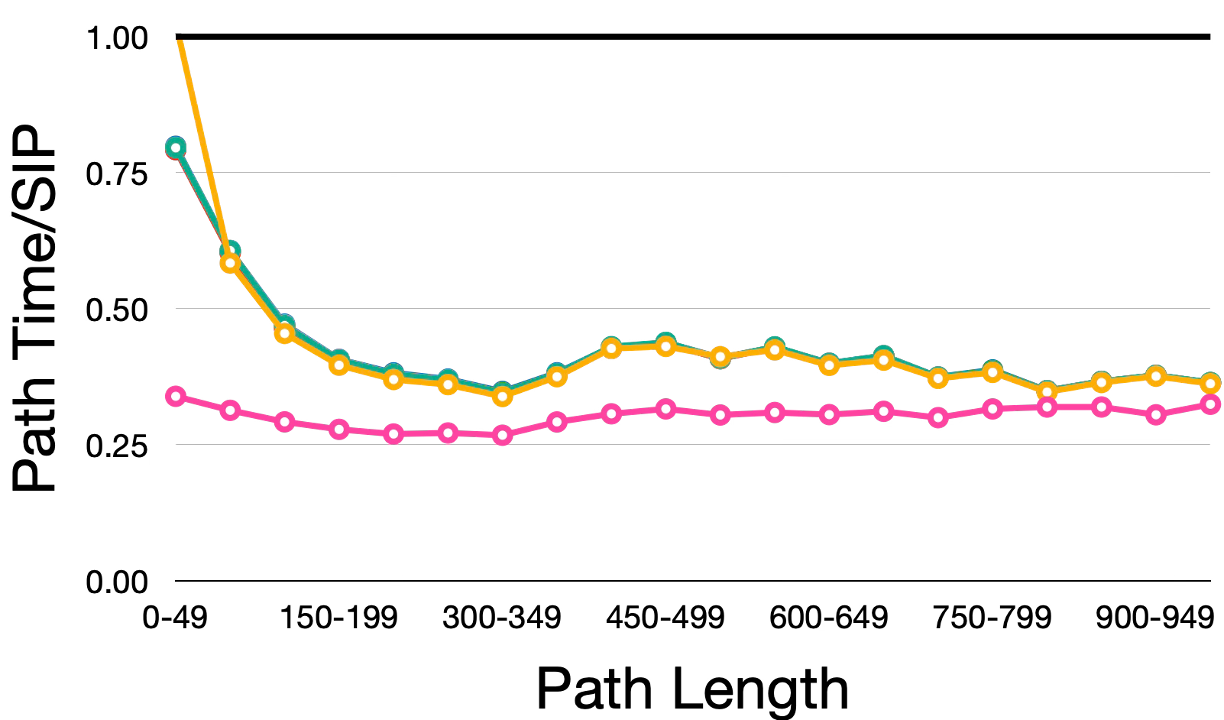}
        \label{fig:res-divergence-individual-nf}
    \end{subfigure}
    \hfill
    \begin{subfigure}[b]{0.37\linewidth}
        \centering
        \includegraphics[width=\linewidth]{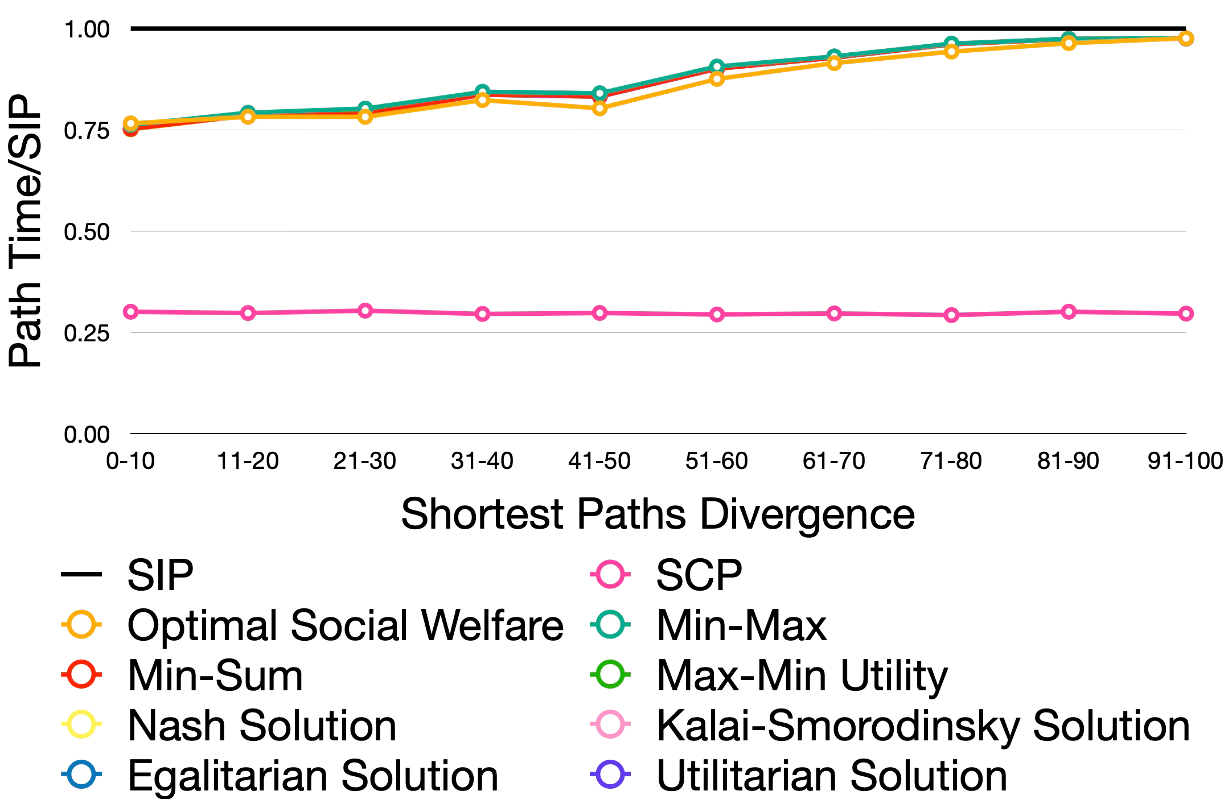}
        \label{fig:res-length-individual-nf}
    \end{subfigure}
    \hfill
    \caption{Effects of cooperation factors on individual path times across all tested scenarios, including those with a single PNE.}
      \label{fig:results-nf}
\end{figure}
\begin{figure}[H]
    \begin{subfigure}[b]{0.37\linewidth}
        \centering
        \includegraphics[width=\linewidth]{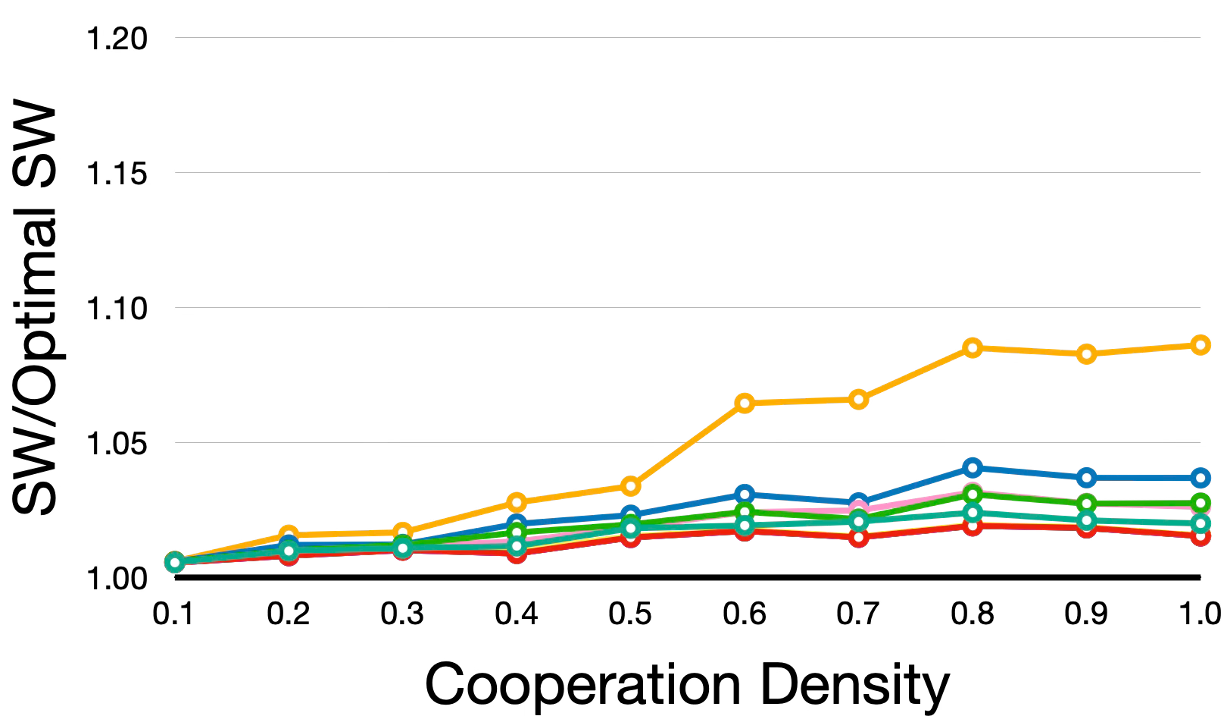}
    \end{subfigure}
    \hfill
    \begin{subfigure}[b]{0.37\linewidth}
        \centering
        \includegraphics[width=\linewidth]{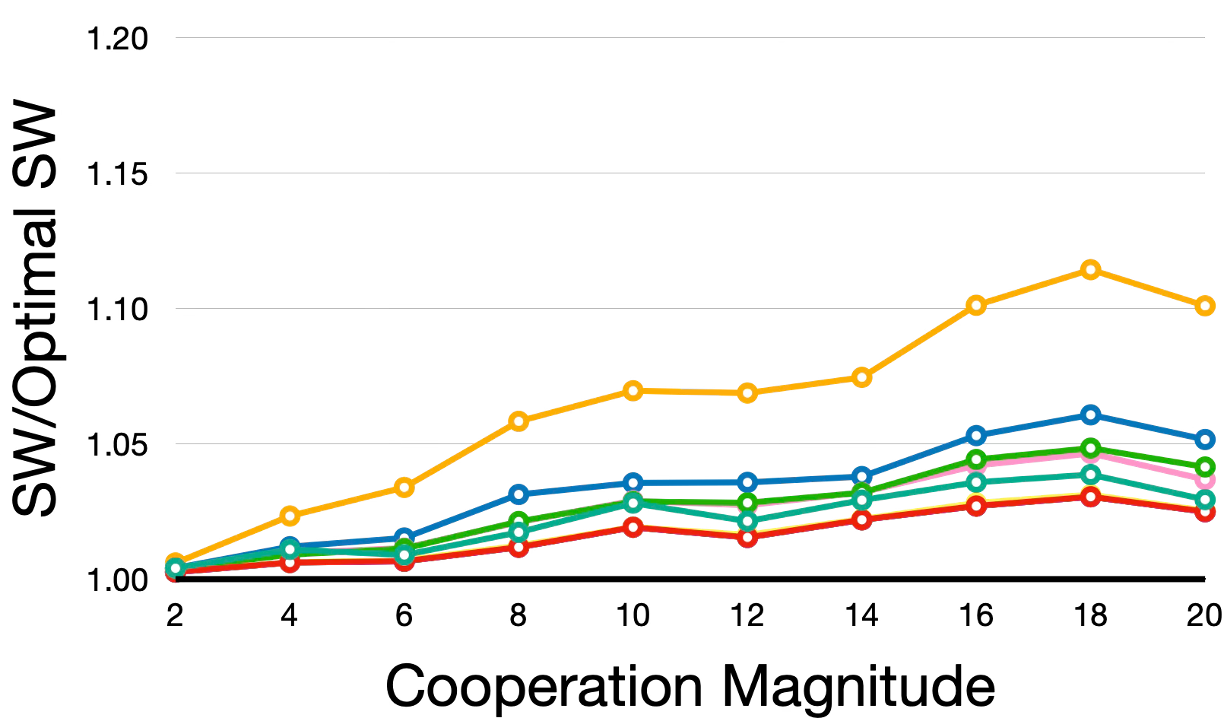}
    \end{subfigure}
    \hfill
    \begin{subfigure}[b]{0.37\linewidth}
        \centering
        \includegraphics[width=\linewidth]{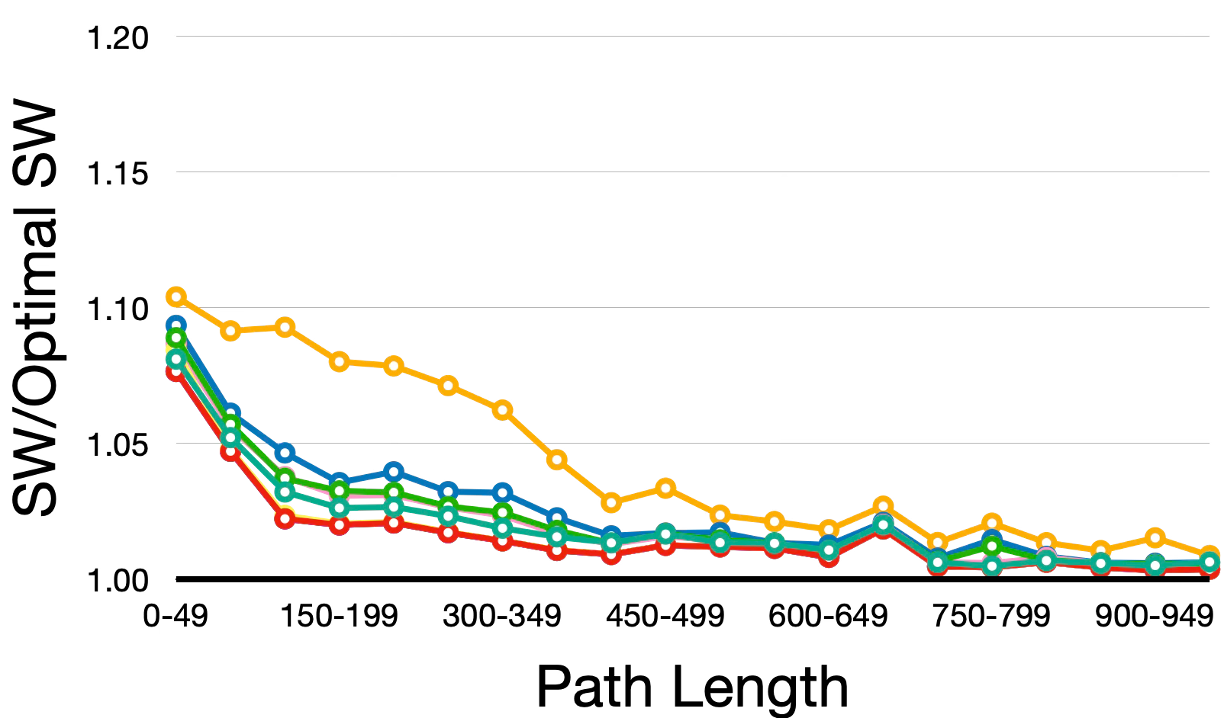}
    \end{subfigure}
    \hfill
    \begin{subfigure}[b]{0.37\linewidth}
        \centering
        \includegraphics[width=\linewidth]{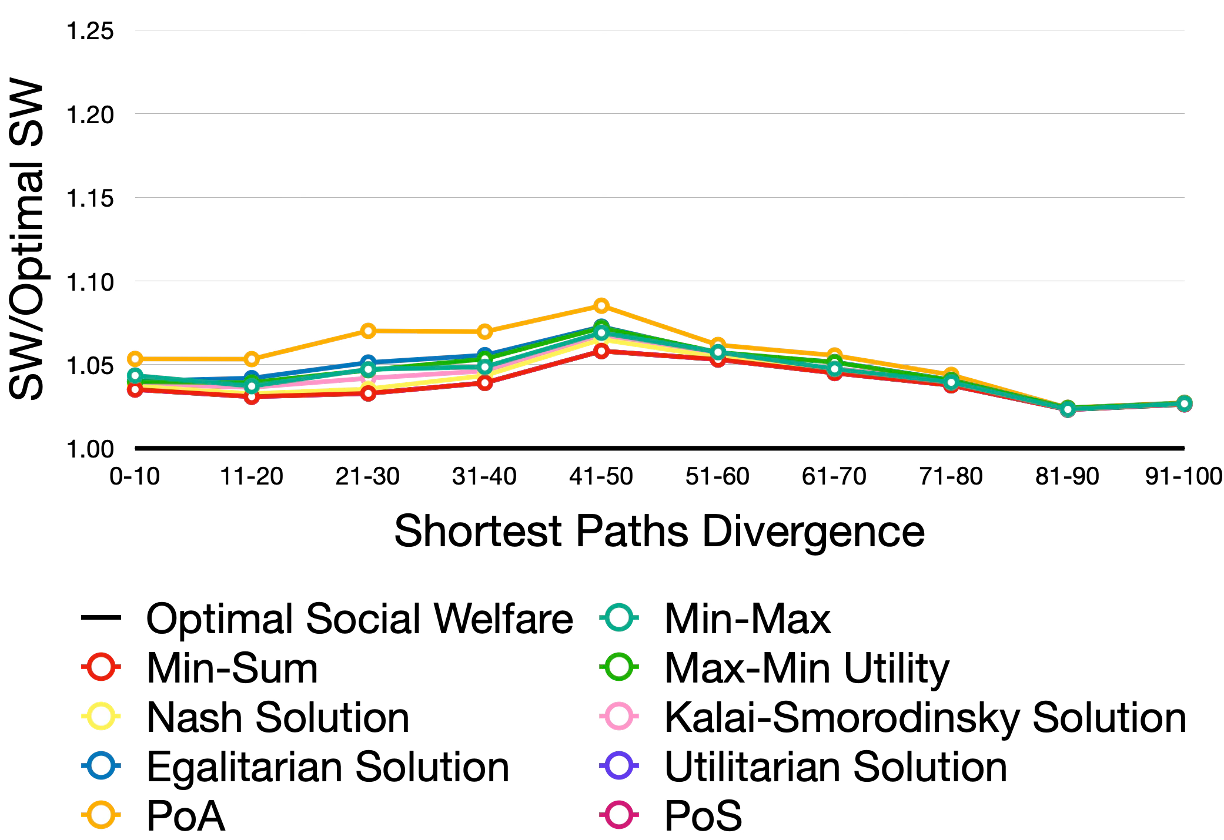}
    \end{subfigure}
    \caption{Effects of cooperation factors on social welfare across all tested scenarios, including those with a single PNE.}
      \label{fig:res-length-social-nf}
\end{figure}

\subsection{Summary Tables for the Experimental Results}\label{appendix:tabular-data}
All values in the table below are reported together with their corresponding standard deviations.\\
\subsubsection{Average Number of PNEs}
\phantom{a}
\begin{table}[H]
\centering
\scriptsize
\caption{Effects of Cooperation Density on average amount of PNEs.}
\label{tab:density-non-filtered-potential-pairs}

\end{table}
\begin{table}[H]
\centering
\scriptsize
\caption{Effects of Cooperation Magnitude on average amount of PNEs.}
\label{tab:magnitude-non-filtered-potential-pairs}
%
\end{table}
\begin{table}[H]
\centering
\scriptsize
\caption{Effects of Path Length on average amount of PNEs.}
\label{tab:extent-non-filtered-potential-pairs}
%
\end{table}
\begin{table}[H]
\centering
\scriptsize
\caption{Effects of SPD on average amount of PNEs.}
\label{tab:correlation-non-filtered-potential-pairs}
%
\end{table}

\subsubsection{Individual path times in multiple-PNE scenarios}
\phantom{a}
\begin{table}[H]
\centering
\scriptsize
\caption{Effect of Cooperation Density on individual path times among scenarios with more than one PNE.}
\label{tab:density-filtered-individual}
\resizebox{\textwidth}{!}{
%
}
\end{table}
\begin{table}[H]
\centering
\scriptsize
\caption{Effect of Cooperation Magnitude on individual path times among scenarios with more than one PNE.}\label{tab:magnitude-filtered-individual}
\resizebox{\textwidth}{!}{%
%
}
\end{table}
\begin{table}[H]
\centering
\scriptsize
\caption{Effect of Path Length on individual path times among scenarios with more than one PNE.}\label{tab:extent-filtered-individual}
\resizebox{\textwidth}{!}{%
%
}
\end{table}
\begin{table}[H]
\centering
\scriptsize
\caption{Effect of SPD on individual path times among scenarios with more than one PNE.}\label{tab:correlation-filtered-individual}
\resizebox{\textwidth}{!}{%
%
}
\end{table}
\subsubsection{Social welfare in multiple-PNE scenarios}
\phantom{a}
\begin{table}[H]
\centering
\scriptsize
\caption{Effects of Cooperation Density on social welfare among scenarios with more than one PNE.}
\label{tab:density-filtered-social}
\resizebox{\textwidth}{!}{%
%
}
\end{table}
\begin{table}[H]
\centering
\scriptsize
\caption{Effects of Magnitude on social welfare among scenarios with more than one PNE.}
\label{tab:magnitude-filtered-social}
\resizebox{\textwidth}{!}{%
%
}
\end{table}
\begin{table}[H]
\centering
\scriptsize
\caption{Effects of Path Length on social welfare among scenarios with more than one PNE.}
\label{tab:extent-filtered-social}
\resizebox{\textwidth}{!}{%
%
}
\end{table}
\begin{table}[H]
\centering
\scriptsize
\caption{Effects of SPD on social welfare among scenarios with more than one PNE.}
\label{tab:correlation-filtered-social}
\resizebox{\textwidth}{!}{%
%
}
\end{table}
\subsubsection{Individual path times in all scenarios}
\phantom{a}
\begin{table}[H]
\centering
\scriptsize
\caption{Effects of Cooperation Density on individual path times across all tested scenarios, including those with a single PNE.}
\label{tab:density-non-filtered-individual}
\resizebox{\textwidth}{!}{%
%
}
\end{table}
\begin{table}[H]
\centering
\scriptsize
\caption{Effects of Cooperation Magnitude on individual path times across all tested scenarios, including those with a single PNE.}
\label{tab:magnitude-non-filtered-individual}
\resizebox{\textwidth}{!}{%
%
}
\end{table}
\begin{table}[H]
\centering
\scriptsize
\caption{Effects of Path Length on individual path times across all tested scenarios, including those with a single PNE.}
\label{tab:extent-non-filtered-individual}
\resizebox{\textwidth}{!}{%
%
}
\end{table}
\begin{table}[H]
\centering
\scriptsize
\caption{Effects of Cooperation SPD on individual path times across all tested scenarios, including those with a single PNE.}
\label{tab:correlation-non-filtered-individual}
\resizebox{\textwidth}{!}{%
%
}
\end{table}
\subsubsection{Social welfare in all scenarios}
\phantom{a}
\begin{table}[H]
\centering
\scriptsize
\caption{Effects of Cooperation Density on social welfare across all tested scenarios, including those with a single PNE.}
\label{tab:density-non-filtered-social}
\resizebox{\textwidth}{!}{%
%
}
\end{table}
\begin{table}[H]
\centering
\scriptsize
\caption{Effects of Cooperation Magnitude on social welfare across all tested scenarios, including those with a single PNE.}
\label{tab:magnitude-non-filtered-social}
\resizebox{\textwidth}{!}{%
%
}
\end{table}
\begin{table}[H]
\centering
\scriptsize
\caption{Effects of Path Length on social welfare across all tested scenarios, including those with a single PNE.}
\label{tab:extent-non-filtered-social}
\resizebox{\textwidth}{!}{%
%
}
\end{table}
\begin{table}[H]
\centering
\scriptsize
\caption{Effects of SPD on social welfare across all tested scenarios, including those with a single PNE.}
\label{tab:correlation-non-filtered-social}
\resizebox{\textwidth}{!}{%
%
}
\end{table}

\subsection{Execution Environment}
The experiments were conducted on a MacBook Pro with an Apple M1 Pro processor and 16 GB of RAM, running macOS 26.5. The implementation was written in Python 3.11.3 and relies primarily on the Python standard library, with NetworkX used for graph representation and graph algorithms. Matplotlib was used to generate the plots and simulation visualizations. No GPU acceleration was used.
Randomness was used only for sampling experimental scenarios from the benchmark maps, using Python's built-in \texttt{random} module. No fixed random seed was used. However, the actual sampled instances are publicly available as serialized pickle files representing the generated graphs, together with the experimental data and aggregated results, at \url{https://iscmpp.info/}.

\end{document}